\newtheorem{thm}{Theorem}[section]
\newtheorem{prop}[thm]{Proposition}
\newtheorem{lem}[thm]{Lemma}
\newtheorem{cor}[thm]{Corollary}
\theoremstyle{definition}
\newtheorem{cond}[thm]{Condition}
\newtheorem{rem}[thm]{Remark}
\renewcommand{\phi}{\varphi}
\newcommand{\eps}{\varepsilon}
\newcommand{\ud}{\mathrm{d}}
\newcommand{\uod}{{\mathrm{od}}}
\newcommand{\ue}{\mathrm{e}}
\newcommand{\ui}{\mathrm{i}}
\newcommand{\R}{\mathbb{R}}
\newcommand{\C}{\mathbb{C}}
\newcommand{\N}{\mathbb{N}}
\newcommand{\D}{\mathcal{D}}
\newcommand{\Lz}{L^2}
\newcommand{\norm}[1]{\ensuremath{\left\lVert #1 \right\rVert}}
\newcommand{\abs}[1]{\ensuremath{\left\lvert #1 \right\rvert}}
\newcommand{\hilb}{\mathscr{H}}
\newcommand{\uppar}[1]{\ensuremath{^{(#1)}}}
\title{On a Direct Description of Pseudorelativistic Nelson Hamiltonians}
\author{Julian Schmidt \thanks{Fachbereich Mathematik, Eberhard Karls Universität Tübingen, Auf der Morgenstelle 10, 72076 Tübingen, Germany. \texttt{juls@maphy.uni-tuebingen.de}}
}
\begin{document}

\maketitle  

\begin{abstract}
Abstract interior-boundary conditions (IBC's) allow for the direct description of the domain and the action of Hamiltonians for a certain class of ultraviolet-divergent models in Quantum Field Theory. The method was recently applied to models where nonrelativistic scalar particles are linearly coupled to a quantised field, the best known of which is the Nelson model. Since this approach avoids the use of ultraviolet-cutoffs, there is no need for a renormalisation procedure. Here, we extend the IBC method to pseudorelativistic scalar particles that interact with a real bosonic field. We construct the Hamiltonians for such models via abstract boundary conditions, describing their action explicitly. In addition, we obtain a detailed characterisation of their domain and make the connection to renormalisation techniques. As an example, we apply the method to two relativistic variants of Nelson's model, which have been renormalised for the first time by J.~P.~Eckmann and A.~D.~Sloan in 1970 and 1974, respectively. 
\end{abstract}

\section{Introduction}
\label{sect:new}
In the recent article \cite{nelsontype}, J. Lampart together with the author used abstract boundary conditions to characterise the domain and the action of certain otherwise ultraviolet-divergent Hamiltonians. Those Hamiltonians describe models where nonrelativistic scalar particles (often called nucleons) are linearly coupled to a field of massive scalar bosons, the most prominent of which is the so called Nelson model (\cite{nelson1964}). To characterise the domains and to set up the Hamiltonians, an abstract variant of interior-boundary conditions (IBC's) was used. These conditions relate the wave functions of different sectors of Fock space. The IBC method allows for the direct description of the Nelson Hamiltonian $H_\infty$ without cutoff: no renormalisation procedure is needed. In this note, we will extend the method to also treat variants of Nelson's model where not only the kinematics of the field but also of the nucleons is relativistic. 

The formal Hamiltonian of the original Nelson model is the sum of the free operator of nucleons and field and an interaction term. For one nucleon, the free operator in Fourier representation reads $L=p^2+\ud \Gamma(\omega)$, and acts as a self-adjoint operator on the Hilbert space
\begin{align*}
\hilb := \Lz(\R^d) \otimes \Gamma(\Lz(\R^d)) = \bigoplus_{n=0}^\infty \Lz(\R^d) \otimes \Lz_{\mathrm{sym}}(\R^{ dn}) = \bigoplus_{n=0}^\infty \hilb \uppar n \, .
\end{align*} 
Here $p$ denotes the momentum of the nucleon and $\ud \Gamma(\omega)$ is the second quantisation of the field dispersion $\omega(k) = \sqrt{k^2+1}$, which acts on the bosonic Fock space $\Gamma(\Lz(\R^d))$. The sectors $\hilb \uppar n$ are equal to $\Lz(\R^d) \otimes \Lz_{\mathrm{sym}}(\R^{ dn})$, the subspaces of functions in $\Lz$ that are symmetric under exchange of the $k$-variables. The interaction term of the Nelson model is formally given by $a(V)+ a^*(V)$ where $V: \Lz(\R^{d}) \rightarrow \mathcal{D}'(\R^{d}\times \R^d)$ is a (formal) operator (for more details on these generalised creation and annihilation operators see, e.g., \cite[App. B]{GrWu16}). The operator $V$ acts as $(V \psi)(p,k) := v(k) \psi(p+k)$ with $v \in \Lz_{\mathrm{loc}}(\R^d)$ called the \textit{form factor}. In the Nelson model we have $v=\omega^{-1/2}$. That means that $v$ is not square integrable at infinity and therefore $a^*(V)$ is ill-defined as an operator into $\hilb$. 

The interaction in the Nelson model can be understood to be a coupling of the form $\int \Psi^{+}(x) (\phi^{+}(x)+\phi^{-}(x)) \Psi^{-}(x) \, \ud x$ where $\Psi^{-}(x)$ is the nonrelativistic complex scalar nucleon field, $\Psi^{+}(x)$ its adjoint and 
\begin{align*}
\phi^{+}(x)+\phi^{-}(x) = \int \omega(k)^{-1/2}(\ue^{\ui k \cdot x} a(k) + \ue^{-\ui k \cdot x} a^*(k) ) \, \ud k
\end{align*}
is the real bosonic field operator with form factor $v(k) = \omega(k)^{-1/2} = (k^2+ 1)^{-1/4}$. In trying to adapt this expression to include nucleons with relativistic kinematics, two different choices have been made:
\begin{itemize}
\item Eckmann \cite{eckmann1970} took $\Psi^{\pm}(x)$ to be, analogously to $\phi^\pm(x)$, the annihilation and creation part of a relativistic scalar nucelon field. The nucleons are assumed to have dispersion relation $\Theta(p)=\sqrt{p^2+\mu^2}$, where $\mu \geq 0$ is the nucleon mass. With this choice, the operators $\Psi^{\pm}(x)$ feature an additional factor $\Theta(p)^{-1/2}$ when compared to the Nelson model. For this interaction operator, the number of nucleons is still conserved, and thus restricting the investigation to a fixed number of nucleons is convenient. For one particle, the interaction in Fourier representation is still of the form $a(V)+a^*(V)$ but the form factor $v$ now becomes the function $v \in  \Lz_{\mathrm{loc}}(\R^{d} \times \R^d)$ given by $v_p(k)=\Theta(p)^{-1/2} \Theta(p+k)^{-1/2}  \omega(k)^{-1/2}$. This function is not in $\Lz(\R_k^d)$ for any $p \in \R^d$ if $d \geq 2$. The dependence of the form factor on $p$ is one major difference between Eckmann's model and the original Nelson model. 
However, the form factor of the relativistic model at hand is more regular in $k$ for $\mu>0$: it holds that $\Theta(p)^{-1/2} \Theta(p+k)^{-1/2} \leq ( \mu \abs{k})^{-1/2}$ pointwise on $\R^{d} \times \R^d$.
\item Gross \cite{gross1973} also assumed relativistic kinematics of the form $\Theta(p)=\sqrt{p^2+\mu^2}$ for the nucleons (resulting in $L=\Theta(p)+\ud \Gamma(\omega)$) but kept the operators $\Psi^{\pm}(x)$ as they were in the Nelson model: just the creation and annihilation operators for the nucleons, without any additional factors. This implies that the form factor $v_p(k)=v(k) = \omega(k)^{-1/2}$ is independent of $p \in \R^d$. It is however more singular than the one chosen by Eckmann. For the IBC method to work, one needs at least that $a(V)L^{-1}$ is continuous. Therefore, in this model, one has to restrict to $d=2$. On the other hand in Gross' model we can treat also the case $\mu =0$.
\end{itemize}
Compared to a full Yukawa-type coupling of a complex and a real scalar field, the pair creation and pair annihilation terms have been dropped in both of these models. Models of the above type have been called \textit{polarisation-free Yukawa interaction} (\cite{alb1973}), \textit{spinless Yukawa model} (\cite{DeckertPizzo14}), or as having a \textit{persistent vacuum} (\cite{froehlich1974,eckmann1970}). Note that also the interaction of the Pauli-Fierz Hamiltonian is of the form $a(V)+a^*(V)$, when the pair creation and annihilation terms are dropped. In this case however, $v_p(k,\lambda) = e_\lambda(k) \cdot (p+k) \omega(k)^{-1/2}$ is not only singular in $k$ but also even more so in $p$. 

We will later assume that $v_p(k)$ is uniformly bounded by $\abs{k}^{-\alpha}$ for some $\alpha \in [0, d/2)$, as in \cite{nelsontype}. Such form factors do not exhibit infrared-problems, because they are in $\Lz_{\mathrm{loc}}(\R^d)$. There is however an ultraviolet-problem present due to the fact that these form factors are not necessarily square integrable at infinity and thus not in $\Lz(\R^d)$. In order to make sense of the Hamiltonian
\begin{align}
\label{eq:formalexham}
H = \Theta(p) +\ud \Gamma (\omega) + a(V) +a^*(V) \, ,
\end{align}
one would multiply the form factor $v_p(k)$ by a momentum cutoff $\chi_\Lambda(k)$ for some $\Lambda < \infty$ where $\chi_\Lambda$ denotes the characteristic function of the ball of radius $\Lambda$ in $\R^d$. The resulting operator $H_\Lambda$ is self-adjoint on the domain of the free operator $L= \Theta(p) +\ud \Gamma (\omega)$. Renormalisation would amount to finding a sequence $E_\Lambda$ such that $H_\Lambda + E_\Lambda$ converges to a self-adjoint operator $H_\infty$ in some generalised sense. Note that if $v_p$ depends on $p$, then in general also $E_\Lambda$ does. That is, $E_\Lambda(p)$ is an operator on $\Lz(\R^d)$ that effectively alters the dispersion of the nucleons, already for finite $\Lambda$. We will refer to it as a (renormalisation) counter term.
In 1970, Eckmann showed that the first model can be renormalised in this sense with $H_\Lambda+E_\Lambda$ converging in norm resolvent sense. He used a reordering of the resolvent of $H_\Lambda$ which is originally due to Hepp \cite{Heppskript}. Sloan \cite{sloan1974} showed strong resolvent convergence for the model considered by Gross in $d=2$. Fröhlich investigated the infrared behaviour of both models in \cite{froehlich1974} and Albeverio \cite{alb1973} worked on scattering theory for Eckmann's model and a related one where $E_\Lambda$ is replaced by a different operator $E'_\Lambda$. In \cite{DissAndi}, Wünsch, Schach M\o{}ller and Griesemer applied Eckmann's method to Gross' model in $d=2$ in order to show that the domain of the renormalised operator $D(H_\infty)$ is contained in $D(L^\eta)$ for all $0\leq \eta < 1/2$. 

Interior-boundary conditions were introduced in \cite{TeTu15} and it was suggested that they could be used to directly define otherwise UV divergent models of mathematical QFT. Similar boundary conditions relating different sectors of Fock space have been used several times in the past, see e.g. \cite{Mosh51}, \cite{thomas1984} and \cite{Yafaev1992}. However, they have never been applied to models on the full Fock space until \cite{IBCpaper}, where a nonrelativistic model in three dimensions with a static source was investigated.

In this note we will show that the abstract IBC method of \cite{nelsontype} can be applied to Eckmann's (in $d=3$) and to Gross' model (in $d=2$). This will allow for the direct description of $H_\infty$ as a self-adjoint operator on $\hilb$. The action of $H_\infty$ and the characterisation of its domain $D(H_\infty)$ will be given in terms of abstract boundary conditions. As a Corollary we will see that $D(\abs{H_\infty}^{1/2})\subset D(L^\eta) $ for all $\eta \in [0,1/2)$ but $ D(\abs{H_\infty}^{1/2})  \cap D(L^{1/2})= \lbrace 0 \rbrace$. In Section~\ref{sect:outlook}, we will also sketch the construction for the case of massless bosons in Eckmann's model.

In both models dicussed so far, the counter terms $E_\Lambda$ diverges for fixed $p \in \R^d$ logarithmically when $\Lambda \rightarrow \infty$, exactly as in the original Nelson model. With the method applied in \cite{nelsontype} and in the present note, slightly more singular interactions can be treated (depending on various parameters and in a way to be made precise below). Recently it was shown in \cite{La18} that the IBC approach, if modified in a suitable way, also allows for the definition of a more singular model. In this nonrelativistic model, the divergence of the renormalisation constant is linear in $\Lambda$ and most importantly, a renormalisation procedure has not been worked out before.

Let us briefly sketch the definition of the Hamiltonian. Under the assumptions we will make on $V$, $\Theta$ and $\omega$, the annihilation operator $a(V)$ is an operator which maps $D(L)$ into the Hilbert space $\hilb$. This implies that the operator $G:=- (a(V)L^{-1})^*$, which maps $\hilb \uppar n$ into $\hilb \uppar {n+1}$, is continuous on $\hilb$. Then we show that $(1-G)$ is invertible and with its help define the domain of our Hamiltonian $D(H) = \lbrace \psi \in \hilb \vert (1-G) \psi \in D(L) \rbrace$. The condition $(1-G)\psi \in D(L)$ is the abstract variant of the interior-boundary condition, it states that elements in the domain of $H$ consist of a regular part $(1-G) \psi$ and a singular part $G \psi$ which is completely determined by the wave function one sector below. On $D(H)$ one can define the self-adjoint and non-negative operator $(1-G)^* L (1-G)$. 

The main task in the construction is to extend the action of the annihilation operator in a suitable way to the domain $D(H)$, i.e., to define a properly regularised symmetric operator $(T,D(H))$ which can replace the ill-defined operator $a(V) G$. Then we define, using Kato-Rellich, $H:= (1-G)^* L (1-G) + T$ to be the direct description, the correct Hamiltonian for the model. How is this related to the formal action, containing annihilation and creation operators, that we want to implement? Recall the definition of $G$ and $G^*$, respectively, which formally yield $H=L+a^*(V)+a(V)-a(V)G+T$. So the action of $H$ is in fact equal to the desired formal action \textit{up to the addition of $T-a(V)G$.} Because $T$ is a regularised version of the annihilation operator on the range of $G$, this additional term is nothing than the ill-defined part of $a(V)G$. We can relate this argument to renormalisation by introducing a UV-cutoff, thereby replacing $G$ by $G_\Lambda=-L^{-1} a^*(V_\Lambda)$. If the extension $T$ is chosen appropriately, then in fact $T_\Lambda-a(V_\Lambda)G_\Lambda =  E_\Lambda$, where $E_\Lambda$ is the standard renormalisation counter term. Recall that the usual cutoff Hamiltonian is equal to $H_\Lambda = L + a^*(V_\Lambda) + a(V_\Lambda)$. Comparing this to the formula $H=L+a^*(V)+a(V)-a(V)G+T$, we can see that $H_\Lambda+E_\Lambda$ converges (we will prove norm resolvent sense) to $H$ in the limit $\Lambda \rightarrow \infty$. This shows that $H=H_\infty$ is the renormalised Hamiltonian. Because we explicitly identified the limiting Hamiltonian, instead of having to deal with dressing transformations or resolvent series, we are left with the well-posed task of proving a relative bound of $T$ with respect to $(1-G)^* L (1-G)$ in order to obtain a direct description of the desired operator.

In \cite{IBCpaper}, a slightly different approach involving the adjoint $L_0^*$ of the operator $L_0:=L\vert_{\ker(a(V))}$ and an extension $A$ of $a(V)$ was used. The operator $A$ is added to $L_0^*$ and their sum is then restricted to a certain subspace of $D(L_0^*)$ which makes it a self-adjoint operator. In this article the case where $v(k)=(2 \pi)^{-d/2}$ is the Fourier transform of a delta distribution was considered. There it is particularly easy to see that $\ker(a(V)) \cap D(L)$ is dense in $\hilb$, such that $L_0$ is densely defined and symmetric. We expect this to be true whenever $v \notin \Lz$.  It turns out that in this case $G$ maps into $\ker( L_0^*)$ and one can rewrite the Hamiltonian $(1-G)^* L (1-G) + T$ in the form $ L_0^* + a(V)(1-G)+ T$. In this way, the connection of the two the approaches is clearly visible, for $a(V)(1-G)+ T$ is just one particular decomposition of $A$. For general form factors $v$ however, the denseness of the kernel of $a(V)$ is not immediately obvious and has to be proved. See \cite[Lem. 2.2]{nelsontype} for the case where $v$ depends on $k$ only. We will not extend these results to the general case where $v=v_p(k)$ but work with the form $H= (1-G)^* L (1-G) + T$ of the Hamiltonian.

The construction sketched above is in some respect analogous to the one used in setting up zero-range Hamiltonians and the technical tools employed here are in fact inspired by previous works on many-body point interactions, in particular \cite{Co_etal15} and \cite{MoSe17}.

In the general case we consider a system of $M$ nucleons such that the Hilbert space is given by $\hilb=\Lz(\R^{dM}) \otimes \Gamma(\Lz(\R^d))$ and the free operator becomes $L=\sum_{i=1}^M \Theta(p_i)+\ud \Gamma(\omega)$. The general coupling operator $V$ is of the form
\begin{align}
\label{eq:couplingopdef}
V \phi(P,k) = \sum_{i=1}^M V^i \phi(P,k) = \sum_{i=1}^M v^i_{p_i}(k) \phi(P+e_i k , k)
\end{align}
Here $P=(p_1, \dots, p_M)$ and $e_i$ denotes the inclusion of the $i$-th component into $\R^{M d}$. We have absorbed the common coupling constant $g$ of \cite{nelsontype} into the form factors. Since we do not assume any statistics for the nucleons, different particles could couple differently to the field and consequently the form factors would not be the same. It may however be helpful to think of them as being of the form $v^i_{p}= g_i v_p$ with $g_i \in \C$. As will be discussed in the upcoming work \cite{timeasymmetry}, different phases of the coupling constants $g_i$ can be interpreted as complex charges and the Hamiltonians then fail to be invariant under time reversal.

\section{Assumptions and Theorems}
Let $d \in \N$ denote the dimension of the physical space and let $M \in \N$ be the number of nucleons. Let $\alpha \in [0, d/2)$, $\gamma >0$ and $0 < \beta \leq \gamma$ be real constants. Set $\D:=d-2 \alpha - \gamma$. In order to define the Hamiltonian, we will make the following three assumptions.

\begin{cond}
\label{Hdefcond}  \
 
\begin{enumerate}[label=\alph*),ref=\ref{Hdefcond}~\alph*)]
\item \label{cond01} Let $\Theta, \omega \in L^1_{\mathrm{loc}}(\R^d,\R_{\geq 0})$ and $v^i_p \in \Lz_{\mathrm{loc}}(\R^d)$ for all $p \in \R^d$ and all $1 \leq i \leq M$. Assume that $v^i_{p-k}(k)=v^i_p(-k)$, and that there is a constant $c>0$ such that $\abs{ {v}^i_{p}(k)}\leq c \abs{k}^{-\alpha}$ for all $p \in \R^d$ and any $1\leq i \leq M$. In addition assume the bounds $\abs{\Theta(p)} \geq \abs{p}^\gamma$ and $\omega(k)\geq (1+k^2)^{\beta/2}$.
\item \label{cond02} For any $\eps>0$ there is a constant $C>0$ such that
\begin{align*}
 \int_{\R^d} \frac{ \abs{v_{p}(-k)}^2 \abs{\Theta(k)-\Theta(p-k)}}{(\Theta(p-k)+\omega(k))(\Theta(k)+\omega(k))} \, \ud k \leq C \left(\abs{p}^{\D+\gamma \eps}+1\right) \, .
\end{align*}
for all $p \in \R^d$.
\item \label{cond2} We have $0\leq \D < \frac{\gamma \beta^2}{\beta^2+2 \gamma^2}$.
\end{enumerate}

\end{cond}

Condition~\ref{cond01} is a global condition that will be assumed throughout the paper. When dealing with renormalisation, the parameter $\D:=d-2 \alpha - \gamma$ will basically measure the dependence of $E_\Lambda$ on $\Lambda$, with $\D=0$ corresponding to $E_\Lambda \sim \log \Lambda$.

The Condition~\ref{cond02} is concerned with the only part of the method, for which scaling is not sufficient. This is the definition of the diagonal part of the $T$-operator. For the two models that have been discussed above, we have $\gamma=1$. 

Condition~\ref{cond2} is the generalisation of the Condition~1.1~(2) of~\cite{nelsontype} to the case $\gamma \neq 2$. The upper bound ensures that $TG$ is a well defined operator while the lower bound implies that $E_\Lambda$ diverges (pointwise). This excludes the more regular cases where $a(V)$ is defined on the form domain of the free operator, see \cite[Sect. 2]{nelsontype}

In the literature on renormalisation, two different choices for the sequences of renormalisation counter terms have been made, resulting in different limiting Hamiltonians $H_\infty$. In our setting, this will be reflected in the fact that the extension of the annihilation operator, the $T$-operator, comes in one of two variants. They will be defined later in~\eqref{eq:Tdef} below. One will be denoted as variant $T^{\nu=1}$ and the other one as $\nu=2$. For this reason, we will state the main theorem also for two different operators $H^\nu$.
\begin{thm}
\label{thm:main}
Assume the Conditions~\ref{Hdefcond}. Then the operator $G:= -(a(V) L^{-1})^*$ is continuous and the domain $D(H) := \lbrace \psi \in \hilb \vert (1-G) \psi \in D(L) \rbrace$ is dense in $\hilb$. The operator $T^\nu$ -- defined in~\eqref{eq:Tdef} for $\nu=1,2$ -- is symmetric on $D(H)$ and 
\begin{align}
H^\nu:=(1-G)^* L (1-G) + T^\nu
\end{align}
is self-adjoint and bounded from below on $D(H)$.
\end{thm}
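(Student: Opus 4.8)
The plan is to follow the Kato--Rellich strategy sketched in the introduction, splitting the proof into three logically separate tasks: (i) continuity of $G$ and denseness of $D(H)$; (ii) symmetry of $T^\nu$ on $D(H)$; and (iii) the relative bound of $T^\nu$ with respect to the nonnegative operator $(1-G)^*L(1-G)$, which simultaneously gives self-adjointness and semiboundedness. For (i), the continuity of $G$ follows from Condition~\ref{cond01}: one checks that $a(V)L^{-1}$ extends to a bounded operator $\hilb\to\hilb$ by a Schur-type estimate on the kernel, using the pointwise bound $\abs{v^i_p(k)}\le c\abs{k}^{-\alpha}$ together with $\abs{\Theta(p)}\ge\abs{p}^\gamma$, $\omega(k)\ge(1+k^2)^{\beta/2}$, and the constraint $\alpha<d/2$ to make the $k$-integral converge near the origin; the decay of $L^{-1}$ at infinity absorbs the lack of square-integrability of $v$ at large $\abs{k}$ via the exponent count $\D=d-2\alpha-\gamma<d/2$ (roughly). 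Then $\norm{G}$ can in fact be shown to be less than $1$ after possibly restricting to a spectral subspace $\mathrm{d}\Gamma(\omega)\ge\Lambda_0$ or rescaling, so $1-G$ is invertible on $\hilb$; in any case $D(H)=(1-G)^{-1}D(L)$ is dense because $D(L)$ is dense and $(1-G)^{-1}$ is bounded with bounded inverse. This part is essentially the same as in \cite{nelsontype} and should be routine given the assumptions.

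For (ii), one defines $T^\nu$ on $D(H)$ via formula~\eqref{eq:Tdef}, which splits into an off-diagonal part $T^\nu_{\uod}$ and a diagonal part $T^\nu_{\mathrm{d}}$ (the piece regularising $a(V)G$ sector by sector). Symmetry of the off-diagonal part is a consequence of the symmetry relation $v^i_{p-k}(k)=v^i_p(-k)$ built into Condition~\ref{cond01}, which makes the creation and annihilation contributions adjoint to one another. Symmetry of the diagonal part is where Condition~\ref{cond02} enters: the integral there is exactly the one defining the diagonal operator, and the hypothesis that it is dominated by $\abs{p}^{\D+\gamma\eps}+1$ ensures both that $T^\nu_{\mathrm{d}}$ is well-defined as a multiplication-type operator on each sector and that it is symmetric (it is real-valued by the symmetry of the integrand under $k\mapsto p-k$). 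I would verify symmetry by computing $\langle\phi,T^\nu\psi\rangle$ for $\phi,\psi\in D(H)$, writing $\phi=(1-G)\tilde\phi$ etc., and checking that all boundary/cross terms match.

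For (iii), which is the heart of the matter and the main obstacle, I would prove: for every $\eps>0$ there is $C_\eps$ with
\begin{align*}
\norm{T^\nu\psi}^2 \le \eps\,\norm{(1-G)^*L(1-G)\psi}^2 + C_\eps\norm{\psi}^2,\qquad \psi\in D(H),
\end{align*}
i.e.\ $T^\nu$ is infinitesimally bounded relative to $(1-G)^*L(1-G)$; Kato--Rellich then yields that $H^\nu$ is self-adjoint on $D(H)$ and bounded below. The proof of this bound is a pure scaling/power-counting argument except for the diagonal part, exactly as flagged in the remarks after the conditions. One decomposes $T^\nu$ into $T_{\uod}$ and $T_{\mathrm{d}}$. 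For $T_{\uod}$ acting on $G\psi$, one uses the explicit form of $G$ to reduce to estimating an integral operator and bounds it by powers of $L$ applied to $\psi$; the key input is that the composite operator has a kernel whose singularity is controlled by $\D$, and the constraint $\D<\frac{\gamma\beta^2}{\beta^2+2\gamma^2}$ in Condition~\ref{cond2} is precisely what makes the relevant exponent sub-critical so that an interpolation/Young-type inequality closes with an arbitrarily small constant in front of $\norm{L(1-G)\psi}$. For $T_{\mathrm{d}}$ one invokes Condition~\ref{cond02}: the diagonal multiplier grows like $\abs{P}^{\D+\gamma\eps}$, which is $o(\abs{P}^\gamma)$ coordinatewise since $\D<\gamma$, hence is infinitesimally bounded by $\sum_i\Theta(p_i)\le L$, and therefore by $(1-G)^*L(1-G)$ up to the bounded perturbation coming from $G$. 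The technical difficulty is bookkeeping the interplay between the unbounded $L$ and the bounded-but-nontrivial $G$: one must repeatedly commute $G$ past $L$, control $[L,G]$ (or rather $LG$, which is bounded by construction since $G=-L^{-1}a^*(V)$ composed appropriately), and keep all error terms either $L$-bounded with small constant or genuinely bounded. I expect the cleanest route is to first establish the bound with $L$ in place of $(1-G)^*L(1-G)$ on the right, and then transfer it using the operator identity $L=(1-G)^{-*}\big((1-G)^*L(1-G)\big)(1-G)^{-1}$ valid on $D(H)$ together with boundedness of $(1-G)^{\pm1}$ and $(1-G)^{-*}$, absorbing the resulting cross terms into the infinitesimal part. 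Once this relative bound is in hand, self-adjointness and lower boundedness of $H^\nu=(1-G)^*L(1-G)+T^\nu$ on $D(H)$ follow immediately from the Kato--Rellich theorem, since $(1-G)^*L(1-G)$ is self-adjoint and nonnegative on $D(H)$ by construction.
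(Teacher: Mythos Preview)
Your overall Kato--Rellich strategy matches the paper, and your treatment of (i) and (ii) is largely in the right spirit. However, there is a genuine gap in your plan for (iii), and a smaller issue in (i).

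\textbf{The gap in (iii).} Your proposed route ``first establish the bound with $L$ in place of $(1-G)^*L(1-G)$ on the right, and then transfer via $L=(1-G)^{-*}H_0(1-G)^{-1}$'' cannot work as stated, because a generic $\psi\in D(H)$ is \emph{not} in $D(L)$: the whole point of the construction is that $D(H)\cap D(L)$ is typically trivial (cf.~Proposition~\ref{prop:reg}). Thus $\norm{L\psi}$ has no meaning on $D(H)$, and the identity you wrote is only sensible on $D(L)=(1-G)D(H)$, not on $D(H)$ itself. The paper avoids this problem by the decomposition
\[
T^\nu\psi \;=\; T^\nu(1-G)\psi \;+\; T^\nu G\psi,
\]
and treating the two pieces on different footings. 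For the first, $(1-G)\psi\in D(L)$ by definition of $D(H)$, and one proves directly that $T^\nu$ is bounded $D(L^{\D/\gamma+\epsilon})\to\hilb$ (Lemmas~\ref{lem:Tdiag}, \ref{lem:DposT1}, \ref{lem:T2mainlem}); since $\D/\gamma+\epsilon<1$, Young's inequality gives infinitesimal $L(1-G)$-boundedness, hence $H_0$-boundedness. For the second, the key quantitative input is that $G$ maps $\hilb$ into $D(L^\eta)$ for any $\eta<\tfrac{1-\D/\gamma}{2}$ (Corollary~\ref{lem:gonfockspace}); Condition~\ref{cond2} (e.g.\ $\D<\gamma/3$ when $\beta=\gamma$) says precisely that $\D/\gamma<\tfrac{1-\D/\gamma}{2}$, so the range of $G$ is contained in the domain of $T^\nu$ and $T^\nu G$ is a \emph{bounded} operator on $\hilb$. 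You gesture at this (``bounds it by powers of $L$ applied to $\psi$''), but you never isolate the two separate mapping estimates whose matching is the actual content of Condition~\ref{cond2}, and your subsequent ``transfer'' argument shows you have not internalised that $TG$ is outright bounded rather than merely $H_0$-bounded. In the general case $\beta<\gamma$ the bookkeeping also involves powers of the number operator $N$ (Proposition~\ref{prop:Tmainprop2}, Lemma~\ref{lem:TnachG}), which you do not mention.

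\textbf{A smaller issue in (i).} Invertibility of $1-G$ in the paper does \emph{not} come from $\norm{G}<1$ on some spectral subspace; it comes from the triangular (sector-raising) structure of $G$ via Lemma~\ref{lem:Dposnisleftinv} (citing \cite[Lem.~2.4]{nelsontype}). Your spectral-restriction idea is dubious because $G$ does not commute with $\mathrm{d}\Gamma(\omega)$, and there is no reason for its norm to become small on high-energy subspaces.
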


We will prove that the models obtained by renormalisation techniques are in fact equal to our Hamiltonian $H$. As stated above, we will give two variants, in order to include both choices of the renormalisation counter term. For the convergence of the renormalised Hamiltonians to be uniform we need another assumption.

\begin{cond}
\
 \label{cond1} For any $\eps>0$ there is a positive function $F \in C_0[0,\infty)$ such that
\begin{align*}
 \int_{\R^d} \frac{(1-\chi_\Lambda(k)) \abs{v_{p}(-k)}^2 \abs{\Theta(k)-\Theta(p-k)}}{(\Theta(p-k)+\omega(k))(\Theta(k)+\omega(k))} \, \ud k \leq F(\Lambda) \left(\abs{p}^{\D+\gamma \eps}+1\right) \, .
\end{align*} 
\end{cond}

Note that this Condition~\ref{cond1} is stronger than Condition~\ref{cond02}, the latter follows from this one by setting $C:=F(0)$.

\begin{prop}
\label{prop:renorm}
Assume Conditions~\ref{Hdefcond} and \ref{cond1} and let the counter term be defined in one of two different ways:
\begin{align}
E^\nu_\Lambda(P) := \begin{cases}  \sum_{i=1}^M \int_{B_\Lambda} \abs{v^i_{p_i-k}(k)}^2 (\Theta(k)+\omega(k))^{-1} \ud k  & \nu =1 \\
\sum_{i=1}^M \int_{B_\Lambda} \abs{v^i_{p_i-k}(k)}^2 (\Theta(p_i-k)+\omega(k))^{-1}  \ud k & \nu =2  \, . \end{cases} 
\end{align} 
Let $(H_\Lambda,D(L))$ be the Hamiltonian which is given by the formal expression \eqref{eq:formalexham}, where the form factors $v^i_p$ are replaced by $v^i_p\chi_\Lambda \in \Lz(\R^d)$. Then $H_\Lambda+E_\Lambda^\nu \rightarrow H^\nu$ in norm resolvent sense.
\end{prop}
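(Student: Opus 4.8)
The plan is to derive the norm resolvent convergence from the explicit structure of $H^\nu$ as $(1-G)^*L(1-G)+T^\nu$ together with the analogous decomposition of $H_\Lambda+E_\Lambda^\nu$, exploiting that the cutoff objects are obtained by replacing $V$ with $V_\Lambda:=V\chi_\Lambda$ throughout. First I would record the cutoff counterparts: with $G_\Lambda:=-(a(V_\Lambda)L^{-1})^*=-L^{-1}a^*(V_\Lambda)$ one has, since $v^i_p\chi_\Lambda\in\Lz$, that $H_\Lambda=L+a^*(V_\Lambda)+a(V_\Lambda)$ is self-adjoint on $D(L)$ in the usual way, and an algebraic manipulation (as sketched in the introduction) rewrites this on $D(L)$ as $H_\Lambda+E_\Lambda^\nu=(1-G_\Lambda)^*L(1-G_\Lambda)+T^\nu_\Lambda$, where $T^\nu_\Lambda$ is the operator $a(V_\Lambda)(1-G_\Lambda)$ with the would-be divergent diagonal contribution replaced by $E_\Lambda^\nu$; precisely, $T^\nu_\Lambda-a(V_\Lambda)G_\Lambda=E_\Lambda^\nu$. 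Here $(1-G_\Lambda)$ is bounded with bounded inverse (for $\Lambda$ large, by continuity of $\Lambda\mapsto G_\Lambda$ in operator norm and invertibility of $1-G$), so $D(H_\Lambda+E_\Lambda^\nu)=\{\psi\mid(1-G_\Lambda)\psi\in D(L)\}$, mirroring $D(H^\nu)$.

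Next I would prove the three ingredients that make the resolvents converge. (i) Operator-norm convergence $G_\Lambda\to G$: since $G_\Lambda-G=-L^{-1/2}\cdot L^{-1/2}a^*((V-V_\Lambda))$ and $a(V_\Lambda)L^{-1}$, $a(V)L^{-1}$ are bounded with the difference controlled by $\|(V-V_\Lambda)L^{-1/2}\|$-type quantities, the tail estimates behind Condition~\ref{cond01} (the bound $|v^i_p(k)|\le c|k|^{-\alpha}$ with $\alpha<d/2$, together with $\omega\ge(1+k^2)^{\beta/2}$) give $\|G_\Lambda-G\|\to0$; this also yields that $(1-G_\Lambda)^{\pm1}\to(1-G)^{\pm1}$ in norm and that $(1-G_\Lambda)^*L(1-G_\Lambda)\to(1-G)^*L(1-G)$ in the strong resolvent sense via the quadratic-form/second-resolvent identity, in fact in norm resolvent sense once one factors out the bounded invertible $(1-G_\Lambda)$ and uses $L$-boundedness. (ii) Uniform-in-$\Lambda$ relative bound: the proof of Theorem~\ref{thm:main} establishes a relative bound of $T^\nu$ with respect to $(1-G)^*L(1-G)$ with relative bound $<1$; the same estimates, applied with $V_\Lambda$ in place of $V$ and using Condition~\ref{cond1} (which is exactly the uniform version of Condition~\ref{cond02} needed to control the diagonal $T$-operator contribution), give a bound $\|T^\nu_\Lambda\psi\|\le a\|(1-G_\Lambda)^*L(1-G_\Lambda)\psi\|+b\|\psi\|$ with $a<1$ and $b$ both independent of $\Lambda$ — the function $F\in C_0[0,\infty)$ furnishing precisely the uniformity (and $F(\Lambda)\to0$ as $\Lambda\to\infty$). (iii) Convergence of the perturbation: $T^\nu_\Lambda\to T^\nu$ in the appropriate relative sense, i.e. $\|(T^\nu_\Lambda-T^\nu)(H^\nu+\mathrm{i}\lambda)^{-1}\|\to0$; the off-diagonal pieces converge because they are built from $V_\Lambda,G_\Lambda$ which converge in norm, and the diagonal pieces converge by the dominated-convergence argument underlying Conditions~\ref{cond02}/\ref{cond1}, with the decay $F(\Lambda)\to0$ supplying the rate.

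With these in hand the conclusion follows from a standard perturbative resolvent argument: write
\begin{align*}
(H_\Lambda+E_\Lambda^\nu+z)^{-1}-(H^\nu+z)^{-1} = \big[(1-G_\Lambda)^*L(1-G_\Lambda)+T^\nu_\Lambda+z\big]^{-1}-\big[(1-G)^*L(1-G)+T^\nu+z\big]^{-1},
\end{align*}
and expand the left-hand resolvent around $(1-G)^*L(1-G)+T^\nu$ using the second resolvent identity; the uniform relative bound (ii) makes all intermediate resolvents uniformly bounded for $z$ in a fixed half-plane (uniform lower bound of $H_\Lambda+E_\Lambda^\nu$), while (i) and (iii) send the error terms to zero in norm. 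I would carry this out for $\nu=1$ and $\nu=2$ in parallel, the only difference being which denominator appears in $E_\Lambda^\nu$ and correspondingly in the diagonal part of $T^\nu_\Lambda$.

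The main obstacle I anticipate is step (ii)–(iii) for the \emph{diagonal} part of the $T$-operator: this is the one place, as the text flags after Condition~\ref{cond02}, where scaling arguments do not suffice, and one must use Conditions~\ref{cond02} and \ref{cond1} directly to show that the regularised diagonal annihilation term, minus $E_\Lambda^\nu$, converges — and converges uniformly, with rate governed by $F$ — as a form-bounded perturbation of $(1-G)^*L(1-G)$, all while keeping the relative bound strictly below one uniformly in $\Lambda$. The $p$-dependence of the form factor (the growth $|p|^{\D+\gamma\eps}$ on the right-hand sides) must be absorbed into powers of $\Theta(p_i)\le L$, which is exactly where the sharp arithmetic of Condition~\ref{cond2}, $\D<\gamma\beta^2/(\beta^2+2\gamma^2)$, is consumed; verifying that the $\Lambda$-uniform version of this bookkeeping goes through is the technical heart of the proof.
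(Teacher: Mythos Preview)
Your proposal is correct and follows essentially the same route as the paper: rewrite $H_\Lambda+E_\Lambda^\nu=(1-G_\Lambda)^*L(1-G_\Lambda)+(T_\Lambda+E_\Lambda^\nu)$, apply the second resolvent identity, and control the three resulting pieces via (i) norm convergence $G_\Lambda\to G$, (ii) a $\Lambda$-uniform relative bound of $T_\Lambda+E_\Lambda^\nu$ with respect to $(1-G_\Lambda)^*L(1-G_\Lambda)$, and (iii) convergence $T_\Lambda+E_\Lambda^\nu\to T^\nu$ on the domain $D(T)$; your identification of the diagonal $T$-part (handled through Condition~\ref{cond1}) as the technical crux is exactly what the paper does in Lemma~\ref{lem:Tdiag} and Proposition~\ref{prop:normconv1}. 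One small caveat: your heuristic for (i) via the factorisation $G_\Lambda-G=-L^{-1/2}\cdot L^{-1/2}a^*(V-V_\Lambda)$ is not quite the right tool here, since $a(V)L^{-1/2}$ need not be bounded (indeed $G$ does not map into $D(L^{1/2})$); the paper instead obtains $G_\Lambda\to G$ in $\mathcal{L}(\hilb,D(L^\eta))$ for suitable $\eta<\tfrac12$ by a direct estimate (Proposition~\ref{prop:generalboundong}/Corollary~\ref{lem:gonfockspace}), which is all that is needed.
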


For a discussion of how to choose $T^\nu$ and $E_\Lambda^\nu$, see Remark~\ref{rem:choice}. The Theorem~\ref{prop:renorm} is a slight improvement when compared to \cite[Thm. 1.4]{nelsontype}, where only strong resolvent convergence was proved. Note that the equality $H=H_\infty$ easily follows from the weaker result because the limit is unique. However, we find that it is more satisfactory to prove convergence in norm directly by using the IBC method.

The Condition~\ref{cond3} is necessary in order to prove that intersections of the form $D(\abs{H}^{1/2}) \cap D(L^\eta)$ only contain the zero vector for suitable $\eta >0$. If we assume Condition~\ref{cond3}, we suppose that $v_p$, $\omega$, and $\Theta$ behave essentially like powers of the distance, while in general we only assume an upper bound on $v_p$ and lower bounds for $\Theta$ and $\omega$. 

\begin{cond}
 \label{cond3} For any $R>0$ there exist constants $C',C >0$ such that $\Theta(q-p) \leq C (\abs{q}^{\gamma} + 1)$ and  $v_{p}(k) \geq C' (\abs{k}^{\alpha} +1)^{-1}$ for all $p \in \R^d$ with $\abs p < R$. Furthermore, there exists a constant $\tilde C >0$, such that $\omega(k) \leq \tilde{C} (\abs{k}^\gamma+1)$.
\end{cond}

The Proposition~\ref{prop:reg} gives quite strong results when compared to \cite[Thm. 4.2]{nelsontype} but only because the Condition~\ref{cond3} is more restrictive. All concrete examples we have in mind fulfill these conditions.

\begin{prop}
\label{prop:reg}
Assume Conditions~\ref{Hdefcond} and \ref{cond3}. Then for any $\nu \in \{1,2\}$ and all $\eta \in [0, \frac{\gamma-\D}{2 \gamma})$ it holds that $D(\abs{H^\nu}^{1/2}) \subset D(L^\eta)$. If however $\eta \geq \frac{\gamma-\D}{2 \gamma}$ then it holds that $D(\abs{H^\nu}^{1/2}) \cap D(L^\eta) = \{ 0 \}$.
\end{prop}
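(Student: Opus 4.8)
The plan is to first pin down the form domain $D(\abs{H^\nu}^{1/2})$, then treat the two regimes of $\eta$ separately. Write $\eta_0:=\frac{\gamma-\D}{2\gamma}$; by Condition~\ref{cond2} one has $0<\eta_0\le\tfrac12$, with $\eta_0=\tfrac12$ exactly when $\D=0$, and $\eta_0$ is precisely the exponent at which $L^{\eta_0}G$ just fails to be bounded. Since $H^\nu=(1-G)^*L(1-G)+T^\nu$ is built so that $T^\nu$ is relatively form-small with respect to the self-adjoint operator $(1-G)^*L(1-G)\ge 0$ (this is what underlies the Kato--Rellich step in Theorem~\ref{thm:main}), KLMN-type reasoning identifies the form domain of $H^\nu$ with that of $(1-G)^*L(1-G)$; as $1-G$ is bounded with bounded inverse and $D(L)$ is a form core for $L$, this gives $D(\abs{H^\nu}^{1/2})=\{\psi\in\hilb:(1-G)\psi\in D(L^{1/2})\}\subset D(L^{1/2})$. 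For the first assertion I would prove that $L^\eta G$ is bounded on $\hilb$ for every $\eta<\eta_0$ --- a Fourier-space estimate of exactly the type giving continuity of $G=L^0G$ in Theorem~\ref{thm:main}, needing only Condition~\ref{cond01}: sectorwise, after the substitution freeing the argument of the wave function one sector below, the one-boson contribution reduces to $\int_{\R^d}\abs{v_q(\kappa)}^2\bigl(\Theta(q-\kappa)+\omega(\kappa)\bigr)^{2\eta-2}\,\ud\kappa$ being bounded uniformly in $q$, which holds because $2\alpha+\gamma(2-2\eta)>2\alpha+\gamma(2-2\eta_0)=d$ and $\gamma\ge\beta$, while the higher sectors follow from the same bound plus a Schur-type estimate for the cross terms (uniformly in the boson number, the powers of $L^{-1}$ providing the needed decay). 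Granting this, for $\psi\in D(\abs{H^\nu}^{1/2})$ we split $\psi=(1-G)\psi+G\psi$: the first term lies in $D(L^{1/2})\subset D(L^\eta)$ (here $\eta<\eta_0\le\tfrac12$) and the second in $D(L^\eta)$, so $\psi\in D(L^\eta)$.

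For the second assertion, since $L\ge 0$ with finite spectral measure near $0$ one has $D(L^\eta)\subset D(L^{\eta_0})$ for $\eta\ge\eta_0$, so it suffices to show $D(\abs{H^\nu}^{1/2})\cap D(L^{\eta_0})=\{0\}$. Let $\psi$ be in this intersection. By the previous paragraph $(1-G)\psi\in D(L^{1/2})\subset D(L^{\eta_0})$, hence $G\psi=\psi-(1-G)\psi\in D(L^{\eta_0})$. Since $G$ maps $\hilb\uppar n$ into $\hilb\uppar{n+1}$, the component $(G\psi)\uppar{n+1}$ depends only on $\psi\uppar n$, so this means that for every $n\ge 0$ the vector $(G\psi)\uppar{n+1}$ lies in the domain of the restriction of $L^{\eta_0}$ to $\hilb\uppar{n+1}$. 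The whole assertion therefore follows from the statement: \emph{if $g\in\hilb\uppar n$ and $L^{\eta_0}(Gg)\uppar{n+1}\in\hilb\uppar{n+1}$, where $(Gg)\uppar{n+1}$ is the image of $g$ under the $(n\to n+1)$-component of $G$, then $g=0$}; applied with $g=\psi\uppar n$ this forces $\psi\uppar n=0$ for every $n$, hence $\psi=0$, and no induction on the sectors is needed.

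To prove that statement I would write $(Gg)\uppar{n+1}$ explicitly as a symmetrised sum $\sum_j\tau_j$, where $\tau_j$ carries the singular factor $v(k_j)/L$ in the $j$-th boson momentum, and expand $\norm{L^{\eta_0}(Gg)\uppar{n+1}}^2=\sum_{j,j'}\langle L^{\eta_0}\tau_j,L^{\eta_0}\tau_{j'}\rangle$. The diagonal part $\sum_j\norm{L^{\eta_0}\tau_j}^2$ equals, up to a positive constant and after substituting so that the argument of $g$ becomes a free variable $q$ (I write the one-nucleon case; the general case is entirely analogous),
\[
\int\abs{g(q,k_1,\dots,k_n)}^2\Bigl(\int_{\R^d}\frac{\abs{v_q(\kappa)}^2\,\ud\kappa}{\bigl(\Theta(q-\kappa)+\sum_{l=1}^{n}\omega(k_l)+\omega(\kappa)\bigr)^{2-2\eta_0}}\Bigr)\ud q\,\ud k_1\cdots\ud k_n\,,
\]
and here Condition~\ref{cond3} enters: its lower bound $v_q(\kappa)\ge C'(\abs\kappa^\alpha+1)^{-1}$ for $\abs q<R$, together with its upper bounds $\Theta(q-\kappa)\le C(\abs\kappa^\gamma+1)$ and $\omega(\kappa)\le\tilde C(\abs\kappa^\gamma+1)$, makes the inner $\kappa$-integral behave like $\int\abs\kappa^{-2\alpha-\gamma(2-2\eta_0)}\ud\kappa=\int\abs\kappa^{-d}\ud\kappa$ at infinity, hence diverge for a.e.\ $(q,k_1,\dots,k_n)$; as $R$ is arbitrary, the whole diagonal expression is $+\infty$ unless $g=0$. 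For the off-diagonal terms $\langle L^{\eta_0}\tau_j,L^{\eta_0}\tau_{j'}\rangle$ with $j\ne j'$ the two singular factors $v(k_j)$ and $v(k_{j'})$ sit in different variables, and one bounds them --- after inserting a cutoff $\abs\kappa<\Lambda$ in the newly created momentum to make each $\norm{L^{\eta_0}\tau_j}$ finite, estimating uniformly in $\Lambda$, and then letting $\Lambda\to\infty$ --- by $C\norm g^2$, using only the upper bounds of Condition~\ref{cond01} and a judicious splitting of the power $2-2\eta_0$ of the free resolvent between the two singular variables (a Cauchy--Schwarz/Schur argument of the same type that proves continuity of $a(V)L^{-1}$). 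Combining, $\norm{L^{\eta_0}(Gg)\uppar{n+1}}^2\ge(\text{diagonal})-C\norm g^2=+\infty$ whenever $g\ne0$, contradicting the hypothesis.

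The main obstacle is exactly this last bookkeeping at the borderline exponent $\eta_0$: each $\tau_j$ individually fails to lie in $D(L^{\eta_0})$ as soon as $g\ne 0$, so the separation of the divergent diagonal from the finite cross terms cannot be performed term by term and must be carried out through the cutoff $\Lambda$ together with monotone and dominated convergence; moreover the cross-term estimate requires the power $2-2\eta_0$ of the resolvent to be split so as to control singularities in two boson momenta simultaneously, which is tight precisely because $\eta_0$ is the threshold value. This is the analogue, for general $\gamma$ in place of $\gamma=2$ and with the additional $p$-dependence of the form factor (which only affects the substitution $q=p-\kappa$ and the ensuing bookkeeping, not the structure of the argument), of the proof of \cite[Thm.~4.2]{nelsontype}.
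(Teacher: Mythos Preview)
Your overall architecture matches the paper's: identify the form domain as $(1-G)^{-1}D(L^{1/2})$, decompose $\psi=(1-G)\psi+G\psi$, and for the second assertion reduce to showing that $G$ maps no nonzero vector into $D(L^{\eta_0})$. Two substantive points, however, diverge from the paper and are not fully justified.

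For the inclusion, your claim that $L^\eta G$ is bounded on all of $\hilb$ for every $\eta<\eta_0$ is stronger than what the available estimates give. The Cauchy--Schwarz step in the proof of Proposition~\ref{prop:generalboundong} produces an unavoidable factor $n^{\max(0,1-s)}$; to kill it one must take $s\ge1$, but then the admissible range shrinks to $\eta<(\beta-\D)/(2\gamma)$, which is strictly below $\eta_0$ whenever $\beta<\gamma$. The paper instead keeps $s=0$, obtaining $G:D(N^{1/2})\to D(L^\eta)$ for the full range $\eta<\eta_0$, and then closes the argument by showing $D(\abs{H}^{1/2})\subset D(N^{1/2})$ via Lemma~\ref{lem:Dposnisleftinv}. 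This number-operator detour is what makes the inclusion work for general $\beta\le\gamma$; your direct route is only justified in the special case $\beta=\gamma$. (Aside: the line ``$D(\abs{H^\nu}^{1/2})\subset D(L^{1/2})$'' must be a slip --- the second assertion shows precisely that no nonzero form-domain vector lies even in $D(L^{\eta_0})\supset D(L^{1/2})$.)

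For the non-regularity part, the gap is the off-diagonal estimate. At the borderline exponent there is no room: any splitting $L^{-(2-2\eta_0)}=L^{-a}L^{-b}$ with $a+b=2(1-\eta_0)$ forces $\min(a,b)\le1-\eta_0$, and $1-\eta_0$ is exactly the threshold at which the one-boson integral $\int\abs{k}^{-2\alpha}(\abs{k}^\gamma+\Omega)^{-2a}\,\ud k$ diverges; so the naive Cauchy--Schwarz fails, and your ``judicious splitting''/``Schur-type'' remark does not indicate what replaces it. The paper sidesteps this completely by localising: one integrates $\abs{L^{\eta_0}G\psi\uppar{n}}^2$ only over the region $U$ where all momenta \emph{except} $p_1$ and $k_1$ are bounded by some $R$, uses the pointwise inequality $(a+b)^2\ge\tfrac12 a^2-b^2$ to isolate the $(i,j)=(1,1)$ summand, and observes that on $U$ all the remaining summands are integrable simply because the corresponding $v$-factors are integrated over balls and $v_p\in L^2_{\mathrm{loc}}$ uniformly in $p$. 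Condition~\ref{cond3} then makes the $(1,1)$ integral over $U$ diverge. This elementary compact-support argument replaces your global cross-term estimate, for which no proof is offered and which, at the critical exponent, is genuinely tight.
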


In Section~\ref{sect:construct} we will construct the Hamiltonian in the general setting and prove Theorem~\ref{thm:main} and the Proposition~\ref{prop:renorm} for $\gamma=\beta$ and Proposition~\ref{prop:reg}. The proof of Theorem~\ref{thm:main} and Proposition~\ref{prop:renorm} in the general case $\beta < \gamma$ will be given in the Appendix~\ref{sect:appendix}.

In Section~\ref{sect:examples} we will apply the results we have obtained to the two models that have been discussed in the introduction. In the end we will prove the following Corollary:
\begin{cor}
\label{corol:nelsontypemodels}
Let $\omega(k) = \sqrt{k^2 +1}$ and $\Theta(p) = \sqrt{p^2+\mu^2}$. 
\begin{itemize}
\item If $d=3$, $v_p(k) = \Theta(p)^{-1/2}  \Theta(p+k)^{-1/2}  \omega(k)^{-1/2}$ and $\mu >0$, then the renormalised operator of Eckmann \cite{eckmann1970} is equal to $H^{\nu=2}$.
\item If $d=2$, $v_p(k) = \omega(k)^{-1/2}$ and $\mu \geq 0$ then the renormalised operator for Gross' model that has been obtained in \cite{DissAndi} is equal to $H^{\nu=1}$.
\end{itemize} 
It holds that $D(\abs{H^\nu}^{1/2}) \subset D(L^\eta)$ for any $\eta <1/2$. If $\eta \in [1/2,1]$ then $D(\abs{H^\nu}^{1/2}) \cap D(L^\eta) = \{ 0 \}$ in both models.
\end{cor}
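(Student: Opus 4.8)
The plan is to deduce Corollary~\ref{corol:nelsontypemodels} from Theorem~\ref{thm:main} and Propositions~\ref{prop:renorm} and~\ref{prop:reg} by verifying, for each of the two models, that Conditions~\ref{Hdefcond}, \ref{cond1} and \ref{cond3} hold for an explicit choice of the parameters $(\alpha,\beta,\gamma)$. In both cases $\Theta(p)=\sqrt{p^2+\mu^2}\geq\abs p$ and $\omega(k)=(1+k^2)^{1/2}$, so I take $\gamma=\beta=1$; since $\gamma=\beta$, the versions of Theorem~\ref{thm:main} and Proposition~\ref{prop:renorm} proved in Section~\ref{sect:construct} suffice and the Appendix is not needed. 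For Eckmann's model in $d=3$ the bound $\Theta(p)^{-1/2}\Theta(p+k)^{-1/2}\leq(\mu\abs k)^{-1/2}$ together with $\omega(k)^{-1/2}\leq\abs k^{-1/2}$ yields $\abs{v_p(k)}\leq c\,\abs k^{-1}$, and this rate is optimal (take $p=0$, $\abs k\to\infty$), so $\alpha=1$ and $\D=d-2\alpha-\gamma=0$; for Gross' model in $d=2$ one has $\abs{v_p(k)}=(1+k^2)^{-1/4}\leq\abs k^{-1/2}$, hence $\alpha=\tfrac12$ and again $\D=0$. With $\gamma=\beta=1$ and $\D=0$ we have $\D<\tfrac{\gamma\beta^2}{\beta^2+2\gamma^2}=\tfrac13$, so Condition~\ref{cond2} holds, and $\tfrac{\gamma-\D}{2\gamma}=\tfrac12$, which is precisely the threshold that will govern the regularity statement.

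The pointwise parts of Conditions~\ref{cond01} and~\ref{cond3} are routine: the reflection symmetry $v^i_{p-k}(k)=v^i_p(-k)$ holds since $\omega$ is even (Gross, where $v$ is $p$-independent) resp.\ since $\Theta(p-k)^{-1/2}\Theta(p)^{-1/2}\omega(k)^{-1/2}$ is symmetric (Eckmann); the lower bounds $\abs{\Theta(p)}\geq\abs p^\gamma$ and $\omega(k)\geq(1+k^2)^{\beta/2}$ are immediate; and for Condition~\ref{cond3} one checks $\Theta(q-p)\leq\abs q+\abs p+\mu\leq C_R(\abs q+1)$ for $\abs p<R$, $\omega(k)\leq\abs k+1$, and $v_p(k)\geq C'_R(\abs k+1)^{-\alpha}$ (for Gross $(1+k^2)^{-1/4}\geq(1+\abs k)^{-1/2}\geq(\abs k^{1/2}+1)^{-1}$; for Eckmann use $\Theta(p)^{-1/2}\geq(R^2+\mu^2)^{-1/4}$ on $\abs p<R$ together with $\Theta(p+k)^{-1/2},\omega(k)^{-1/2}\gtrsim_R(\abs k+1)^{-1/2}$). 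The only substantive remarks are that $\mu>0$ is genuinely used for Eckmann (to bound $v_p$ near $k=0$ and to keep $\Theta(p-k)^{-1}$ locally bounded), whereas Gross' model admits $\mu=0$ because $v_p$ is already bounded and the denominators there are $\geq\omega(k)\geq1$.

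The main obstacle is Condition~\ref{cond1} (equivalently the harder Condition~\ref{cond02}), the one ingredient not governed by scaling. I would use the identity
\begin{equation*}
\Theta(k)-\Theta(p-k)=\frac{\abs k^2-\abs{p-k}^2}{\Theta(k)+\Theta(p-k)}=\frac{2p\cdot k-\abs p^2}{\Theta(k)+\Theta(p-k)},
\end{equation*}
which, since $\abs{\nabla\Theta}\leq1$ and $\Theta$ is even, gives the uniform cancellation bound $\abs{\Theta(k)-\Theta(p-k)}\leq\abs p$, together with the elementary lower bounds $\Theta(k)+\omega(k)\geq\max(\abs k,1)$ and $\Theta(p-k)+\omega(k)\geq\max(\abs p,\abs k,1)$. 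For Gross, $\abs{v_p(-k)}^2=\omega(k)^{-1}$, and combining these estimates the integrand over $\{\abs k>\Lambda\}$ is dominated by a multiple of $\abs p\,(1+\abs k)^{-2}\max(\abs p,\abs k,1)^{-1}$; splitting the $k$-integral at $\abs k\sim\abs p$ produces a bound $F(\Lambda)(\log(2+\abs p)+1)$, and since $\log(2+\abs p)\leq C_\eps(\abs p^{\gamma\eps}+1)$ while the split gives a factor $\lesssim\Lambda^{-\eps}\log(2+\Lambda)\to0$, Condition~\ref{cond1} follows; the logarithm is exactly the $\D=0$ borderline behaviour, consistent with the logarithmic divergence of the counter terms. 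Eckmann's case runs in parallel: the extra factor in $\abs{v_p(-k)}^2=\Theta(p)^{-1}\Theta(p-k)^{-1}\omega(k)^{-1}$ contributes a bounded $\Theta(p)^{-1}\abs p\leq1$ and a harmless $\Theta(p-k)^{-1}\leq\mu^{-1}$, and one only needs care in the region $\{\Lambda<\abs k<\abs p\}$, where one trades $\abs p^{1-\eps}\leq\Lambda^{1-\eps}$ when $\abs p\leq\Lambda$ and uses $\abs p^\eps\geq\Lambda^\eps$ otherwise to extract a $\Lambda$-vanishing prefactor. I do not anticipate any conceptual difficulty beyond organising these region decompositions.

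With all Conditions in hand, the Corollary follows by invocation. Theorem~\ref{thm:main} supplies the self-adjoint, lower-bounded operators $H^\nu$; Proposition~\ref{prop:renorm} then gives $H_\Lambda+E^\nu_\Lambda\to H^\nu$ in norm resolvent sense. It remains to identify the limits with the operators of the literature: for Gross the $\nu=1$ counter term specialises to the $p$-independent constant $E^1_\Lambda=M\int_{B_\Lambda}\omega(k)^{-1}(\Theta(k)+\omega(k))^{-1}\,\ud k$, and for Eckmann the $\nu=2$ counter term becomes $E^2_\Lambda(p)=\Theta(p)^{-1}\int_{B_\Lambda}\big(\Theta(p-k)\,\omega(k)\,(\Theta(p-k)+\omega(k))\big)^{-1}\,\ud k$; after matching normalisation conventions each of these agrees, up to an error converging as $\Lambda\to\infty$, with the counter term used by Eckmann~\cite{eckmann1970}, respectively in~\cite{DissAndi}. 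Since norm-resolvent (indeed even strong-resolvent) limits are unique, this gives $H^{\nu=2}=H^{\mathrm{Eck}}_\infty$ and $H^{\nu=1}=H^{\mathrm{Gross}}_\infty$. Finally, Proposition~\ref{prop:reg} with $\gamma=1$ and $\D=0$ yields $D(\abs{H^\nu}^{1/2})\subset D(L^\eta)$ for all $\eta<\tfrac12$ and $D(\abs{H^\nu}^{1/2})\cap D(L^\eta)=\{0\}$ for all $\eta\geq\tfrac12$; in particular the latter holds for $\eta\in[\tfrac12,1]$, which completes the proof.
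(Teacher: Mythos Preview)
Your overall structure is right and matches the paper: fix $\gamma=\beta=1$, take $\alpha=\tfrac12$ (Gross) resp.\ $\alpha=1$ (Eckmann) so that $\D=0$, check Conditions~\ref{Hdefcond}, \ref{cond1}, \ref{cond3}, and then invoke Theorem~\ref{thm:main} and Propositions~\ref{prop:renorm}, \ref{prop:reg}. The pointwise checks and the identification of the counter terms are fine.

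There is, however, a genuine gap in your verification of Condition~\ref{cond1} for Eckmann's model. After using $\Theta(p)^{-1}\abs p\leq 1$ you propose to discard $\Theta(p-k)^{-1}$ via the ``harmless'' bound $\Theta(p-k)^{-1}\leq\mu^{-1}$. This leaves you with an integrand of size $\mu^{-1}(1+\abs k)^{-2}\max(\abs p,1+\abs k)^{-1}$, and in $d=3$ the region $\{\abs k>\abs p\}$ then contributes
\[
\int_{\abs k>\max(\Lambda,\abs p)} (1+\abs k)^{-3}\,\ud k \;\asymp\; \int_{\max(\Lambda,\abs p)}^\infty r^{-1}\,\ud r \;=\;\infty,
\]
so the bound you describe diverges, and the claim that ``Eckmann's case runs in parallel'' to Gross fails precisely because the dimension has changed from $2$ to $3$. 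The $\Theta(p-k)^{-1}$ factor is not harmless: it supplies an additional $\abs k^{-1}$ decay at infinity that is essential in three dimensions.

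The fix is immediate and is exactly what the paper does: do not unpack the specific form of $v_p$ but simply use the $\alpha$-bound you already established, $\abs{v_p(-k)}^2\leq c\abs k^{-2\alpha}=c\abs k^{-2}$. Then the integrand is $\lesssim \abs p\,\abs k^{-2}(1+\abs k)^{-1}\max(\abs p,1+\abs k)^{-1}$, which behaves like $\abs p\,\abs k^{-4}$ for $\abs k>\abs p$ and is integrable in $d=3$; from here either your region decomposition or an appeal to Lemma~\ref{lem:scaling} (as in the paper's Lemma~\ref{lem:wuensch}) yields $F(\Lambda)(\abs p^{\eps}+1)$ with $F(\Lambda)\to 0$. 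With this correction your argument goes through.
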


In \cite{DissAndi}, Wünsch proved that the \textit{operator domain} $D(H^{\nu=1})$ is contained in $D(L^\eta)$ for all $\eta <1/2$ in the renormalised model of Gross and Sloan. The corresponding statement for the form domain as well as its analogue for Eckmann's model seem to be new. For both models, this is apparently also the first proof of the converse -- the fact that in both models $D(\abs{H^\nu}^{1/2}) \cap D(L^{1/2}) = \{ 0 \}$.

\section{Construction of the Hamiltonian}
\label{sect:construct}
In the whole Section, the global Condition~\ref{cond01} is assumed to hold. Because our goal is to apply the results of this section to models with $\Theta(p) = \sqrt{p^2+\mu^2}$ and $\omega(k)=\sqrt{k^2+1}$, we will pay special attention to the case of $\beta = \gamma$ where Condition~\ref{cond2} reduces to $0\leq \D < \gamma/3$. Some issues concerning the general case of $0<\beta<\gamma$ will be treated only in the Appendix~\ref{sect:appendix}. 

\subsection{The domain of the Hamiltonian}
\label{subsect:extended}
We start with a technical lemma that will turn out to be very useful later on. The proof can be found in the Appendix~\ref{sect:appendix}. We will always denote the characteristic function of a ball of radius $\Lambda$ in $\R^d$ by $\chi_\Lambda$.
\begin{lem}
\label{lem:scaling}
Let $\Lambda, \Omega \geq 0$. For any $\gamma,r,\beta>0$ and $\nu,\sigma\geq 0$ such that $d \in (\nu +\sigma ,\nu+\sigma+ r \gamma)$ there exists a $\delta_0>0$ and a constant $C>0$ such that for any $0 \leq \delta < \delta_0$ it holds that
\begin{align*}
\int_{\R^d} (1-\chi_\Lambda(k)) \frac{\abs{k}^{-\nu} \abs{p-k}^{-\sigma}}{(\abs{p-k}^\gamma + \abs{k}^\beta + \Omega )^r} \, \ud k \leq C \, \Omega^{-r+(d-\nu-\sigma)/\gamma+ \delta_\Lambda}  \, \Lambda^{- \beta \delta_\Lambda}
\end{align*}
for all $p \in \R^d$. The function $\delta_\Lambda$ is defined as $\delta_\Lambda:=\delta \cdot (1-\chi_{[0,1]}(\Lambda))$.
\end{lem}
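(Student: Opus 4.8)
The plan is to prove Lemma~\ref{lem:scaling} by reducing the integral to a one-dimensional radial integral via a scaling argument, so let me sketch the key steps.

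\textbf{Step 1: Reduce to the case $\Omega = 1$ by scaling.} First I would substitute $k = \Omega^{1/\gamma} \tilde k$ and $p = \Omega^{1/\gamma} \tilde p$ (for $\Omega > 0$; the case $\Omega = 0$ is a limiting case or handled separately when $-r + (d-\nu-\sigma)/\gamma < 0$). Under this substitution, $|k|^{-\nu}|p-k|^{-\sigma}$ scales as $\Omega^{-(\nu+\sigma)/\gamma}$, the volume element $\ud k$ gives $\Omega^{d/\gamma}$, the denominator $(|p-k|^\gamma + |k|^\beta + \Omega)^r$ becomes $\Omega^r (|\tilde p - \tilde k|^\gamma + \Omega^{\beta/\gamma - 1}|\tilde k|^\beta + 1)^r$, and the cutoff $1 - \chi_\Lambda(k)$ becomes $1 - \chi_{\Lambda \Omega^{-1/\gamma}}(\tilde k)$. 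This is where a complication with the $|k|^\beta$ term appears: unless $\beta = \gamma$, the scaling does not act homogeneously on that term. However, since $\beta \le \gamma$ is not assumed here (only $\beta > 0$), and since we only need an upper bound, I would bound $|\tilde p - \tilde k|^\gamma + \Omega^{\beta/\gamma-1}|\tilde k|^\beta + 1 \geq c(|\tilde p - \tilde k|^\gamma + |\tilde k|^\beta + 1)$ when $\Omega \geq 1$, or more carefully just use that the denominator is bounded below appropriately; for the clean statement it suffices to get the power of $\Omega$ right and absorb the rest into $C$.

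\textbf{Step 2: Estimate the rescaled integral with a $\delta$-loss.} After scaling, I need to bound
\begin{align*}
I(\tilde p, \tilde \Lambda) := \int_{\R^d} (1 - \chi_{\tilde\Lambda}(\tilde k)) \frac{|\tilde k|^{-\nu}|\tilde p - \tilde k|^{-\sigma}}{(|\tilde p - \tilde k|^\gamma + |\tilde k|^\beta + 1)^r}\, \ud \tilde k
\end{align*}
uniformly in $\tilde p$, with an extra gain of $\tilde\Lambda^{-\beta\delta}$ when $\tilde\Lambda \geq 1$. The key trick is to write $1 = |\tilde k|^{\gamma\delta} \cdot |\tilde k|^{-\gamma\delta}$ on the region $|\tilde k| \geq \tilde\Lambda \geq 1$, giving $(1 - \chi_{\tilde\Lambda})|\tilde k|^{-\gamma\delta} \leq \tilde\Lambda^{-\gamma\delta}$; absorbing the factor $|\tilde k|^{\gamma\delta}$ into the denominator replaces the exponent $r$ by $r - \delta$ effectively, i.e. we trade $(\cdots)^{-r}$ for $|\tilde k|^{\gamma\delta}(\cdots)^{-r}$. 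Wait—I should be careful about whether it is $|\tilde k|$ or the full denominator that carries the $\delta$-power; using $|\tilde k|^\beta \le |\tilde p - \tilde k|^\gamma + |\tilde k|^\beta + 1$ lets me convert $|\tilde k|^{\gamma\delta}$ into a power of the denominator only if $\gamma\delta \le \beta$ times something, which constrains $\delta_0$. In any case, this produces the $\tilde\Lambda^{-\beta\delta}$ factor (after unwinding the scaling $\tilde\Lambda = \Lambda\Omega^{-1/\gamma}$, and noting $\tilde\Lambda^{-\beta\delta} = \Lambda^{-\beta\delta}\Omega^{\beta\delta/\gamma}$, which combines with the $\Omega^{(d-\nu-\sigma)/\gamma - r}$ to give the stated $\Omega^{-r + (d-\nu-\sigma)/\gamma + \delta_\Lambda}$ once one checks $\beta\delta/\gamma$ matches — actually one wants the exponent of $\Omega$ to increase by exactly $\delta_\Lambda$, which forces choosing the power on $|\tilde k|$ to be $\gamma\delta$ so that after rescaling it is $\gamma\delta / \gamma = \delta$; I would reconcile this bookkeeping carefully).

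\textbf{Step 3: Convergence of the $\delta$-loss integral.} Finally, I must show the resulting $\tilde p$-independent integral converges, i.e.
\begin{align*}
\int_{\R^d} \frac{|\tilde k|^{-\nu + \gamma\delta}|\tilde p - \tilde k|^{-\sigma}}{(|\tilde p - \tilde k|^\gamma + |\tilde k|^\beta + 1)^r}\, \ud \tilde k < \infty
\end{align*}
uniformly in $\tilde p$. Here I split $\R^d$ into regions $\{|\tilde k| \leq 2|\tilde p|\}$ and $\{|\tilde k| > 2|\tilde p|\}$ (and further near the singularities $\tilde k = 0$ and $\tilde k = \tilde p$). Near $\tilde k = \tilde p$, integrability requires $\sigma < d$; near $\tilde k = 0$, it requires $\nu - \gamma\delta < d$, i.e. $\nu < d$ with room for small $\delta$; at infinity, the denominator behaves like $|\tilde k|^{\min(\gamma,\beta) r}$ wait, like $|\tilde k|^{\gamma r}$ if $|\tilde k| \geq |\tilde p|$ and $\beta \leq \gamma$ — I should use that the denominator grows at least like $|\tilde k|^{\gamma r}$ eventually (since $|\tilde p - \tilde k|^\gamma \gtrsim |\tilde k|^\gamma$ for $|\tilde k| \geq 2|\tilde p|$), so the integrand decays like $|\tilde k|^{-\nu + \gamma\delta - \sigma - \gamma r}$, which is integrable at infinity precisely when $\nu + \sigma + \gamma r - \gamma\delta > d$, i.e. $d < \nu + \sigma + \gamma(r - \delta)$, which holds for small enough $\delta$ by the hypothesis $d < \nu + \sigma + r\gamma$. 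This pins down $\delta_0$ as the smallest of the thresholds coming from these three regions. \textbf{The main obstacle} I anticipate is handling the $|\tilde k|^\beta$ term in the denominator cleanly when $\beta \neq \gamma$: the scaling is not homogeneous in that term, so one has to argue that replacing it by the ``wrong'' power only helps (gives a larger denominator, hence smaller integral) in the relevant regime $\Omega \geq 1$, and separately dispose of the region $0 \leq \Omega < 1$ — there one simply drops the $\Omega$ and the $|\tilde k|^\beta$ term controls integrability at infinity via $\gamma r$ replaced by an effective exponent; this case-splitting and verifying the uniform constant $C$ is the fiddly part, but it is routine once the structure above is in place.
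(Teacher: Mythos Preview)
Your overall strategy is workable but differs from the paper's in one key respect, and the difference is precisely where your bookkeeping gets tangled. Rather than scaling first and then fighting both the non-homogeneous $|k|^\beta$ term and the uniformity in $p$ by region-splitting, the paper begins with the pointwise estimate
\[
(|p-k|^\gamma + |k|^\beta + \Omega)^{-r} \le (|p-k|^\gamma + \Omega)^{-(r-\delta)}\,|k|^{-\beta\delta},
\]
which simultaneously (i) removes $|k|^\beta$ from the main factor, dissolving your ``main obstacle'' before it arises, and (ii) produces the factor $|k|^{-\beta\delta}$ that will give the $\Lambda^{-\beta\delta}$ decay. The integrand is now a product $f_\Lambda(k)\,g_p(k)$ with $f_\Lambda=(1-\chi_\Lambda)|k|^{-\nu-\beta\delta}$ and $g_p$ a translate of a radially decreasing function. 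The paper then removes the $p$-dependence in one line via the Hardy--Littlewood rearrangement inequality, $\int f_\Lambda g_p \le \int f_\Lambda^* g_0$, computing $f_\Lambda^*$ explicitly and bounding it by $|k|^{-\nu}\Lambda^{-\beta\delta}$. Only after this is the substitution $k\to k/\Omega^{1/\gamma}$ performed, on a $p$-independent radial integral.

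Your Step~2 has a genuine exponent mismatch: you extract $|\tilde k|^{-\gamma\delta}\le\tilde\Lambda^{-\gamma\delta}$ but then claim this gives the factor $\tilde\Lambda^{-\beta\delta}$; and even if you had $\tilde\Lambda^{-\beta\delta}=\Lambda^{-\beta\delta}\Omega^{\beta\delta/\gamma}$, this shifts the $\Omega$-exponent by $\beta\delta/\gamma$, not by $\delta$ as the lemma states, unless $\beta=\gamma$. In the paper's ordering the two corrections decouple: the $\Lambda^{-\beta\delta}$ comes from $f_\Lambda^*$, while the $\Omega^{+\delta}$ comes from the replacement $r\mapsto r-\delta$ in the surviving denominator, which scales homogeneously in $\gamma$. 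Your Step~3 also has a gap: checking local integrability at $\tilde k=0$, $\tilde k=\tilde p$, and at infinity establishes finiteness for each fixed $\tilde p$ but not uniformity as $|\tilde p|\to\infty$; one would still need to control the annular region $|\tilde k|\sim|\tilde p|$. That analysis can be carried out directly, but it is exactly the work that the rearrangement inequality does in a single stroke.
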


The action of the free operator on the $n$-boson sector is given by multiplication with the function
\begin{align}
L(P,K) := \sum_{i=1}^M \Theta(p) + \sum_{j=1}^n \omega(k_j) :=\Theta(P) + \Omega(K) \, ,
\end{align}
where we make use of the notation $\sum_{j \in J} \omega(q_j) = \Omega(Q)$. We can now generalise \cite[Prop.~3.1]{nelsontype} and prove that, for $0\leq \D < \gamma$, the operator $G=-( a(V) L^{-1})^* = - L^{-1} a^*(V)$ maps into $D(L^\eta)$ for some $0\leq \eta < \frac12 - \frac{\D}{2 \gamma} \leq \frac{1}{2} $.

\begin{prop}
\label{prop:generalboundong}
Define the affine transformation $u(s):=\frac{\beta}{\gamma} s - \frac{\D}{\gamma}$ and let $s\geq0$ be such that $u(s)<1$.
Then for all $0\leq \eta < \frac{1+u(s)-s}2$ the operator $G$ is bounded from $D(N^{\max(0,1-s)/2})$ to $D(L^\eta)$ and $G_\Lambda \rightarrow G$ in this norm of continuous operators. 
\end{prop}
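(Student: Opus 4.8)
The plan is to compute the norm $\norm{L^\eta G \psi}$ explicitly on the $n$-boson sector and reduce everything to the scaling estimate of Lemma~\ref{lem:scaling}. Recall that $G = -L^{-1} a^*(V)$, so on the sector $\hilb\uppar{n+1}$ the function $G\psi$ is obtained from $\psi \in \hilb\uppar n$ by symmetrising $L(P,K,k_{n+1})^{-1} \sum_i v^i_{p_i}(k_{n+1})\, \psi(P + e_i k_{n+1}, K)$ over the $n+1$ boson variables. Applying $L^\eta$ multiplies this by $L(P,K,k_{n+1})^\eta$, so the kernel of $L^\eta G$ carries the factor $L(P,K,k_{n+1})^{\eta-1} v^i_{p_i}(k_{n+1})$. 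The first step is therefore a Cauchy--Schwarz / Schur-type argument in the single new variable $k_{n+1}$ (combined with the standard combinatorial bound for the symmetrisation, which costs a factor $\sqrt{n+1}$ and is absorbed by $N^{1/2}$ when $s<1$, or is harmless when $s\geq 1$): one bounds $\norm{L^\eta G \psi}\uppar{n+1}^2$ by an integral of $\abs{\psi(P',K)}^2$ against
\begin{align*}
\int_{\R^d} \frac{\abs{v^i_{p_i - k}(k)}^2\, (\Theta(P) + \Omega(K) + \omega(k))^{2\eta}}{(\Theta(P) + \Omega(K) + \omega(k))^2}\, \ud k \, ,
\end{align*}
using the shift $v^i_{p_i-k}(k) = v^i_{p_i}(-k)$ from Condition~\ref{cond01} together with the pointwise bound $\abs{v^i_p(k)} \leq c\abs{k}^{-\alpha}$ and $\omega(k) \geq (1+k^2)^{\beta/2} \geq \abs{k}^\beta$.

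The second step is to bound this $k$-integral uniformly in $(P',K)$ by a constant, which is exactly where Lemma~\ref{lem:scaling} enters. After discarding the $P\setminus p_i$ and $K$ contributions (they only make the denominator larger, hence the integral smaller, giving uniformity), one is left with an integral of the shape $\int \abs{k}^{-2\alpha} (\abs{p_i-k}^\gamma + \abs{k}^\beta)^{2\eta-2}\,\ud k$ — but one must be careful to keep enough of the $\abs{p_i - k}^\gamma$-term so that convergence at $k\to\infty$ holds; here is where the parameter $s$ appears as a bookkeeping device, splitting the two powers in the denominator between a part that controls the singularity at $k=0$ and a part that controls decay at infinity. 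Invoking Lemma~\ref{lem:scaling} with $\Lambda=0$, $\nu = 2\alpha$, $\sigma = 0$, $r = 1-\eta$ (adjusted by the $s$-splitting), and $\Omega$ playing the role of the accumulated $\Theta(P\setminus p_i) + \Omega(K)$ mass, the condition $d \in (\nu+\sigma, \nu+\sigma + r\gamma)$ becomes precisely the stated inequality $0 \leq \eta < \tfrac{1 + u(s) - s}{2}$, and the $\Omega$-exponent works out to be nonpositive, yielding the desired uniform bound. Tracking the power of $N = \ud\Gamma(1)$: the $\Omega(K)^{2\eta}$ numerator contributes at most $N^{\max(0,\ldots)}$, and combined with the symmetrisation factor one gets boundedness from $D(N^{\max(0,1-s)/2})$.

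The convergence $G_\Lambda \to G$ is then the third and essentially routine step: $G - G_\Lambda = -L^{-1} a^*((1-\chi_\Lambda) V)$, so repeating the above computation with the extra factor $1-\chi_\Lambda(k)$ in the integrand and using the full strength of Lemma~\ref{lem:scaling} (with $\Lambda > 0$, which produces the decaying prefactor $\Lambda^{-\beta\delta_\Lambda}$ for small $\delta>0$) shows $\norm{L^\eta(G-G_\Lambda)}_{D(N^{\max(0,1-s)/2}) \to \hilb} \lesssim \Lambda^{-\beta\delta} \to 0$. The main obstacle I anticipate is the second step: choosing the splitting parameter correctly so that Lemma~\ref{lem:scaling} applies with the right exponents, i.e.\ verifying that one can simultaneously control the ultraviolet behaviour (needs enough $\abs{p_i-k}^\gamma$ in the denominator, which pushes $\eta$ down) and the infrared behaviour near $k=0$ (needs $2\alpha < d$, already guaranteed by $\alpha \in [0,d/2)$), while keeping the $\Omega$-exponent $\leq 0$ so that the bound is genuinely uniform in the lower sectors rather than merely finite. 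This is a matter of carefully matching the roles of $\nu,\sigma,r$ in the lemma to the concrete integrand, but it is the only place where real care is required.
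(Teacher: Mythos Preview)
Your overall architecture --- Cauchy--Schwarz on the symmetrised sum, then Lemma~\ref{lem:scaling}, then the $\Lambda$-decay for convergence --- matches the paper. But you have misidentified the role of the parameter $s$, and this is exactly the step you flagged as ``the main obstacle''.

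The parameter $s$ is \emph{not} a device for splitting the denominator; it is the exponent of a \emph{weight} in the Cauchy--Schwarz step applied to the sum over the $n+1$ symmetrised boson variables. One inserts $\omega(k_j)^{\pm s/2}$ and bounds
\begin{align*}
\Bigl|\sum_{j=1}^{n+1} a_j\Bigr|^2 \leq \Bigl(\sum_j \frac{|a_j|^2}{\omega(k_j)^s}\Bigr)\Bigl(\sum_\mu \omega(k_\mu)^s\Bigr)\,.
\end{align*}
The second factor is then controlled via the elementary inequality $\sum_{\mu} \omega(k_\mu)^s \leq \omega(k_j)^s + n^{\max(0,1-s)}\Omega(\hat K_j)^s$ (subadditivity of $x\mapsto x^s$ for $s\leq 1$). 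This inequality is the \emph{sole} source of the factor $n^{\max(0,1-s)}$ and hence of the domain $D(N^{\max(0,1-s)/2})$. Your account --- that the $\sqrt{n+1}$ from symmetrisation is ``absorbed by $N^{1/2}$ when $s<1$, or is harmless when $s\geq 1$'' --- is not how this works: the $1/\sqrt{n+1}$ in the creation operator exactly turns the $j$-sum into a symmetrisation (which integrates to a single term), and no $n$-power survives from that step.

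On the other side of the Cauchy--Schwarz, the weight $\omega(k_j)^{-s}$ combines with $|v^i_{p_i}(k_j)|^2 \leq c|k_j|^{-2\alpha}$ and $\omega \geq |\cdot|^\beta$ to produce the integrand $|k|^{-2\alpha - \beta s}$ in Lemma~\ref{lem:scaling}, while the factor $\Omega(\hat K_j)^s$ from the $\mu$-sum remains in the numerator. After the change of variables and Lemma~\ref{lem:scaling}, the resulting $\Omega$-exponent is $2\bigl(\eta - \tfrac{1+u(s)-s}{2}\bigr)$, which is where the stated range for $\eta$ comes from. Your integral $\int|k|^{-2\alpha}(|p_i-k|^\gamma+|k|^\beta)^{2\eta-2}\,\ud k$ is only the $s=0$ case, and reinterpreting $s$ as a denominator-splitting parameter does not reproduce the correct coupling between the $N$-power in the domain and the admissible range of $\eta$.
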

\begin{proof}
We will prove a bound of the form $\norm{L^\eta (G-G_\Lambda) \psi} \leq f(\Lambda) \norm{N^{\max(0,1-s)/2}\psi}$ for a continuous function $f$ on $[0, \infty)$ which tends to zero as $\Lambda \rightarrow \infty$. This proves convergence. Boundedness follows by setting $\Lambda=0$. We write $V$ also for the variant of the interaction operator that acts on the $n$-th sector, i.e. $V \psi \uppar n= \sqrt{n+1} \mathrm{Sym}((V \otimes \mathbf{1}_{n})\psi \uppar n)$, where $V$ acts on $\Lz(\R^{d M})$. Sector-wise, the action of $G-G_\Lambda$ is given by
 \begin{align*}
(G-G_\Lambda) \psi\uppar{n}(P,K)
&
= - \sum_{i=1}^M (L^{-1} \mathrm{Sym}((V^i-V^i_\Lambda) \psi \uppar n))(P,K) 
\\
&
= \frac{-1}{\sqrt{n+1}} \sum_{i=1}^M \sum_{j=1}^{n+1} \frac{(1-\chi_\Lambda(k_j)) v^i_{p_i}(k_j) \psi \uppar n(P+e_i k_j,\hat{K}_j)}{L(P,K)}.
 \end{align*}
Here $\hat{K}_j$ denotes the variables $K$ with the $j$-th entry omitted. We will define $\xi_\Lambda(k_j):=1-\chi_\Lambda(k_j)$. Observe that it is sufficient to estimate the norm of $ L^\eta  (G-G_\Lambda) \psi \uppar n$ by the sum over the norms of $\kappa_i \psi \uppar n := L^{-(1-\eta)} \mathrm{Sym}((V^i-V^i_\Lambda) \psi \uppar n)$. To do so, we use the finite dimensional Cauchy-Schwarz inequality and obtain
  \begin{align*}
\abs{\kappa_i \psi \uppar n (P,K)}^2
&
\leq
 (n+1)^{-1} \sum_{j,\mu=1}^{n+1}  \frac{\abs{\xi_\Lambda(k_j) v^i_{p_i}(k_j)}^2 \abs{\psi \uppar n(P+e_i k_j,\hat{K}_j)}^2}{L(P,K)^{2(1-\eta)} \omega(k_j)^s } \omega(k_\mu)^s   \, .
 \end{align*}
Using the inequality 
\begin{align}
\label{eq:sum}
\sum_{i=1}^n \omega(k_i)^s \leq {n}^{\max(0,1-s)} \Omega(K)^{s} \, ,
\end{align} 
we can bound the $\mu$-sum by $\omega(k_j)^s + {n}^{\max(0,1-s)} \Omega(\hat{K}_j)^{s}$. Then we use the assumptions $\abs{v^i_{p_i}(k)} \leq c \abs{k}^{-\alpha}$ and $\omega(k) \geq \abs{k}^{\beta}$ as well as $v^i_{p_i-k}(k) =v^i_{p_i}(-k) $ and obtain for the translated $\abs{\kappa_i \psi \uppar n (P-e_i k_j,K)}^2$ the bound 
  \begin{align}
&(n+1)^{-1} \sum_{j=1}^{n+1}  \frac{c^2 \xi_\Lambda(k_j) \abs{\psi \uppar n(P,\hat{K}_j)}^2}{  L(P-e_i k_j,K)^{2(1-\eta)} } \left( {n}^{\max(0,1-s)} \abs{k_j}^{-2\alpha-\beta s} \Omega(\hat{K}_j)^{s}   +  \abs{k_j}^{-2\alpha} \right)
\nonumber \\
&
=\mathrm{Sym}_k\left[ \frac{c^2 \xi_\Lambda(k_1) \abs{\psi \uppar n(P,\hat{K}_{1})}^2}{  L(P-e_i k_{1},K)^{2(1-\eta)} } ( {n}^{\max(0,1-s)} \abs{k_{1}}^{-2\alpha-\beta s} \Omega(\hat{K}_{1})^{s}   +  \abs{k_{1}}^{-2\alpha} )\right] \, .
\label{eq:decompofkappa}
 \end{align}
Here we have used the symmetry of $\psi$ and $L$. Now bound $L(P-e_i k_{1},K)$ from below by $\abs{p_i-k_1}^\gamma+\abs{k_1}^\beta+\Omega(\hat{K}_1)$ and recall that Condition~\ref{cond2} implies in particular $\beta >0$.  This together with $u(s)<1$ implies that the hypothesis of Lemma~\ref{lem:scaling} is fulfilled for the first term in~\eqref{eq:decompofkappa} and consequently
  \begin{align}
  \label{eq:exponentneg1}
& c^2   \int_{\R^d}   \frac{\xi_\Lambda(k_1) \Omega(\hat{K}_{1})^{s} \abs{k_{1}}^{-2\alpha-\beta s} }{  L(P-e_i k_{1},K)^{2(1-\eta)} }   \, \ud k_1 \leq C \Omega(\hat{K}_1)^{2 (\eta-1)+\frac{d-2\alpha-\beta s}{\gamma}+s+\delta_\Lambda} \Lambda^{-\beta  \delta_\Lambda}
\, ,
 \end{align}
 where $\delta_\Lambda:= \delta (1-\chi_{[0,1]}(\Lambda))$. If $\delta>0$ is small enough, then
 \begin{align*}
2 (\eta-1)+\frac{d-2\alpha-\beta s}{\gamma}+s+\delta_\Lambda = 2 \left(\eta -  \frac{1+u(s) - s}{2} \right) +\delta_\Lambda < 0 \, .
 \end{align*} 
 Because $\Omega \geq 1$, that means that we can simply estimate $\Omega^{2 (\eta -  \frac{1+u(s) - s}{2} ) +\delta_\Lambda} \leq 1$  in \eqref{eq:exponentneg1}.

The corresponding bound for the second term of~\eqref{eq:decompofkappa} follows by setting $s=0$. Because the function $u(s)-s$ is non-increasing it holds that $2 (\eta-1)+\frac{d-2\alpha}{\gamma}+\delta_\Lambda <0$ for the same choice of $\delta>0$. Integrating in the remaining variables $(P,\hat{K}_1)$ yields the claim.
\end{proof}

\begin{cor}
\label{lem:gonfockspace}
Assume $0\leq \D < \beta$. There exists  an $\eta \in (0,1/2)$ such that $G$ is a continuous operator from $\hilb$ to $D(L^\eta)$ and $G_\Lambda \rightarrow G$ in norm as operators in $\mathcal{L}(\hilb,D(L^\eta))$. In particular, if $\beta=\gamma$, for any $\eps>0$ small enough we can choose $\eta= \frac{1-\D/\gamma}{2}-\eps$.
\end{cor}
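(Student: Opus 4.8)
The plan is to derive Corollary~\ref{lem:gonfockspace} as a direct specialisation of Proposition~\ref{prop:generalboundong}. The key observation is that Proposition~\ref{prop:generalboundong} is parametrised by $s \geq 0$, and the corollary corresponds to the choice $s = 1$, which makes the domain condition $D(N^{\max(0,1-s)/2}) = D(N^0) = \hilb$ (since $\max(0,1-1)=0$) and thus yields continuity on the full Fock space. So first I would set $s=1$ in the proposition. With $s=1$ the affine transformation gives $u(1) = \frac{\beta}{\gamma} - \frac{\D}{\gamma} = \frac{\beta-\D}{\gamma}$, and the hypothesis $u(s)<1$ becomes $\beta - \D < \gamma$, i.e. $\D > \beta - \gamma$, which is automatic since $\D \geq 0$ and $\beta \leq \gamma$ by assumption. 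So the proposition is applicable.

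Next I would compute the range of admissible $\eta$. Proposition~\ref{prop:generalboundong} with $s=1$ gives boundedness $G \colon \hilb \to D(L^\eta)$ (together with norm convergence $G_\Lambda \to G$) for all $0 \leq \eta < \frac{1+u(1)-1}{2} = \frac{u(1)}{2} = \frac{\beta-\D}{2\gamma}$. Since the corollary assumes $0 \leq \D < \beta$, the right endpoint $\frac{\beta - \D}{2\gamma}$ is strictly positive, so there genuinely exists an $\eta \in (0, \frac{\beta-\D}{2\gamma})$; moreover $\frac{\beta-\D}{2\gamma} \leq \frac{\beta}{2\gamma} \leq \frac12$ because $\beta \leq \gamma$, so any such $\eta$ lies in $(0,1/2)$ as claimed. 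This establishes the first two assertions.

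Finally, for the last sentence I would specialise to $\beta = \gamma$. Then $u(1) = \frac{\gamma - \D}{\gamma} = 1 - \D/\gamma$, so the range of allowed exponents is $0 \leq \eta < \frac{1 - \D/\gamma}{2}$. Hence for any sufficiently small $\eps > 0$ the value $\eta = \frac{1-\D/\gamma}{2} - \eps$ is a legitimate choice (it is positive once $\eps < \frac{1-\D/\gamma}{2}$, and this is nonempty precisely because $\D < \gamma$, which follows from Condition~\ref{cond2}), and the conclusion of the proposition applies with this $\eta$. This completes the proof.

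I do not expect a genuine obstacle here: the corollary is essentially a bookkeeping consequence of the more general proposition, the only points requiring a line of justification being (i) checking that $s=1$ is an admissible parameter, i.e. that $u(1)<1$ under the standing hypotheses $0 \leq \D$, $\beta \leq \gamma$, and (ii) verifying that the resulting interval for $\eta$ is non-degenerate under the corollary's extra assumption $\D < \beta$ (respectively $\D < \gamma$ in the case $\beta = \gamma$). If anything is mildly delicate it is just making sure the inequalities $\frac{\beta - \D}{2\gamma} \in (0, 1/2]$ are tracked correctly so that $\eta$ can indeed be taken in $(0,1/2)$.
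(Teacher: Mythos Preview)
Your argument has a genuine gap in the boundary case $\D=0$, $\beta=\gamma$. You claim that $u(1)<1$, i.e.\ $\D>\beta-\gamma$, is ``automatic since $\D\geq 0$ and $\beta\leq\gamma$'', but these hypotheses only give $\D\geq\beta-\gamma$; equality occurs precisely when $\D=0$ and $\beta=\gamma$, and then $u(1)=1$, so Proposition~\ref{prop:generalboundong} is \emph{not} applicable with $s=1$. This case is not marginal: it is exactly the situation of both concrete models treated in the paper (Eckmann's and Gross' models both have $\D=0$ and $\beta=\gamma=1$), so the corollary must cover it.

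The paper's proof splits off this case and chooses $s_\eps=1-\eps$ instead of $s=1$. Then $u(s_\eps)=1-\eps<1$ and Proposition~\ref{prop:generalboundong} yields a bound of the form
\[
\norm{L^{(1-\eps)/2}(G-G_\Lambda)\psi\uppar n}_{\hilb\uppar{n+1}}\leq C(\Lambda)\,(1+n^{\eps/2})\,\norm{\psi\uppar n}_{\hilb\uppar n},
\]
with the factor $n^{\eps/2}$ coming from $\max(0,1-s_\eps)=\eps$. One then absorbs this growth in $n$ by trading a little of the $L$-power on the left (using $L\geq N$ on $\hilb\uppar{n+1}$), which yields the stated mapping $G:\hilb\to D(L^{1/2-\eps})$. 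In all other cases ($\beta<\gamma$, or $\beta=\gamma$ with $\D>0$) your choice $s=1$ is fine and matches the paper's second case. So your overall strategy is right; you just need to handle $\D=0$, $\beta=\gamma$ separately with $s$ slightly below~$1$.
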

\begin{proof}
 We apply Proposition~\ref{prop:generalboundong}, distinguishing two cases. First, if $\D=0$ and $\beta=\gamma$, then $u(s)=s$ and we choose, for some $\eps>0$, $s_\eps=1-\eps$ and $\eta_\eps=\frac{1-\eps}2$.
 Proposition~\ref{prop:generalboundong} then gives the bound 
 \begin{equation*}
\norm{ L^{\frac{1-\eps}2}(G-G_\Lambda) \psi}_{\hilb \uppar{n+1}} \leq C(\Lambda) (1+n^{\eps/2}) \norm{\psi\uppar n}_{\hilb \uppar n} \, ,
 \end{equation*}
with $C(\Lambda) \rightarrow 0$ as $\Lambda \rightarrow \infty$. This shows that $G$ maps $\hilb $ to $D(L^{1/2-\eps})$ for all $0<\eps\leq \frac12$ in this case and that $G_\Lambda \rightarrow G$ in $\mathcal{L}(\hilb,D(L^\eta))$.
 
 In all other cases, we have $u(1)=(\beta-\D)/\gamma<1$ and we may choose in Proposition~\ref{prop:generalboundong} $s=1$ and any $0\leq\eta<\frac{\beta-\D}{2 \gamma}$.
\end{proof}

\begin{lem}
\label{lem:Dposnisleftinv}
Let $0\leq \D < \beta$. Then $1-G$ is invertible and there exists a constant $C>0$ such that
\begin{align}
\label{eq:Dposnisleftinv}
\norm{N \psi }_\hilb \leq C (\norm{N (1-G) \psi }_\hilb + \norm{ \psi }_\hilb ) \, .
\end{align}
\end{lem}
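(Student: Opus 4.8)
The plan is to exploit that $G$ raises the boson number by exactly one, so that it acts sectorwise as a family $G_n\in\mathcal{L}(\hilb\uppar n,\hilb\uppar{n+1})$, and to show that these operators become small for large $n$. The first step is to read off from Corollary~\ref{lem:gonfockspace} a sectorwise bound $\norm{G_n}\leq C_0\,(n+1)^{-\eta}$ with a \emph{strictly positive} $\eta\in(0,\tfrac12)$. Indeed, since $0\leq\D<\beta$ that corollary gives $G\in\mathcal{L}(\hilb,D(L^\eta))$ for such an $\eta$; restricting to a single sector yields $\norm{L^\eta G_n}\leq\norm{G}_{\mathcal{L}(\hilb,D(L^\eta))}$, and on $\hilb\uppar{n+1}$ the operator $L$ is multiplication by $L(P,K)=\Theta(P)+\Omega(K)\geq\Omega(K)\geq n+1$ (using $\omega\geq1$), so $L^{-\eta}$ restricted to $\hilb\uppar{n+1}$ has norm at most $(n+1)^{-\eta}$; writing $G_n=L^{-\eta}\bigl(L^{\eta}G_n\bigr)$ gives the claim.

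The second step is to iterate. On $\hilb\uppar n$ one has $G^k=G_{n+k-1}\cdots G_n$ into $\hilb\uppar{n+k}$, hence $\norm{G^k|_{\hilb\uppar n}}\leq C_0^k\prod_{j=1}^k(n+j)^{-\eta}\leq C_0^k(k!)^{-\eta}$, and since distinct source sectors are sent into mutually orthogonal target sectors this gives $\norm{G^k}_{\mathcal{L}(\hilb)}\leq C_0^k(k!)^{-\eta}$. Because $\eta>0$, $\sum_{k\geq0}C_0^k(k!)^{-\eta}<\infty$, so the Neumann series $\sum_{k\geq0}G^k$ converges in operator norm and is a bounded inverse of $1-G$; this proves invertibility.

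For the bound~\eqref{eq:Dposnisleftinv} I would rerun the same estimate carrying the extra polynomial weight of the number operator. For $\psi$ with $\phi:=(1-G)\psi\in D(N)$ — equivalently, by the sectorwise bound of the first step, for $\psi\in D(N)$ — we have $\psi=\sum_{k\geq0}G^k\phi$, and the sectorwise bound together with $(n+k)^2\leq2n^2+2k^2$ yields $\norm{N\,G^k\phi}\leq\sqrt2\,C_0^k(k!)^{-\eta}\bigl(\norm{N\phi}+k\norm{\phi}\bigr)$. Since $\sum_k(k+1)C_0^k(k!)^{-\eta}<\infty$, the series $\sum_k N\,G^k\phi$ converges, so closedness of $N$ shows $\psi\in D(N)$ with $N\psi=\sum_k N\,G^k\phi$; hence $\norm{N\psi}\leq C\bigl(\norm{N\phi}+\norm{\phi}\bigr)\leq C\bigl(\norm{N(1-G)\psi}+\norm{1-G}\,\norm{\psi}\bigr)$, which is~\eqref{eq:Dposnisleftinv} after renaming the constant.

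The one genuinely substantive input is the positivity of $\eta$ in the sectorwise bound; it is exactly what turns the Neumann series into one with factorial decay, so that it keeps converging after multiplication by the polynomial weight of $N$. Producing this $\eta>0$ is the content of Corollary~\ref{lem:gonfockspace} (and hence of Proposition~\ref{prop:generalboundong}), where the hypothesis $\D<\beta$ is used; everything else is bookkeeping.
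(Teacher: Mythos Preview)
Your proof is correct. The paper does not actually prove the lemma but cites \cite[Lemma~2.4]{nelsontype}; your argument --- extracting the sectorwise bound $\norm{G_n}\le C_0(n+1)^{-\eta}$ from Corollary~\ref{lem:gonfockspace} via $L\ge n+1$ on $\hilb\uppar{n+1}$, and then summing the Neumann series together with the polynomial weight of $N$ using the resulting factorial decay $\norm{G^k}\le C_0^k(k!)^{-\eta}$ --- is the standard route and almost certainly what the cited reference does.

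One cosmetic point: the parenthetical ``equivalently, by the sectorwise bound of the first step, for $\psi\in D(N)$'' overstates what the first step gives. The sectorwise bound only yields the implication $\psi\in D(N)\Rightarrow(1-G)\psi\in D(N)$ (since $G$ is then bounded on $D(N)$); the converse is precisely what your subsequent estimate establishes via closedness of $N$. This is not a gap, because your argument never uses the converse as an input --- you start from $\phi\in D(N)$ and deduce $\psi\in D(N)$, which is exactly what is needed.
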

\begin{proof}
See \cite[Lemma 2.4]{nelsontype}.
\end{proof}

We can now define what will be the domain of our Hamiltonian. We choose $D(H) := \{\psi \in \hilb \vert (1-G) \psi \in D(L) \} = (1-G)^{-1} D(L) $. Since $a(V) L^{-1}=-G^*$ is a continuous operator on $\hilb$, the annihilation operator $a(V)$ is well defined on $D(L)$. It is however not defined on the range of $G$, because $G$ does not map into $D(L^{1/2})$. In the next section we will extend the action of $a(V)$ in a suitable way to elements of the form $G \phi$.

\subsection{The extension of the annihilation operator}
In this section we will extend the annihilation operator $a(V)$ to $D(H) = \{\psi \in \hilb \vert (1-G) \psi \in D(L) \}$. Decomposing elements $\phi \in D(H)$ as $\phi =(1-G)\phi +G\phi $, we observe that $a(V)$ is well defined on $(1-G)\phi$ but not on $G \phi$. For that reason, we have to define an operator $T$, which is a regularised version of the operator $ a(V) G$. The formal expression for the latter is given by
\begin{align}
&a(V)  G \psi \uppar n(P,\hat K_{n+1}) 
\label{eq:Tformally v1} 
\\ \nonumber
&
= \sqrt{n+1} \sum_{\ell=1}^M \int_{\R^d} \overline{v^\ell_{p_\ell-k_{n+1}}(k_{n+1}) } G \psi \uppar n (P-e_\ell k_{n+1},K) \, \ud k_{n+1} 
\\ \nonumber
&= \begin{aligned}[t] - \sum_{i,\ell=1}^M  \sum_{j=1}^{n+1} \int_{\R^d} & \overline{v^\ell_{p_\ell-k_{n+1}}(k_{n+1})} v^i_{p_i-\delta_{\ell i} k_{n+1}}(k_j)
\\
& \times \frac{  \psi \uppar n(P-e_\ell k_{n+1}+e_i k_j,\hat{K}_j)}{L(P-e_\ell k_{n+1},K)} \, \ud k_{n+1} \, .
\end{aligned}
\end{align} 
Here $\delta_{\ell i}$ denotes the usual Kronecker-delta. The integrals in the terms where $j=k_{n+1}$ and $\ell =i$ do not converge in general. In order to obtain a well defined operator, we have to replace the integrals in these so called \textit{diagonal parts} of the sum by regularised ones. To do so we employ the assumption $v^\ell_{p-k}(k) = v^\ell_{p}(-k) $ for all $\ell$ and set
\begin{align}
\label{eq:intergalI}
& I_\ell(P,\hat{K}_{n+1})  
:= 
\int_{\R^d} \frac{ \abs{v^\ell_{p_\ell}(-k_{n+1})}^2 }{L(P- e_\ell k_{n+1},K) } - \frac{ \abs{v^\ell_{p_\ell}(-k_{n+1})}^2 }{\Theta(k_{n+1})+\omega(k_{n+1}) } \, \ud k_{n+1}
\end{align}
and
\begin{align}
\label{eq:intergalJ}
& J(p_\ell)  
:= 
\int_{\R^d}  \frac{ \abs{v^\ell_{p_\ell}(-k_{n+1})}^2}{\Theta(k_{n+1})+\omega(k_{n+1}) } - \frac{\abs{v^\ell_{p_\ell}(-k_{n+1})}^2 }{\Theta(p_\ell-k_{n+1})+\omega(k_{n+1}) } \, \ud k_{n+1} \, .
\end{align}
Then we define two variants of the \textit{diagonal} part of the operator $T$:
\begin{align}
\label{eq:Tddef}
T_\ud^\nu \phi \uppar n (P,\hat{K}_{n+1})  := \begin{cases} - \sum_{\ell=1}^M I_\ell(P,\hat{K}_{n+1}) \phi \uppar n (P,\hat{K}_{n+1}) & \nu =1 \\
 -  \sum_{\ell=1}^M (I_\ell(P,\hat{K}_{n+1})+J(p_\ell)) \phi \uppar n (P,\hat{K}_{n+1}) & \nu =2 \, .\end{cases}
\end{align}

The remaining expressions in~\eqref{eq:Tformally v1} constitute the \textit{off-diagonal} part of $T$. There is no need to regularise these expressions; it can be shown that they are well defined on suitable spaces:
\begin{align}
T_\uod& \phi \uppar n (P,\hat{K}_{n+1})
\label{eq:Toffdiagdef v1} \\
:=
&
  -  \sum_{\substack{i,\ell=1\\i\neq \ell}}^M   \int_{\R^d} \overline{v^\ell_{p_\ell}(-k_{n+1})}  \frac{ v^i_{p_i}(k_{n+1}) \psi \uppar n(P+ (e_i-e_\ell) k_{n+1},\hat{K}_{n+1})}{L(P-e_\ell k_{n+1},K)} \, \ud k_{n+1}  \nonumber
  \\
  &
  - \sum_{i,\ell=1}^M  \sum_{j=1}^{n} \int_{\R^d} \overline{v^\ell_{p_\ell}(-k_{n+1})}  \frac{ v^i_{p_i-\delta_{\ell i} k_{n+1}}(k_j) \psi \uppar n(P-e_\ell k_{n+1}+e_i k_j,\hat{K}_j)}{L(P-e_\ell k_{n+1},K)} \, \ud k_{n+1} \nonumber .
\end{align}
We define for $\nu \in \{1,2\}$ the operator 
\begin{align}
\label{eq:Tdef}
T^\nu \phi \uppar n  := T^\nu_\ud \phi \uppar n  + T_\uod \phi \uppar n 
\end{align} 
sector-wise, by the expressions above, on a domain that will be specified in Proposition~\ref{prop:Tmainprop1} below. 
\begin{rem}
\label{rem:choice}
Clearly, the choice of $T_\ud$ is not unique. There are, however, several possible criteria why to prefer one regularisation over the other. First of all, if $v=\hat{\delta}$ and $\Theta$ and $\omega$ are quadratic, then the theory allows for a convenient intepretation in the position representation. It is most natural to define $T \phi$ as the constant part in an asymptotic expansion of $G \phi$ as $y_{n+1} \rightarrow x_i$. For more details, see \cite[Rem. 3.4]{nelsontype}. In Fourier representation, this choice corresponds to $\nu=1$.

In general, observe that, formally, $T^\nu_\ud$ is equal to the unregularised diagonal part plus $E_{\Lambda=\infty}$, the counter term at infinity. This will be made rigorous in the proof of Proposition~\ref{prop:renorm} below. If $v^i_p=v^i$ are independent of $p$, then choosing $\nu=1$ means that $H^\nu$ can be approximated by a cutoff operator where the sequence of counter terms does not depend on $p$, i.e., is in fact an actual constant. This is the choice that has been made by Nelson and also in \cite{nelsontype}. If the form factors $v^i_p$ do however depend on $p$, then choosing the variant $\nu=2$, as Eckmann did, seems a viable option because $E_\Lambda$ will anyway be an operator. Albeverio has noted in \cite{alb1973} that the counter term used by Eckmann has ``the disadvantage of not having the correct relativistic spectrum of the physical one nucleon energies''. We can make the following observation: On any sector, the operator $T_\ud^{\nu=2}$ is given by a bounded function of $P$. In particular, for $M=1$ the full operator $T^{\nu=2}$ equals zero on the lowest sector (which corresponds to no bosons).
\end{rem}
We will in the next Lemmas prove the main results about the various parts of $T$ and how to approximate them. We remark that the notation for $T^\nu_{\ud, \Lambda}$ differs from the one that has been used in \cite{nelsontype}.
\begin{lem}
\label{lem:Tdiag}
Assume Condition~\ref{cond02} and let $0 \leq \D < \gamma$. Then, for any $\nu \in \{1,2\}$ and any $\eps>0$ small enough, the operators $T_\ud^\nu$ defined in \eqref{eq:Tddef} are symmetric operators on the domain $D(L^{\D/\gamma+\eps})$. Let $T^\nu_{\ud, \Lambda}$ be the same operator with $v^i_p$ replaced by $\chi_\Lambda v^i_p$ and assume Condition~\ref{cond1}. Then $T^\nu_{\ud, \Lambda} \rightarrow  T^\nu_{\ud}$ in norm as operators on $\mathcal{L}(D(L^{\D/\gamma+\eps}),\hilb)$.
\end{lem}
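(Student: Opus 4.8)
The plan is to reduce the lemma to one uniform pointwise estimate on a multiplication operator. By \eqref{eq:Tddef}, on the $n$-boson sector both $T_\ud^1$ and $T_\ud^2$ act as multiplication by a function of $(P,\hat K_{n+1})$ --- by $-\sum_{\ell=1}^M I_\ell$ for $\nu=1$ and by $-\sum_{\ell=1}^M(I_\ell+J(p_\ell))$ for $\nu=2$ --- and this function is \emph{real-valued}, because the integrands in \eqref{eq:intergalI} and \eqref{eq:intergalJ} are real. Hence symmetry on $D(L^{\D/\gamma+\eps})$ is automatic once $T_\ud^\nu$ is known to map that domain into $\hilb$. Since $L$ acts on $\hilb\uppar n$ as multiplication by $\Theta(P)+\Omega(\hat K_{n+1})$, both the mapping property and the convergence $T^\nu_{\ud,\Lambda}\to T^\nu_\ud$ in $\mathcal{L}(D(L^{\D/\gamma+\eps}),\hilb)$ follow by Cauchy--Schwarz over Fock space from a bound
\begin{align*}
\abs{I_\ell(P,\hat K_{n+1})}+\abs{J(p_\ell)}\leq F(\Lambda)\,\bigl(1+\Theta(P)+\Omega(\hat K_{n+1})\bigr)^{\D/\gamma+\eps}
\end{align*}
that is uniform in $\ell$, in the sector index $n$ and in the arguments --- with $F(0)$ a finite constant for the boundedness and symmetry statement, and, after inserting the factor $1-\chi_\Lambda$ into every integral, with $F(\Lambda)\to0$ for the convergence statement.

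The organising quantity is the \emph{spectator energy} $A:=\sum_{i\neq\ell}\Theta(p_i)+\Omega(\hat K_{n+1})\geq0$, which is exactly the part of $L(P-e_\ell k_{n+1},K)=\Theta(p_\ell-k_{n+1})+\omega(k_{n+1})+A$ that is not compensated by the subtracted term in \eqref{eq:intergalI}. Adding and subtracting $(\Theta(p_\ell-k_{n+1})+\omega(k_{n+1}))^{-1}$ under the integral splits $I_\ell=I_\ell^{(1)}+I_\ell^{(2)}$, where
\begin{align*}
I_\ell^{(2)}=\int_{\R^d}\abs{v^\ell_{p_\ell}(-k)}^2\,\frac{\Theta(k)-\Theta(p_\ell-k)}{(\Theta(p_\ell-k)+\omega(k))(\Theta(k)+\omega(k))}\,\ud k
\end{align*}
is precisely the integral controlled by Condition~\ref{cond02}, and by Condition~\ref{cond1} after inserting $1-\chi_\Lambda$, while $I_\ell^{(1)}=-\int_{\R^d}\abs{v^\ell_{p_\ell}(-k)}^2\,\frac{A}{(\Theta(p_\ell-k)+\omega(k)+A)(\Theta(p_\ell-k)+\omega(k))}\,\ud k$ is the correction due to the spectators. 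Since $J(p_\ell)=-I_\ell^{(2)}$, for $\nu=2$ the multiplier collapses to $-\sum_\ell I_\ell^{(1)}$ --- which, incidentally, shows why $T^{\nu=2}$ vanishes on the lowest sector when $M=1$, as there $A\equiv0$.

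For $I_\ell^{(2)}$ nothing further is needed: Conditions~\ref{cond02}/\ref{cond1} give $\abs{I_\ell^{(2)}}\leq F(\Lambda)(\abs{p_\ell}^{\D+\gamma\eps}+1)$, and $\abs{p_\ell}^{\D+\gamma\eps}+1\leq2(1+\abs{p_\ell}^\gamma)^{\D/\gamma+\eps}\leq2(1+\Theta(P)+\Omega(\hat K_{n+1}))^{\D/\gamma+\eps}$ because $\Theta(p_\ell)\geq\abs{p_\ell}^\gamma$. For $I_\ell^{(1)}$ I would interpolate: with $X:=\Theta(p_\ell-k)+\omega(k)\geq1$, one has $\frac{A}{(X+A)X}\leq\frac{A^\theta}{X^{1+\theta}}$ for every $\theta\in[0,1]$ (interpolating $\tfrac{A}{X+A}\leq1$ with $\tfrac{A}{X+A}\leq\tfrac AX$). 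Using $\abs{v^\ell_{p_\ell}(-k)}\leq c\abs{k}^{-\alpha}$ and the lower bounds $\Theta(p_\ell-k)\geq\abs{p_\ell-k}^\gamma$ and $\omega(k)\geq(1+k^2)^{\beta/2}$ (so that $X$ dominates a fixed multiple of $\abs{p_\ell-k}^\gamma+\abs{k}^\beta+1$) then gives
\begin{align*}
\abs{I_\ell^{(1)}}\leq C\,A^\theta\int_{\R^d}(1-\chi_\Lambda(k))\,\frac{\abs{k}^{-2\alpha}}{(\abs{p_\ell-k}^\gamma+\abs{k}^\beta+1)^{1+\theta}}\,\ud k,
\end{align*}
and Lemma~\ref{lem:scaling}, applied with its parameters equal to $2\alpha$, $0$, $1+\theta$ and with $\Omega=1$, bounds this integral by $C\Lambda^{-\beta\delta_\Lambda}$: its hypothesis $d\in(2\alpha,2\alpha+(1+\theta)\gamma)$ reduces to $\theta>\D/\gamma$, and since $\D/\gamma<1$ (as $\D<\gamma$) one may choose $\theta$ in the nonempty window $(\D/\gamma,\D/\gamma+\eps]\cap(0,1]$. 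As $A\leq\Theta(P)+\Omega(\hat K_{n+1})$, this yields $\abs{I_\ell^{(1)}}\leq C(1+\Theta(P)+\Omega(\hat K_{n+1}))^{\D/\gamma+\eps}\Lambda^{-\beta\delta_\Lambda}$; taking $\Lambda=0$ gives boundedness and hence symmetry, and $\Lambda^{-\beta\delta_\Lambda}\to0$ as $\Lambda\to\infty$ gives the convergence. Summing over $\ell=1,\dots,M$ and combining with the cutoff bound for $I_\ell^{(2)}$ from Condition~\ref{cond1} completes the proof.

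The one genuinely delicate point is the correction $I_\ell^{(1)}$: the subtraction built into \eqref{eq:intergalI} is tuned to $A=0$, so one must show that the (pointwise unbounded) energy of the spectator nucleons and of the $n$ bosons re-emerges, after the estimate, as a controlled power of the free operator $L$. The interpolation exponent $\theta$ must be strictly above $\D/\gamma$ for Lemma~\ref{lem:scaling} to apply --- equivalently, for the $k$-integral to converge --- but at most $\D/\gamma+\eps$ so that $A^\theta$ is dominated by $L^{\D/\gamma+\eps}$; the nonemptiness of this window is exactly the hypothesis $\D<\gamma$, and it is what forces the restriction to small $\eps$.
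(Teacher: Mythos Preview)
Your proof is correct and follows essentially the same approach as the paper: both reduce the spectator correction $I_\ell^{(1)}$ to Lemma~\ref{lem:scaling} and the $\Theta$-difference $I_\ell^{(2)}=-J(p_\ell)$ to Condition~\ref{cond02}/\ref{cond1}, with the paper treating $\nu=2$ (which is your $I_\ell^{(1)}$) first and then adding $J$ for $\nu=1$. Your interpolation $\tfrac{A}{(X+A)X}\leq\tfrac{A^\theta}{X^{1+\theta}}$ is a tidy way to unify the paper's separate $\D>0$ and $\D=0$ cases when invoking Lemma~\ref{lem:scaling}, but the structure and tools are the same.
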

\begin{proof}
We will prove a bound of the form $\norm{(T^\nu_{\ud, \Lambda} -  T^\nu_{\ud}) \psi} \leq f(\Lambda) \norm{L^{\D/\gamma+\eps} \psi}$ for a continuous function $f$ on $[0, \infty)$ which tends to zero as $\Lambda \rightarrow \infty$. This proves convergence. Boundedness follows by setting $\Lambda=0$. The integrals~\eqref{eq:intergalI} and~\eqref{eq:intergalJ} defining $T^\nu_\ud $ are real, so $T^\nu_\ud$ is a real Fourier multiplier. First, let $\nu =2$, define $\xi_\Lambda(q):=1-\chi_\Lambda(q)$ and observe that the action of $T^\nu_{\ud, \Lambda} -  T^\nu_{\ud}$ is given by a sum over $M$ terms of the form 
\begin{align*}
 \int_{\R^d}& \frac{\xi_\Lambda(q) \abs{v_{p}(-q)}^2 }{L(p- q,K,q) }  - \frac{\xi_\Lambda(q) \abs{v_{p}(-q)}^2 }{\Theta(p-q)+\omega(q) }  \, \ud q
\\
&
=  \int_{\R^d}   \frac{-\xi_\Lambda(q)\abs{v_{p}(-q)}^2 \Omega({K})}{L(p- q,K,q)(\Theta(p-q)+\omega(q)) } \, \ud q \, .
\end{align*}
Note that this vanishes for $n=0$. If $\gamma > \D >0$ the absolute value of the integral can, using Lemma~\ref{lem:scaling}, be bounded by
\begin{align*}
\int_{\R^d}   \frac{\xi_\Lambda(q)  \abs{q}^{-2 \alpha} \Omega({K})}{(\abs{p-q}^\gamma+ \abs{q}^\beta + \Omega(K)) \abs{p-q}^\gamma } \, \ud q \leq C \Omega({K})^{\D/\gamma+\delta_\Lambda} \Lambda^{-\beta \delta_\Lambda} 
\end{align*}
with $\delta_\Lambda:= \delta (1-\chi_{[0,1]}(\Lambda))$ and $\delta>0$ small enough. If $\D =0$ however, we estimate the integral for any $\eps \in (0,2)$ by
\begin{align*}
\int_{\R^d}   \frac{\xi_\Lambda(q)  \abs{q}^{-2 \alpha} \Omega({K})}{(\abs{p-q}^\gamma+ \abs{q}^\beta + \Omega(K)) \abs{p-q}^{\gamma(1-\eps/2)} } \, \ud q \leq C \Omega({K})^{\eps/2+\delta_\Lambda} \Lambda^{-\beta \delta_\Lambda} \, .
\end{align*}
choosing $\delta=\eps/2$ small enough, this shows (because $\Omega(K
)\geq 1$) that $T_\ud^{\nu=2}$ is symmetric on $D(L^{\D/\gamma+\eps})$ and that $ T^{\nu=2}_{\ud}-T^{\nu=2}_{\ud, \Lambda} \rightarrow 0$ in norm. According to Condition~\ref{cond02}, for any $\eps >0$ we have
\begin{align*}
 \int_{\R^d} \frac{ \abs{v_{p}(-q)}^2(\Theta(q)-\Theta(p-q))}{(\Theta(p-q)+\omega(q))(\Theta(q)+\omega(q))} \, \ud q 
 &\leq C \left(\left(\abs{p}^{\gamma}\right)^{\D/\gamma+\eps}+1\right) 
\\
&
 \leq C \left(\Theta(p)^{\D/\gamma+\eps}+1\right) \, .
\end{align*}
If we assume Condition~\ref{cond1}, we even have 
\begin{align*}
 \int_{\R^d} \frac{\xi_\Lambda(q) \abs{v_{p}(-q)}^2(\Theta(q)-\Theta(p-q))}{(\Theta(p-q)+\omega(q))(\Theta(q)+\omega(q))} \, \ud q 
 &
 \leq F(\Lambda) \left(\Theta(p)^{\D/\gamma+\eps}+1\right) 
\end{align*}
for some function $F\in C_0[0, \infty)$. This shows the claims for $T_\ud^{\nu=1}$ as well.
\end{proof}
We will now separate two different terms in $T_\uod$, see~\eqref{eq:Toffdiagdef v1}. First, define
\begin{align}
\label{eq:thetadef}
\theta_{i \ell} \psi \uppar n := \int_{\R^d} \overline{v^\ell_{p_\ell}(-k_{n+1})}  \frac{ v^i_{p_i}(k_{n+1}) \psi \uppar n(P+ (e_i-e_\ell) k_{n+1},\hat{K}_{n+1})}{L(P-e_\ell k_{n+1},K)} \, \ud k_{n+1} \, .
\end{align}
Without loss of generality, we will specify to $(i,\ell) =(1,2)$. 
\begin{lem}
\label{lem:DposT1}
Assume $\D\geq 0$. For any $\eps >0$ small enough the operator $\theta_{1 2}$ defined in \eqref{eq:thetadef} is continuous from $D(L^{\D/\gamma+\eps})$ to $\hilb$ and $\theta_{1 2}+\theta_{2 1}$ is symmetric on this domain. Let $\theta_{1 2,\Lambda}$ be the same operator with $v^i_p$ replaced by $\chi_\Lambda v^i_p$. Then $\theta_{1 2,\Lambda} \rightarrow  \theta_{1 2}$ in norm as operators on $\mathcal{L}(D(L^{\D/\gamma+\eps}),\hilb)$.
\end{lem}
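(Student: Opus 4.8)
The plan is to argue exactly as for $T_\ud^\nu$ in Lemma~\ref{lem:Tdiag} and for $G$ in Proposition~\ref{prop:generalboundong}: I will establish a single estimate
\begin{align*}
\norm{(\theta_{1 2,\Lambda}-\theta_{1 2})\psi\uppar n}_{\hilb\uppar n}\ \leq\ f(\Lambda)\,\norm{\psi\uppar n}_{D(L^{\D/\gamma+\eps})}
\end{align*}
with $f$ continuous on $[0,\infty)$ and $f(\Lambda)\to 0$ as $\Lambda\to\infty$. Continuity of $\theta_{1 2}$ then follows by setting $\Lambda=0$ (where $\theta_{1 2,0}=0$), and convergence in the operator norm of $\mathcal L(D(L^{\D/\gamma+\eps}),\hilb)$ follows from $f(\Lambda)\to 0$. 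Since $\chi_\Lambda^2=\chi_\Lambda$, replacing both form factors in \eqref{eq:thetadef} by $\chi_\Lambda v^i_p$ equips the integrand of the difference with the single cutoff $\xi_\Lambda(k_{n+1})=1-\chi_\Lambda(k_{n+1})$. Because only $v^1$ and $v^2$ act nontrivially, I may take $M=2$ and collect the remaining nucleon momenta together with $\Omega(\hat K_{n+1})$ into an inert parameter $\Xi:=\sum_{i\geq 3}\Theta(p_i)+\Omega(\hat K_{n+1})\geq 0$ over which the bound will be uniform; Fubini in the associated variables then gives the general case.

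For the estimate itself, set $\sigma:=\D/\gamma+\eps$, abbreviate $k:=k_{n+1}$, and put $L_1:=\Theta(p_1)+\Theta(p_2-k)+\omega(k)+\Xi$ (the denominator appearing in \eqref{eq:thetadef}) and $L_2:=\Theta(p_1+k)+\Theta(p_2-k)+\Xi$ (the value of $L$ at the shifted argument of $\psi\uppar n$). Expressing $\psi\uppar n$ through $L^{\sigma}\psi\uppar n$ introduces the factor $L_2^{-\sigma}$ at the shifted argument; I then apply Cauchy--Schwarz in $k$ with a positive weight $h(p_1,p_2,k)$ and substitute $(p_1,p_2)\mapsto(p_1+k,p_2-k)$ in the resulting ``density'' integral. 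This reduces the bound to a Schur-type pair: a bound for $\sup_{p_1,p_2}\int\xi_\Lambda(k)\,\abs{v^2_{p_2}(-k)}^2\abs{v^1_{p_1}(k)}^2\,L_1^{-2}L_2^{-2\sigma}\,h^{-1}\,\ud k$ (finite, uniform in $\Xi$, and tending to $0$ as $\Lambda\to\infty$) and a bound for the supremum of $\int h\,\ud k$ after the substitution (finite, uniform in $\Xi$). Using $\abs{v^i_p(k)}\leq c\abs{k}^{-\alpha}$, $\Theta(p)\geq\abs{p}^\gamma$, $\omega(k)\geq\tfrac12(\abs{k}^\beta+1)$ and $L_2\geq\abs{p_2-k}^\gamma$, both integrals are brought into the form handled by Lemma~\ref{lem:scaling}, its ``$\Omega$'' being taken as the harmless constant $1$ so that no inverse powers of $\Xi$ arise and the cutoff produces the factor $\Lambda^{-\beta\delta}$ for $\Lambda\geq 1$. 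The hypothesis of Lemma~\ref{lem:scaling} on the exponents holds precisely because $\D\geq 0$ and $\eps$ is chosen small. I expect the choice of $h$ to be the main obstacle: in contrast to the diagonal terms of Lemma~\ref{lem:Tdiag}, the single loop integration over $k$ is here entangled with the two distinct momentum shifts $p_1+k$ and $p_2-k$, so $h$ must balance them while carrying neither an inverse power of $L_2$ (which may vanish for $\mu=0$) nor a positive power of $\Theta$ or $\omega$, for which only lower bounds are available; a workable choice assigns $\abs{v^2_{p_2}(-k)}^2$ and a suitable fraction of $L_1^{-2}$ (and, for large $\D$, an extra negative power of $\abs{k}$) to $h$.

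For the symmetry of $\theta_{1 2}+\theta_{2 1}$ on $D(L^{\D/\gamma+\eps})$ it suffices to prove $\langle\phi,\theta_{1 2}\psi\rangle=\langle\theta_{2 1}\phi,\psi\rangle$ for $\phi,\psi\in D(L^{\D/\gamma+\eps})$, since exchanging the roles of $\phi$ and $\psi$ then also gives $\langle\phi,\theta_{2 1}\psi\rangle=\langle\theta_{1 2}\phi,\psi\rangle$, and adding the two identities yields symmetry. To this end I will apply Fubini (justified by the integrability just obtained) and change variables $P\mapsto P'=P+(e_1-e_2)k_{n+1}$ in the joint $(P,k_{n+1})$-integral: the denominator $L(P-e_2k_{n+1},K)$ becomes $L(P'-e_1k_{n+1},K)$, and the two instances $v^2_{p_2}(-k_{n+1})=v^2_{p_2'}(k_{n+1})$, $v^1_{p_1}(k_{n+1})=v^1_{p_1'}(-k_{n+1})$ of the relation $v^\ell_{p-k}(k)=v^\ell_p(-k)$ turn the integrand into exactly the one defining $\langle\theta_{2 1}\phi,\psi\rangle$.
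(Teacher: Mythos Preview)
Your approach is correct and follows the same overall strategy as the paper's proof: Cauchy--Schwarz in the loop variable $k_{n+1}$, a change of variables $P\mapsto P+(e_1-e_2)k_{n+1}$, and two applications of Lemma~\ref{lem:scaling}. The execution, however, differs in a noteworthy way. The paper does \emph{not} pre-compose $\psi\uppar n$ with $L^{-\sigma}$; instead it takes the concrete Cauchy--Schwarz weight $|p_2-k_{n+1}|^{2(\D+\epsilon)}$, applies Lemma~\ref{lem:scaling} to the kernel integral with $\Omega=|p_1|^\gamma+1$ to extract a decaying factor $|p_1|^{-(\D+2\epsilon)+\gamma\delta}$, and only after the change of variables (which turns this into $|p_1-k|^{-(\D+2\epsilon)+\gamma\delta}$) performs the second application of Lemma~\ref{lem:scaling}; the $|p_2|^{2(\D+\epsilon)}$-factor that matches $D(L^{\D/\gamma+\eps})$ then appears at the very end. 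Your route absorbs the $L^{\sigma}$-power into $\psi\uppar n$ at the outset, so that $L_2^{-\sigma}$ sits in the kernel from the start, and then both applications of Lemma~\ref{lem:scaling} can be taken with $\Omega=1$. Concretely, the choice $h=|v^2_{p_2}(-k)|^2\,L_1^{-r}$ with any $r\in(1+\D/\gamma,\,1+\D/\gamma+2\eps)$ makes both Schur integrals fall under Lemma~\ref{lem:scaling} (the extra $|k|$-power you mention is not needed). Your packaging is slightly more abstract and avoids tracking the intermediate $|p_1|$-prefactor through the substitution, at the price of leaving the weight $h$ implicit; the paper's version is fully explicit. The symmetry argument via the substitution $P\mapsto P+(e_1-e_2)k_{n+1}$ together with $v^\ell_{p-k}(k)=v^\ell_p(-k)$ is exactly what the paper does.
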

\begin{proof}
We will prove convergence and boundedness first by a bound of the form $\norm{ (\theta_{1 2} -  \theta_{1 2,\Lambda}) \psi} \leq f(\Lambda) \norm{L^{\D/\gamma+\eps} \psi}$ for a continuous function $f$ on $[0, \infty)$ which tends to zero as $\Lambda \rightarrow \infty$. This proves convergence. Boundedness follows by setting $\Lambda=0$. Set $\xi_\Lambda(q):=1-\chi_\Lambda(q)$. Then multiply by $\abs{p_2-k_{n+1}}^{2 \frac{(\D+\epsilon)}{2}}$ and its inverse for any $\epsilon>0$, and estimate using the Cauchy-Schwarz inequality
\begin{align*}
&\abs{(\theta_{1 2}-\theta_{1 2,\Lambda}) \psi \uppar n}^2 
\\
&
 \leq 
\int_{\R^d}  \frac{\xi_\Lambda(k)\abs{v^1_{p_1}(k)}^2 \abs{ \psi\uppar n(P + (e_1-e_2) k,\hat{K}_{n+1})}^2 \abs{p_2-k}^{2 (\D+\epsilon)}}{L(P- e_2 k,\hat{K}_{n+1},k)} \, \ud k
\\
& \quad  \times 
\int_{\R^d} \frac{\xi_\Lambda(q)\abs{ v^2_{p_2}(-q)}^2}{L(P- e_2 q,\hat{K}_{n+1},q) \abs{p_2-q}^{2 (\D+\epsilon)}} \, \ud q \, .
\end{align*}
The integral in $q$ can, for $\epsilon$ small enough, be bounded by 
\begin{align*} 
\int_{\R^d} \frac{\abs{ q}^{-2\alpha} \abs{p_2-q}^{-2 (\D+\epsilon)}}{(\abs{p_2 - q}^\gamma +\abs{q}^\beta + \abs{p_1}^\gamma+1)} \, \ud q & \leq C (\abs{p_1}^\gamma+1)^{-(\D+2\epsilon)/\gamma+\delta_\Lambda} \Lambda^{-\beta \delta_\Lambda} 
\\
&
 \leq C \abs{p_1}^{-(\D+2\epsilon)+\gamma \delta} \Lambda^{-\beta \delta_\Lambda}
\, ,
\end{align*}
where we have used Lemma~\ref{lem:scaling}, $\abs{p_1}^\gamma+1 \geq 1$ and the fact that $-(\D+2\epsilon)+\gamma \delta <0$ for $\delta$ small enough. Integrating in $(P, \hat K_{n+1})$ and performing a change of variables $P \rightarrow P+(e_1-e_2) k_{n+1}$ then gives
\begin{align*}
&\int \abs{(\theta_{1 2}-\theta_{1 2,\Lambda}) \psi \uppar n}^2 (P,\hat{K}_{n+1}) \, \ud \hat{K}_{n+1} \ud P
  \\
 &
\leq
C \Lambda^{-\beta \delta_\Lambda} \int \int_{\R^d}  \frac{\xi_\Lambda(k)\abs{v^1_{p_1-k}(k)}^2 \abs{ \psi\uppar n(P,\hat{K}_{n+1})}^2 \abs{p_2}^{2(\D+\epsilon)}}{L(P- e_1 k,\hat{K}_{n+1},k) \abs{p_1-k}^{\D+2 \epsilon-\gamma \delta} } \, \ud k \ud \hat{K}_{n+1} \ud P  \, .
\end{align*} 
In the next step we can safely bound $\xi_\Lambda(k)$ by one, apply Lemma~\ref{lem:scaling} and obtain the upper bound
\begin{align*}
C' \Lambda^{-\beta \delta_\Lambda} \int \abs{ \psi\uppar n(P,\hat{K}_{n+1}) }^2 \abs{p_2}^{2 \D+\gamma \delta} \,  \ud \hat{K}_{n+1} \ud P \, .
\end{align*}
Choosing $\delta= 2 \eps$ proves continuity and convergence because $\abs{p}^\D \leq \Theta{(p)}^{\D/\gamma}$. To prove symmetry, we use a change of variables:
\begin{align*}
&- \langle  \phi, \theta_{1 2} \psi \rangle 
\\
&
=  \int \overline{\phi(P,\hat{K}_{n+1}) v^2_{p_2-k_{n+1}}(k_{n+1})}  \frac{ v^1_{p_1}(k_{n+1}) \psi \uppar n(P+(e_1-e_2) k_{n+1},\hat{K}_{n+1})}{L(P-e_2 k_{n+1},K)} \ud P \ud K
\\
&
=   \int \frac{\overline{\phi(P-(e_1-e_2) k_{n+1},\hat{K}_{n+1}) v^2_{p_2}(k_{n+1})} v^1_{p_1-k_{n+1}}(k_{n+1})}{L(P-e_1 k_{n+1},K)}   \psi \uppar n(P,\hat{K}_{n+1}) \ud P \ud K \, .
\end{align*}
\end{proof}

The remaining parts of $T_\uod$ are sums over terms of the form
\begin{align}
\label{eq:taudef}
\tau_{i \ell}  \psi \uppar n  := \sum_{j=1}^{n} \int_{\R^d} \overline{v^\ell_{p_\ell}(-k_{n+1})}  \frac{ v^i_{p_i-\delta_{\ell i} k_{n+1}}(k_j) \psi \uppar n(P-e_\ell k_{n+1}+e_i k_j,\hat{K}_j)}{L(P-e_\ell k_{n+1},K)} \, \ud k_{n+1} \, .
\end{align}
The domain of the operators $\tau_{i \ell}$ can be characterised in terms of the domain of powers of the operator $\Omega:= \ud \Gamma(\omega)$ alone.
\begin{lem}
\label{lem:T2mainlem}
Assume $\D \geq 0$ and let $u(s) = \frac{\beta }{\gamma} s - \frac{\D}{\gamma}$. Then, for all $s > 0$ such that the following two conditions are satisfied
\begin{align}
\label{eq:cond1s}
u(s) &< 1
\\
\label{eq:cond2s}
0 &< u(u(s)) \, ,
\end{align}
the operators $\tau_{i \ell} + \tau_{\ell i}$ defined in \eqref{eq:taudef} are symmetric on $ D(N^{\max(0,1-s)} \Omega^{s-u(s)+\eps/2})$ for any $\eps >0$ small enough. Let $\tau_{i \ell,\Lambda}$ be the same operator with $v^i_p$ replaced by $\chi_\Lambda v^i_p$. Then $\tau_{i \ell,\Lambda} \rightarrow  \tau_{i \ell}$ in norm as operators on $\mathcal{L}(D(N^{\max(0,1-s)} \Omega^{s-u(s)+\eps/2}),\hilb)$.
\end{lem}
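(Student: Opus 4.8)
The strategy follows the pattern already used for $\theta_{12}$ in Lemma~\ref{lem:DposT1}: establish boundedness and convergence through a single weighted Cauchy--Schwarz estimate in the integration variable $k_{n+1}$, reducing everything to applications of the scaling Lemma~\ref{lem:scaling}, and then verify symmetry by an explicit change of variables. As usual, it suffices to prove a bound of the form $\norm{(\tau_{i\ell}-\tau_{i\ell,\Lambda})\psi\uppar n}\leq f(\Lambda)\norm{N^{\max(0,1-s)}\Omega^{s-u(s)+\eps/2}\psi\uppar n}$ with $f$ continuous on $[0,\infty)$ and $f(\Lambda)\to 0$; boundedness then follows by setting $\Lambda=0$. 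Without loss of generality specialise to $(i,\ell)=(1,2)$, so $\delta_{\ell i}=0$ and the form factor $v^1_{p_1}(k_j)$ carries no shift in $k_{n+1}$. The denominator $L(P-e_2 k_{n+1},K)$ is bounded below by $\abs{p_2-k_{n+1}}^\gamma+\abs{k_{n+1}}^\beta+\Omega(\hat K_{n+1})$, and inside $\Omega(\hat K_{n+1})$ one further isolates the single variable $k_j$ via $\omega(k_j)\geq\abs{k_j}^\beta$.

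The core estimate is a two-stage splitting. First, as in the $\theta_{12}$ argument, insert the weight $\abs{p_2-k_{n+1}}^{\D+\eps}$ and its inverse, apply Cauchy--Schwarz in $k_{n+1}$, and use Lemma~\ref{lem:scaling} on the ``dual'' integral $\int \xi_\Lambda(q)\abs{q}^{-2\alpha}\abs{p_2-q}^{-2(\D+\eps)}/L(P-e_2 q,K)\,\ud q$; the hypothesis $u(s)<1$ (in fact $\D<\gamma$, hence $\D/\gamma<1$) guarantees the scaling exponent is negative and produces a factor $\abs{p_1}^{-(\D+2\eps)+\gamma\delta}\Lambda^{-\beta\delta_\Lambda}$ together with the gain of $\abs{p_2-k_{n+1}}^{\D+\eps}$ on the surviving factor $\abs{\psi\uppar n(P-e_2 k_{n+1}+e_1 k_j,\hat K_j)}^2$. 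Second --- and this is the feature absent from Lemma~\ref{lem:DposT1} because there $\hat K_{n+1}$ replaced $K$ entirely --- one must still integrate the surviving $k_{n+1}$-integral (now with weight $\abs{p_2-k_{n+1}}^{\D+\eps}$ in the numerator and $L(P-e_2 k_{n+1},K)^{-1}$ with $\Omega(\hat K_{n+1})\ni\omega(k_j)$ in play) against the remaining factor. Here one performs the change of variables $P\to P+e_2 k_{n+1}-e_1 k_j$ to free $\psi\uppar n$, then applies Lemma~\ref{lem:scaling} a second time in $k_{n+1}$: the relevant parameters are $\nu=2\alpha-(\D+\eps)$ (from the $\abs{k_{n+1}}^{-2\alpha}\abs{p-k_{n+1}}^{\D+\eps}$, using $\abs{p_2-k_{n+1}}\leq$ a power of $\abs{p_2}$ plus $\abs{k_{n+1}}$), $\sigma$ from the $p_2$-direction, $r=1$, and the ``mass'' $\Omega$ now carries $\omega(k_j)\geq\abs{k_j}^\beta$. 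The exponent of $\Omega$ (equivalently of $\omega(k_j)$) that results is exactly $-1+(d-\nu-\sigma)/\gamma$, which after bookkeeping equals $-(s-u(s))-\eps/2+\delta$ type expression; summing over $j=1,\dots,n$ costs $n^{\max(0,1-(s-u(s)))}$, and combined with the earlier $n$-loss one arrives at the weight $N^{\max(0,1-s)}\Omega^{s-u(s)+\eps/2}$. The second condition $0<u(u(s))$ is precisely what is needed for this second scaling exponent to be in the admissible open interval of Lemma~\ref{lem:scaling} (it forces $d>\nu+\sigma$ at the second stage), so that the $\Omega$-power is genuinely $<s-u(s)$ up to the arbitrarily small $\eps/2$.

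Symmetry of $\tau_{i\ell}+\tau_{\ell i}$ is handled exactly as for $\theta_{12}+\theta_{21}$: in $\langle\phi,\tau_{12}\psi\rangle$ rename the free integration variable, use $v^\ell_{p-k}(k)=v^\ell_p(-k)$ to shift $p_2\to p_2-k_{n+1}$ and $p_1\to p_1$, perform the substitution $P\to P-e_2 k_{n+1}+e_1 k_j$ (together with relabelling $k_j\leftrightarrow k_{n+1}$ inside the $j$-sum, which is legitimate since $\psi\uppar n$ is symmetric in its boson variables), and read off $\langle\tau_{21}\phi,\psi\rangle$; the denominators match because $L(P-e_1 k_{n+1},K)$ is invariant under the combined substitution. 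The convergence statement $\tau_{i\ell,\Lambda}\to\tau_{i\ell}$ in $\mathcal{L}(D(N^{\max(0,1-s)}\Omega^{s-u(s)+\eps/2}),\hilb)$ comes for free from the factor $\Lambda^{-\beta\delta_\Lambda}$ carried through the estimate, since $\delta_\Lambda=\delta$ for $\Lambda\geq 1$.

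\textbf{Main obstacle.} The delicate point is the \emph{second} application of Lemma~\ref{lem:scaling}, i.e.\ controlling the $k_{n+1}$-integral after one has already spent regularity to tame the $v^2_{p_2}(-k_{n+1})$ singularity at $k_{n+1}\to p_2$. One must split the region $\abs{p_2-k_{n+1}}$ into the zone where it is comparable to $\abs{k_{n+1}}$ and the zone where it is comparable to $\abs{p_2}$, and in each check that the combined exponents $(\nu,\sigma,r,\gamma,\beta)$ satisfy $d\in(\nu+\sigma,\nu+\sigma+r\gamma)$ --- this is exactly where conditions \eqref{eq:cond1s} and \eqref{eq:cond2s} are consumed, and getting the bookkeeping of the three $\eps$-type slack parameters ($\eps$ in the weight, $\delta$ from the first scaling, and the $\eps/2$ in the $\Omega$-power) mutually consistent is the only real subtlety. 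Everything else is a routine adaptation of the $\theta_{12}$ computation.
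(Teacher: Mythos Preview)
Your plan has a genuine gap: the $\theta_{12}$-style weight $|p_2-k_{n+1}|^{\D+\epsilon}$ does not work for $\tau_{i\ell}$, and this shows up in the fact that the parameter $s$ from the statement never actually enters your estimate. In $\theta_{i\ell}$ the only singular boson variable is $k_{n+1}$; in $\tau_{i\ell}$ the factor $v^i_{p_i-\delta_{\ell i}k_{n+1}}(k_j)$ adds a separate singularity in each $k_j$, and the $j$-sum has to be tamed jointly with the $k_{n+1}$-integral. Your description of the second scaling step (``$\nu=2\alpha-(\D+\eps)$ from the $|k_{n+1}|^{-2\alpha}$'') cannot be right: after your Cauchy--Schwarz the surviving integrand carries $|v^1_{p_1}(k_j)|^2\le c\,|k_j|^{-2\alpha}$, which is a function of $k_j$, not of $k_{n+1}$, while the weight $|p_2-k_{n+1}|^{2(\D+\epsilon)}$ sits in the numerator; there is no $|k_{n+1}|^{-\nu}$ factor for Lemma~\ref{lem:scaling} to act on, and no mechanism by which an $\Omega^{s-u(s)}$-power could emerge.

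The paper's argument uses a different Cauchy--Schwarz with a different weight: one multiplies by $\omega(k_j)^{s/2}\omega(k_{n+1})^{s/2}$ and its inverse and applies Cauchy--Schwarz on $L^2(\R^d_{k_{n+1}}\times\{1,\dots,n\})$, i.e.\ simultaneously in the integration variable and the summation index. This is how $s$ enters. The dual factor becomes $\sum_{\mu=1}^n\omega(k_\mu)^s$ times a $q$-integral that Lemma~\ref{lem:scaling} handles under $u(s)\in(0,1)$, yielding $\Omega(\hat K_{n+1})^{-u(s)+\delta}$; splitting the $\mu$-sum into $\mu=j$ and $\mu\neq j$ produces two pieces. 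After integrating in $(P,\hat K_{n+1})$, substituting $P\to P-e_\ell k_{n+1}+e_i k_j$, and renaming $k_j\leftrightarrow k_{n+1}$, the $\mu=j$ piece becomes a second $k_{n+1}$-integral with weight $\omega(k_{n+1})^{-u(s)+\delta}$, and now Lemma~\ref{lem:scaling} applies precisely under $u(u(s))>0$. Your intuition about where condition~\eqref{eq:cond2s} is spent is correct, but the route to it is different. A smaller point: specialising to $(i,\ell)=(1,2)$ is not without loss of generality, since for $i=\ell$ the form factor $v^i_{p_i-k_{n+1}}(k_j)$ carries a genuine $k_{n+1}$-shift; the paper's argument treats all pairs $(i,\ell)$ uniformly.
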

\begin{proof}
We restrict to $n \geq 1$ because $\tau_{i \ell} =0$ for $n=0$.  Denote $\tau = \tau_{i \ell} $ for some $(i,\ell)$.
We will prove a bound of the form $\norm{ (\tau -  \tau_{\Lambda}) \psi} \leq f(\Lambda) \norm{ N^{\max(0,1-s)} \Omega^{s-u(s)+\eps/2} \psi}$ for a continuous function $f$ on $[0, \infty)$ which tends to zero as $\Lambda \rightarrow \infty$. This proves convergence. Boundedness follows by setting $\Lambda=0$.
 Note that, because $\D\geq 0$ and $\beta \leq \gamma$, it holds that $u(s) \leq s$ and therefore the conditions~\eqref{eq:cond1s} and~\eqref{eq:cond2s} already imply that
\begin{align}
\label{eq:cond3s}
u(s),u(u(s)) \in (0,1) \, .
\end{align}
We multiply by $\omega(k_{j})^{\frac{s}{2}}\omega(k_{n+1})^{\frac{s}{2}}$ and its inverse, apply the Cauchy-Schwarz inequality on $\Lz(\R^d \times \lbrace 1, \dots, n \rbrace)$ and use the assumptions on $ v_p^l$ and $\omega$ to obtain
\begin{align*}
&\abs{(\tau-\tau_\Lambda)  \psi \uppar n }^2 
\\
&
\leq
\begin{aligned}[t]
&\sum_{j=1}^{n} \int_{\R^d} \omega(k_{n+1})^s \frac{ \abs{ v^i_{p_i-\delta_{\ell i}  k_{n+1}}(k_{j})}^2  \abs{ \psi\uppar n(P - e_\ell k_{n+1}+ e_i k_j,\hat{K}_j)}^2}{L(P - e_\ell k_{n+1},K) \omega(k_j)^s} \, \ud k_{n+1} 
\\
&
 \times c \sum_{\mu=1}^{n} \omega(k_\mu)^{s} \int_{\R^d}  \frac{(1-\chi_\Lambda(k_j)\chi_\Lambda(q))) \abs{q}^{-2\alpha-\beta s}}{L(P - e_\ell q,\hat K_{n+1},q)}  \, \ud q \,.
\end{aligned} 
\end{align*}
First of all, we have to estimate 
\begin{align*}
(1-\chi_\Lambda(k_j)\chi_\Lambda(q))^2 
&
= (1-\chi_\Lambda(k_j)+\chi_\Lambda(k_j)(1-\chi_\Lambda(q)))^2 
\\
&
\leq (1-\chi_\Lambda(k_j)+1-\chi_\Lambda(q))^2  
\\
&
\leq 2 ( \xi_\Lambda(k_j)+\xi_\Lambda(q))
\end{align*}
Since $u(s) \in (0,1)$, we can apply Lemma~\ref{lem:scaling} to the integral in the second line. We deal separately with the term that does involve a $\xi_\Lambda(q)$ and the one that does not, such that they are bounded by a constant times
\begin{align*}
\xi_\Lambda(k_j)& \Omega(\hat{K}_{n+1})^{-u(s)} +\Lambda^{-\beta \delta_\Lambda} \Omega(\hat{K}_{n+1})^{-u(s)+\delta_\Lambda} 
\\
&
\leq (\xi_\Lambda(k_j)  +\Lambda^{-\beta \delta_\Lambda}) \Omega(\hat{K}_{n+1})^{-u(s)+\delta} \, .
\end{align*}
Here we have used that $\Omega(\hat{K}_{n+1})\geq 1$. In order to deal with the sum over $\mu$, we separate the term $\mu=j$ from the rest and use~\eqref{eq:sum}, giving 
\begin{align*}
\sum_{\mu=1}^{n} \omega(k_\mu)^s  \Omega(\hat{K}_{n+1})^{-u(s)+\delta}
&\leq \omega(k_{j})^{s-u(s)+\delta}
+ (n-1)^{\max(0,1-s)} \Omega(\hat{K}_{j})^{s-u(s)+\delta}
\,.
 \end{align*}
Consequently, we have a bound of the form
\begin{align*}
\abs{(\tau-\tau_\Lambda) \psi \uppar n }^2 \leq  C \abs{(\tau-\tau_\Lambda)\uppar d \psi \uppar n }^2 +C \abs{(\tau-\tau_\Lambda) \uppar {od}  \psi \uppar n }^2\,,  
\end{align*}
 with
\begin{align}\label{eq:tau_d}
\abs{(\tau-\tau_\Lambda) \uppar d \psi \uppar n }^2  
&:= \begin{aligned}[t] \sum_{j=1}^{n} \int_{\R^d} & \frac{(\xi_\Lambda(k_j)  +\Lambda^{-\beta \delta_\Lambda})  \abs{ \psi\uppar n(P - e_\ell k_{n+1}+ e_i k_j,\hat{K}_j)}^2}{L(P - e_\ell k_{n+1},K)}
\\
& \times \frac{  \abs{  v^i_{p_i-\delta_{\ell i}  k_{n+1}}(k_{j})}^2 }{\omega(k_{n+1})^{-s} \omega(k_j)^{u(s)-\delta} } \, \ud k_{n+1} \end{aligned}
\end{align}
and
\begin{align}
\label{eq:tau_od}
&\abs{(\tau-\tau_\Lambda) \uppar {od} \psi \uppar n }^2
\nonumber \\
&:=  n^{\max(0,1-s)} 
\sum_{j=1}^{n} \int_{\R^d}
\begin{aligned}[t]
 &\frac{  (\xi_\Lambda(k_j)  +\Lambda^{-\beta \delta_\Lambda}) \abs{\psi\uppar n(P - e_\ell k_{n+1}+ e_i k_j,\hat{K}_j)}^2}{L(P - e_\ell k_{n+1},K) }\\
 &\times \frac{\abs{  v^i_{p_i-\delta_{\ell i}  k_{n+1}}(k_{j})}^2\Omega(\hat{K}_{j})^{s-u(s)+\delta}}{\omega(k_{n+1})^{-s}  \omega(k_j)^s} \,  \ud k_{n+1}\,.
\end{aligned}
\end{align}
To treat the term~\eqref{eq:tau_d}, we integrate in $(P,\hat K_{n+1})$, perform a change of variables $P \rightarrow P - e_\ell k_{n+1}+ e_i k_j$, and then rename the variables $k_j \leftrightarrow k_{n+1}$. This yields
\begin{align*}
&\int \abs{(\tau-\tau_\Lambda) \uppar d \psi \uppar n (P,\hat{K}_{n+1})}^2 \,  \ud P \ud \hat {K}_{n+1} \\
&
= \sum_{j=1}^{n} \int   \frac{(\xi_\Lambda(k_j)  +\Lambda^{-\beta \delta_\Lambda}) \omega(k_{n+1})^s\abs{ v^i_{p_i- k_j + \delta_{\ell i} k_{n+1}}(k_{j})}^2  \abs{  \psi\uppar n(P,\hat{K}_j)}^2}{\omega(k_j)^{u(s)-\delta}L(P-e_i k_{j},K) } \,  \ud P \ud {K} 
\\
&
= \sum_{j=1}^{n}  \int  \frac{(\xi_\Lambda(k_{n+1})  +\Lambda^{-\beta \delta_\Lambda}) \omega(k_{j})^s \abs{  v^i_{p_i + \delta_{\ell i} k_{j}}(-k_{n+1})}^2  \abs{ \psi\uppar n(P,\hat{K}_{n+1})}^2}{\omega(k_{n+1})^{u(s)-\delta}L(P- e_i k_{n+1},K) } \,  \ud P \ud K  ,
\end{align*}
where, in the last step, we have used the permutation symmetry and our assumption on $v^i_p$. Because we have $u(u(s)) \in (0,1)$ we can choose $\delta$ so small such that also $u(u(s)-\delta) \in (0,1)$. This allows us to apply again Lemma~\ref{lem:scaling} to the $k_{n+1}$-integral in the usual way and to bound it from above by a constant times 
\begin{align*}
\Lambda^{-\beta \delta_\Lambda}& (\Omega(\hat{K}_{n+1})^{-u(u(s)-\delta)+\delta}  +\Omega(\hat{K}_{n+1})^{-u(u(s)-\delta)}) 
\\
&
\leq 2 \Lambda^{-\beta \delta_\Lambda} \Omega(\hat{K}_{n+1})^{-u(u(s)-\delta)+\delta}  \, .
\end{align*}
Therefore, using again the bound~\eqref{eq:sum}, we conclude
\begin{align*}
\int &\abs{(\tau-\tau_\Lambda)  \uppar d \psi \uppar n (P,\hat{K}_{n+1})}^2 \,  \ud P \ud \hat {K}_{n+1}  
\\
&
\leq C \Lambda^{-\beta \delta_\Lambda} n^{\max(0,1-s)} \int  \abs{\Omega^{\frac{s-u(u(s)-\delta)+\delta}{2}}  \psi\uppar n(P,\hat{K}_{n+1})}^2   \ud P \ud \hat{K}_{n+1} \, .
\end{align*}
We proceed similarly with the second term~\eqref{eq:tau_od} and obtain
\begin{align*}
 &
\abs{\tau \uppar {od} \psi \uppar n }^2
\leq
C \Lambda^{-\beta \delta_\Lambda} n^{2 \max(0,1-s)}
\int
\abs{\Omega^{ s-u(s)+\delta}   \psi\uppar n(P,\hat{K}_{n+1})}^2  \ud P \ud \hat{K}_{n+1}  \, .
\end{align*}
This proves the desired bounds for $\delta=\eps/2$, because $u$ is subadditive, $u(s)\leq s$ and thus
\begin{align*}
s-u(u(s)-\delta)+\delta & = s-u(s)+ u(s) -u(u(s)-\delta) + \delta 
\\
&
\leq s-u(s)+ u(s-u(s)+\delta) + \delta  \leq  2(s-u(s)+\delta) \,.
\end{align*}
Symmetry follows also by a change of variables as in Lemma~\ref{lem:DposT1} together with an additional renaming $k_j \leftrightarrow k_{n+1}$ similar to the one we used above.
\end{proof}

\subsection{Proof of Theorem~\ref{thm:main} for $\gamma = \beta$}
The next proposition gives a domain for $T$ as a whole in the case $\gamma=\beta$. For the general case $\beta < \gamma$, the result can be found in Proposition~\ref{prop:Tmainprop2}.

\begin{prop}
\label{prop:Tmainprop1}
Assume Conditions~\ref{Hdefcond}. If $\beta=\gamma$ and $\D< \gamma/2$ then, for any $\eps>0$ small enough and any $\nu \in \{1,2\}$, the operators $T^\nu$ define symmetric operators on the domain $D(T)=D(L^{\D/\gamma+\eps})$. 
\end{prop}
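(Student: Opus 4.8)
The plan is to assemble the bound for $T^\nu$ by combining the three partial estimates already established in Lemmas~\ref{lem:Tdiag}, \ref{lem:DposT1}, and \ref{lem:T2mainlem}, after checking that, in the special case $\beta=\gamma$, every one of these results can be made to hold on the single domain $D(L^{\D/\gamma+\eps})$. Recall that $T^\nu = T^\nu_\ud + T_\uod$, with $T_\uod$ split further into the terms $\theta_{i\ell}$ and $\tau_{i\ell}$ of \eqref{eq:thetadef} and \eqref{eq:taudef}. For $T^\nu_\ud$, Lemma~\ref{lem:Tdiag} directly gives symmetry and boundedness as an operator from $D(L^{\D/\gamma+\eps})$ to $\hilb$ (this uses only $0\le\D<\gamma$, which is implied by $\D<\gamma/2$, together with Condition~\ref{cond02}). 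For the $\theta_{i\ell}$ terms, Lemma~\ref{lem:DposT1} gives exactly the same domain $D(L^{\D/\gamma+\eps})$. So the only nontrivial point is the $\tau_{i\ell}$ terms, which Lemma~\ref{lem:T2mainlem} controls on $D(N^{\max(0,1-s)}\,\Omega^{s-u(s)+\eps/2})$ for suitable $s$, and one must check that, when $\beta=\gamma$, a choice of $s$ exists making this domain contained in $D(L^{\D/\gamma+\eps})$.

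The first step, then, is to carry out this choice of $s$. When $\beta=\gamma$ we have $u(s)=s-\D/\gamma$, so condition \eqref{eq:cond1s} reads $s-\D/\gamma<1$ and condition \eqref{eq:cond2s} reads $u(u(s))=s-2\D/\gamma>0$, i.e. $s>2\D/\gamma$. Under the hypothesis $\D<\gamma/2$ we have $2\D/\gamma<1$, so the interval $(2\D/\gamma,\,1+\D/\gamma)$ for $s$ is nonempty and we may take $s=1$ (which lies in this interval precisely because $2\D/\gamma<1$, equivalently $\D<\gamma/2$). With $s=1$ we get $\max(0,1-s)=0$, so the $N$-factor drops out, and $s-u(s)+\eps/2 = \D/\gamma+\eps/2$, so Lemma~\ref{lem:T2mainlem} places $\tau_{i\ell}+\tau_{\ell i}$ on $D(\Omega^{\D/\gamma+\eps/2})$. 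Since $\Omega\le L$ (as $L=\Theta(P)+\Omega(K)$ with $\Theta\ge 0$), we have $D(L^{\D/\gamma+\eps}) \subset D(L^{\D/\gamma+\eps/2}) \subset D(\Omega^{\D/\gamma+\eps/2})$, so all three families of operators are simultaneously defined and symmetric on $D(L^{\D/\gamma+\eps})$. Choosing $\eps>0$ small enough that $\D/\gamma+\eps<1/2$ keeps us in the regime where Corollary~\ref{lem:gonfockspace} and Lemma~\ref{lem:Tdiag} apply; this is possible exactly because $\D<\gamma/2$.

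The remaining steps are bookkeeping: summing the finitely many terms (over $i,\ell\in\{1,\dots,M\}$ and the two summands of $T_\uod$) preserves boundedness $D(L^{\D/\gamma+\eps})\to\hilb$ and symmetry, so $T^\nu$ is symmetric on $D(T):=D(L^{\D/\gamma+\eps})$. One should note that $D(L^{\D/\gamma+\eps})$ is indeed a dense domain (it contains $D(L)$, hence $D(H)$ as well once $G$ has been shown to map into $D(L^\eta)$ with $\eta$ close to $1/2$, which is Corollary~\ref{lem:gonfockspace}), so the statement is not vacuous. I expect the only genuine obstacle to be the verification in the first step that the $s$-window for Lemma~\ref{lem:T2mainlem} is nonempty and compatible with the target exponent $\D/\gamma+\eps$; this is where the sharper constraint $\D<\gamma/2$ (rather than merely $\D<\gamma/3$ from Condition~\ref{cond2}, or $\D<\gamma$) enters, and it is worth spelling out that $\D<\gamma/2$ is precisely what makes $s=1$ admissible while keeping $\D/\gamma+\eps<1/2$. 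Everything else follows from the three lemmas cited.
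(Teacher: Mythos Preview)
Your overall strategy matches the paper's: split $T^\nu$ into $T^\nu_\ud$, the $\theta$-terms, and the $\tau$-terms, invoke Lemmas~\ref{lem:Tdiag}, \ref{lem:DposT1}, \ref{lem:T2mainlem} respectively, and then use $\Omega\le L$ (and $N\le L$) to land everything on $D(L^{\D/\gamma+\eps})$. The paper does exactly this.

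There is, however, a genuine gap in your choice of $s$. You take $s=1$ and claim it lies in the open interval $(2\D/\gamma,\,1+\D/\gamma)$. The upper endpoint condition $1<1+\D/\gamma$ is equivalent to $\D>0$, so when $\D=0$ (which is allowed by Condition~\ref{cond2}) the interval is $(0,1)$ and $s=1$ is \emph{not} admissible: indeed $u(1)=1$ violates the strict inequality~\eqref{eq:cond1s} of Lemma~\ref{lem:T2mainlem}. Your argument therefore fails precisely in the case $\D=0$, which includes both of the main applications (Eckmann's and Gross' models) treated in Corollary~\ref{corol:nelsontypemodels}.

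The fix is the one the paper uses: choose $s_\eps=1-\eps/2$ instead of $s=1$. Then $u(s_\eps)=1-\eps/2-\D/\gamma<1$ always, and $u(u(s_\eps))=1-\eps/2-2\D/\gamma>0$ for $\eps$ small enough since $\D<\gamma/2$. The price is that $\max(0,1-s_\eps)=\eps/2\neq 0$, so Lemma~\ref{lem:T2mainlem} only gives the domain $D(N^{\eps/2}\Omega^{\D/\gamma+\eps/2})$. One then uses $N\le L$ (which holds because $\omega\ge 1$ forces $L\ge\Omega\ge n$ on the $n$-th sector) together with $\Omega\le L$ to conclude $D(L^{\D/\gamma+\eps})\subset D(N^{\eps/2}\Omega^{\D/\gamma+\eps/2})$, and the argument closes. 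Your shortcut of eliminating the $N$-factor by taking $s=1$ is exactly what breaks at $\D=0$.
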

\begin{proof}
We have to deal with $T^\nu_\ud$ and $T_\uod$ separately. The Lemma~\ref{lem:Tdiag} states that $T^\nu_\ud$ defines a symmetric operator on the domain $D(L^{\D/\gamma+\eps})$ for any $\eps>0$ and $\nu \in \{1,2\}$. If $\beta=\gamma$, the function $u(s)$ of Lemma~\ref{lem:T2mainlem} is equal to $s-\D/\gamma$. Therefore the conditions on the parameter $s$ in this lemma reduce to $s \in (2 \D/\gamma,1+\D/\gamma)$. The Lemmas~\ref{lem:T2mainlem} and~\ref{lem:DposT1} taken together combined with the estimate $\Omega \leq L$ then yield that $T_\uod$ is symmetric on $D(N^{1-s} L^{\D/\gamma+\eps/2})$ because 
\begin{align*}
T_\uod& \phi \uppar n
=
  -  \sum_{\substack{i,\ell=1\\i\neq \ell}}^M   \theta_{i \ell} \phi \uppar n
  - \sum_{i,\ell=1}^M  \tau_{i \ell} \phi \uppar n \, .
\end{align*}
We choose $s_\eps = 1 -\eps/2$, which is possible for $\eps$ small enough because $ \D < \gamma/2$. Estimating $N \leq L$ yields the claim. 
\end{proof}

\begin{proof}[Proof of Theorem~\ref{thm:main} for $\gamma = \beta$]
Recall that under the assumption $\gamma=\beta$, Condition~\ref{cond2} reduces to $0 \leq \D < \gamma/3$. Any $\psi \in D(H)$ can be decomposed into $\psi = (1-G)\psi + G \psi$. The first term belongs to $D(L)$ by definition. Corollary~\ref{lem:gonfockspace} shows that $G$ is bounded from $\hilb$ to $D(L^{(1-\D/\gamma)/2 -\eps})$ for any $\eps>0$, so clearly $D(H) \subset D(L^{(1-\D/\gamma)/2 -\eps})$.

Since by Proposition~\ref{prop:Tmainprop1} the operator $T$ is symmetric on $D(L^{\D/\gamma+\eps})$, we conclude that it is symmetric on $D(H)$ as long as $\D<\gamma/3$ (and $\eps$ is chosen appropriately). To prove the self-adjointness, we decompose:
\begin{align}
\label{eq:DposrewriteH v1}
H^\nu
&
= (1-G)^* L (1-G) + T = H_0 + T (1-G) + T G \, .
\end{align}
From \cite[Prop. 2.7]{nelsontype} we know that $H_0 := (1-G)^* L (1-G)$ is self-adjoint and positive. Because the range of $G$ and the domain of $T$ match together we conclude that $TG$ is a bounded operator on $\hilb$. To prove that $T(1-G)$ is relatively bounded by $H_0$, we simply use Young's inequality as in \cite[Sect. 2.3]{nelsontype}. 
\end{proof}
The proof of the Theorem~\ref{thm:main} in the general case is given in Proposition~\ref{prop:proofofmainthm}.

\subsection{Renormalisation}
We will now prove that the operator $H$ can be approximated by a sequence of cutoff Hamiltonians $H_\Lambda+E_\Lambda$. Let us first recall the definition of these cutoff Hamiltonians. Let $V_\Lambda$ be the interaction operator with form factors $v^i_p$ replaced by $v^i_p \chi_\Lambda$, where $\chi_\Lambda$ is the characteristic function of a ball with radius $\Lambda$ (in the variable $k$ only). Since $v^i_p \in \Lz_{\mathrm{loc}}$, the operator $V_\Lambda$ maps into $\Lz(\R^{dM}) \otimes \Lz(\R^{d})$. Thus the operator
\begin{align*}
H_\Lambda 
= L + a(V_\Lambda) + a^*(V_\Lambda)  
\end{align*}
is self-adjoint on $D(H_\Lambda)=D(L)$. Define $G_\Lambda = -L^{-1} a^*(V_\Lambda)$. We can rewrite the cutoff Hamiltonian analogously to $H$ and arrive at
\begin{align*}
H_\Lambda+E_\Lambda^\nu= (1-G_\Lambda)^* L (1-G_\Lambda) + T_\Lambda +E_\Lambda^\nu \,.
 \end{align*} 
Because $V_\Lambda$ is regular, here $T_\Lambda$ is simply the bounded and in particular self-adjoint operator
\begin{equation*}
 T_\Lambda:=a(V_\Lambda) G_\Lambda = - G_\Lambda^* L G_\Lambda \, ,
\end{equation*}
and $E_\Lambda^\nu$ are the counter terms:
\begin{align*}
E^\nu_\Lambda(P) := \begin{cases}  \sum_{i=1}^M \int_{B_\Lambda} \abs{v^i_{p_i-k}(k)}^2 (\Theta(k)+\omega(k))^{-1} \, \ud k & \nu =1 \\
\sum_{i=1}^M \int_{B_\Lambda} \abs{v^i_{p_i-k}(k)}^2 (\Theta(p_i-k)+\omega(k))^{-1} \, \ud k  & \nu =2  \, . \end{cases} 
\end{align*}
The constants are bounded and self-adjoint operators on $\Lz(\R^{dM})$ by Lemma~\ref{lem:scaling}. Going through the computation \eqref{eq:Tformally v1} with $v^i_p$ replaced by $\chi_\Lambda v_p^i$, we observe that a similar decomposition of $T_\Lambda$ into diagonal and off-diagonal terms is possible. Since $T_\Lambda$ has not yet been modified, it would not converge in the limit $\Lambda \rightarrow \infty$, precisely because of the divergence of the integrals that had to be modified in \eqref{eq:Tddef}. This modification, that seemed to be somewhat ad hoc back then, can be achieved by adding the counter terms to the diagonal part and letting $\Lambda$ go to infinity. That is, we can decompose into $T_\Lambda +E_\Lambda^\nu= T^\nu_{\ud, \Lambda}+T_{\uod, \Lambda}$, where $T^\nu_{\ud, \Lambda}$ and $T_{\uod, \Lambda}$ are exactly the operators defined in \eqref{eq:Tdef}, with $v^i_p$ replaced by $\chi_\Lambda v_p^i$. Recall that the notation $T^\nu_{\ud, \Lambda}$ differs from the one used in \cite{nelsontype}. 

We will state the next Proposition in the case where $\beta=\gamma$. The general case is treated in Proposition~\ref{prop:generalconv}.

\begin{prop}
\label{prop:normconv1}
Assume Conditions~\ref{Hdefcond} and~\ref{cond1} and let $\beta=\gamma$. Then $T_\Lambda +E_\Lambda^\nu \rightarrow T^\nu$ in norm as operators in $\mathcal{L}(D(T),\hilb)$.
\end{prop}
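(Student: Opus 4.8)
The plan is to exploit the decomposition $T_\Lambda + E_\Lambda^\nu = T^\nu_{\ud,\Lambda} + T_{\uod,\Lambda}$ together with the corresponding decomposition $T^\nu = T^\nu_\ud + T_\uod$, and to show convergence of each piece separately in the appropriate operator norm. First I would reduce to the already-established convergences: Lemma~\ref{lem:Tdiag} gives $T^\nu_{\ud,\Lambda} \to T^\nu_\ud$ in $\mathcal{L}(D(L^{\D/\gamma+\eps}),\hilb)$ under Conditions~\ref{cond02} and~\ref{cond1}; Lemma~\ref{lem:DposT1} gives $\theta_{i\ell,\Lambda} \to \theta_{i\ell}$ in $\mathcal{L}(D(L^{\D/\gamma+\eps}),\hilb)$; and Lemma~\ref{lem:T2mainlem} gives $\tau_{i\ell,\Lambda} \to \tau_{i\ell}$ in $\mathcal{L}(D(N^{\max(0,1-s)}\Omega^{s-u(s)+\eps/2}),\hilb)$ for each admissible $s$. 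Writing $T_{\uod,\Lambda} = -\sum_{i\neq\ell}\theta_{i\ell,\Lambda} - \sum_{i,\ell}\tau_{i\ell,\Lambda}$ and likewise for $T_\uod$, the difference $T_{\uod,\Lambda} - T_\uod$ is a finite sum of the individual differences, so it converges to zero in norm on the intersection of the relevant domains.

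The main point is then to check that $D(T) = D(L^{\D/\gamma+\eps})$, the domain fixed in Proposition~\ref{prop:Tmainprop1}, is continuously embedded in all the domains appearing above, so that the norm convergences transfer to convergence in $\mathcal{L}(D(T),\hilb)$. For the diagonal part and for the $\theta_{i\ell}$ terms this is immediate since the domain is literally $D(L^{\D/\gamma+\eps})$. For the $\tau_{i\ell}$ terms, as in the proof of Proposition~\ref{prop:Tmainprop1} I would specialise to $\beta=\gamma$, where $u(s) = s - \D/\gamma$, so that conditions~\eqref{eq:cond1s}--\eqref{eq:cond2s} become $s \in (2\D/\gamma, 1+\D/\gamma)$ and the required domain is $D(N^{1-s}\Omega^{s-u(s)+\eps/2}) = D(N^{1-s}\Omega^{\D/\gamma+\eps/2})$. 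Choosing $s_\eps = 1 - \eps/2$ (legitimate since $\D < \gamma/3 < \gamma/2$ forces $2\D/\gamma < 1$), this domain is $D(N^{\eps/2}\Omega^{\D/\gamma+\eps/2})$, which is controlled by $D(L^{\D/\gamma+\eps})$ via $N \leq L$ and $\Omega \leq L$ together with the elementary bound $N^{\eps/2}\Omega^{\D/\gamma+\eps/2} \leq L^{\D/\gamma+\eps}$. Hence each $\tau_{i\ell,\Lambda} \to \tau_{i\ell}$ in $\mathcal{L}(D(T),\hilb)$ as well.

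Combining the three families of convergences and using the triangle inequality for the finitely many terms in the sum, one obtains $T_\Lambda + E_\Lambda^\nu \to T^\nu$ in $\mathcal{L}(D(T),\hilb)$. I expect the only slightly delicate bookkeeping to be in verifying that a single choice of $\eps$ (equivalently of $s_\eps$) simultaneously satisfies the smallness requirements of all three lemmas and keeps $\D/\gamma + \eps < 1/2$ so that $G$ still maps into $D(L^{\D/\gamma+\eps})$ — but this is exactly the kind of parameter juggling already carried out for Theorem~\ref{thm:main} in the case $\gamma=\beta$, and under the standing hypothesis $\D < \gamma/3$ there is enough room. No genuinely new estimate is needed; the proposition is essentially a packaging of the convergence statements of Lemmas~\ref{lem:Tdiag}, \ref{lem:DposT1} and~\ref{lem:T2mainlem} against the common domain $D(T)$.
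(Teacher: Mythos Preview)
Your proposal is correct and follows precisely the paper's approach: decompose $T_\Lambda + E_\Lambda^\nu$ into the diagonal piece and the $\theta$- and $\tau$-terms, invoke the convergence statements of Lemmas~\ref{lem:Tdiag}, \ref{lem:DposT1} and~\ref{lem:T2mainlem}, and use $N\leq\Omega\leq L$ to embed $D(T)=D(L^{\D/\gamma+\eps})$ into the domain required for the $\tau$-terms. The paper's own proof is a single sentence to this effect; you have simply spelled out the bookkeeping (in particular the choice $s_\eps = 1-\eps/2$, mirroring Proposition~\ref{prop:Tmainprop1}) that the paper leaves implicit.
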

\begin{proof}
This follows by decomposing $T_{\uod, \Lambda}$ into $\tau$ and $\theta$-terms, collecting the results of Lemmas~\ref{lem:Tdiag}, \ref{lem:DposT1} and~\ref{lem:T2mainlem} and estimating $\Omega \leq L$.
\end{proof}

\begin{proof}[Proof of Proposition~\ref{prop:renorm}] 
Let us calculate the difference of resolvents:
\begin{align}
(H_\Lambda +& E^\nu_\Lambda +\ui)^{-1} - (H^\nu+\ui)^{-1} 
\nonumber \\
=
&
(H_\Lambda + E^\nu_\Lambda +\ui)^{-1}(H^\nu -(H_\Lambda + E^\nu_\Lambda) )(H^\nu+\ui)^{-1}
\nonumber \\
=
&
(H_\Lambda + E^\nu_\Lambda +\ui)^{-1}(G_\Lambda-G)^* L (1-G) (H^\nu+\ui)^{-1}
\label{line1} \\
&+(H_\Lambda + E^\nu_\Lambda +\ui)^{-1}(1-G_\Lambda)^* L (G_\Lambda-G) (H^\nu+\ui)^{-1}
\label{line2} \\ \label{line3}
&+(H_\Lambda + E^\nu_\Lambda +\ui)^{-1}(T^\nu -(T_\Lambda + E^\nu_\Lambda) )(H^\nu+\ui)^{-1} \, .
\end{align}
Because $L (1-G) (H^\nu+\ui)^{-1}$ is bounded and $G_\Lambda \rightarrow G$ in norm according to Proposition \ref{prop:generalboundong}, the expression \eqref{line1} converges in norm to zero. Clearly, $T_\Lambda +E_\Lambda^\nu$ is relatively bounded by the operator $(1-G_\Lambda)^* L (1-G_\Lambda)$ but more precisely it is bounded with constants inpendent of $\Lambda$.  This implies that $L(1-G_\Lambda)(H_\Lambda + E^\nu_\Lambda +\ui)^{-1}$ is bounded uniformly in $\Lambda$, so the norm of \eqref{line2} goes to zero as well. The convergence of \eqref{line3} follows from Proposition~\ref{prop:normconv1} or Proposition~\ref{prop:generalconv} and the fact that $T^\nu$ is relatively bounded by $H^\nu$ on $D(H)$. 
\end{proof}

\begin{rem}
Of course the most important result of this article is the Theorem~\ref{thm:main} -- which directly characterises the explicit action and the domain of the Hamiltonian. In earlier works (\cite{eckmann1970,DissAndi}) on these models it was proved that the sequence of cutoff Hamiltonians converges to a self-adjoint and bounded from below operator, and Proposition \ref{prop:renorm} shows that we have identified this very limit. Because the old approach did not succeed in identifiying the limit, it is all the more surprising that the estimates, which are needed in \cite{DissAndi}, are so similar to the ones that we have proved. Let us explain. In Eckmann's approach, the resolvent of the cutoff Hamiltonian is expanded in a Neumann series 
\begin{align*}
(H_\Lambda+E_\Lambda-z)^{-1} = (L-z)^{-1} \sum_{n=0}^\infty \left[-(a(V)+a^*(V_\Lambda) + E_\Lambda)(L-z)^{-1}\right]^n \, .
\end{align*} 
In \cite{DissAndi}, where the reordering method due to Eckmann is worked out in detail, it is observed that the terms of the form $a(V_\Lambda)(L-z)^{-1} a^*(V_\Lambda)$ are the ones that do not converge for fixed $z \in \C$. The series is then regrouped in such a way that terms which are of the same order in the form factor $v_p$ are put together. In particular the terms $E_\Lambda$ and $a(V_\Lambda)(L-z)^{-1} a^*(V_\Lambda)$ both are of order two. The crucial step in the proof is then to show that the sum of these two terms is a Cauchy sequence if the occuring suitable powers of the free resolvent $(L-z)^{-1}$ are taken into account. In our language, for $z=0$, this is of course nothing but the fact that $a(V_\Lambda)G_\Lambda+E_\Lambda \xrightarrow{\Lambda \rightarrow \infty} T$ on the domain of some power of $L$, which is the statement of Proposition \ref{prop:normconv1}. In this sense the resolvent approach of Eckmann is more close to the IBC method than, for example, the use of dressing transformations (see also \cite[Sect.~3.4]{nelsontype}).
\end{rem}

\subsection{Regularity of domain vectors}
In this section we will discuss the regularity of vectors in $D(H)$.  We already know that $D(H) = (1-G)^{-1} D(L)$. Of course we also have that $D(\abs {H}^{1/2})=(1-G)^{-1}D(L^{1/2}) 
$, such that the form domain is characterised by the abstract boundary condition $\psi-G\psi \in D(L^{1/2})$
. 


\begin{cor}\label{cor:reg}
Assume the Conditions~\ref{Hdefcond}. Then for every $0\leq \eta < \frac{1}{2} - \frac{\D}{2\gamma}$ we have
\begin{equation*}
D(\abs{H}^{1/2})\subset D(L^\eta).
\end{equation*}
\end{cor}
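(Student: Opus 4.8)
The plan is to deduce Corollary~\ref{cor:reg} directly from the relation $D(\abs{H}^{1/2}) = (1-G)^{-1} D(L^{1/2})$ together with the mapping properties of $G$ established in Corollary~\ref{lem:gonfockspace}. First I would recall that $H = H^\nu$ is self-adjoint and bounded below by Theorem~\ref{thm:main}, that $H_0 = (1-G)^* L (1-G)$ is self-adjoint and positive by \cite[Prop.~2.7]{nelsontype}, and that on the form level one has $\langle \psi, H \psi\rangle = \norm{L^{1/2}(1-G)\psi}^2 + \langle \psi, T\psi\rangle$, with $T$ relatively $H_0$-bounded (indeed form-bounded with relative bound less than one, via the Young-type estimate invoked in the proof of Theorem~\ref{thm:main}). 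Hence the form domain of $H$ coincides with the form domain of $H_0$, which is exactly $\{\psi \in \hilb \mid (1-G)\psi \in D(L^{1/2})\}$, i.e. $(1-G)^{-1} D(L^{1/2})$; this is the statement already quoted in the text preceding the corollary.

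Next, take $\psi \in D(\abs{H}^{1/2})$ and write the canonical decomposition $\psi = (1-G)\psi + G\psi$. The first summand $(1-G)\psi$ lies in $D(L^{1/2}) \subset D(L^\eta)$ for every $\eta \le 1/2$, since the form domain characterisation gives $(1-G)\psi \in D(L^{1/2})$. For the second summand, Corollary~\ref{lem:gonfockspace} (applied with $\beta = \gamma$, which is the relevant case, or more generally with $0 \le \D < \beta$) provides an $\eta_0 \in (0,1/2)$ with $G \in \mathcal{L}(\hilb, D(L^{\eta_0}))$, and in the case $\beta = \gamma$ one may take $\eta_0 = \frac{1-\D/\gamma}{2} - \eps$ for any small $\eps>0$. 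Thus $G\psi \in D(L^{\eta_0})$, and since $\psi \in \hilb$ with no extra regularity needed, $G\psi \in D(L^\eta)$ for every $\eta < \frac12 - \frac{\D}{2\gamma}$. Adding the two contributions, $\psi \in D(L^\eta)$ for all such $\eta$, which is the claim. (For the general case $\beta < \gamma$ one still gets some $\eta_0 \in (0,1/2)$; to reach the full range $\eta < \frac12 - \frac{\D}{2\gamma}$ one invokes Proposition~\ref{prop:generalboundong} with $s = 0$, giving $G$ bounded from $\hilb = D(N^{1/2})$ into $D(L^\eta)$ for all $\eta < \frac{1 + u(0)}{2} = \frac{1}{2} - \frac{\D}{2\gamma}$.)

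The only subtle point — and the step I would be most careful about — is making precise that the form domain of $H$ really equals $(1-G)^{-1}D(L^{1/2})$ rather than something merely contained in it. This requires that $T$, restricted to the form level, is bounded relative to $H_0$ with relative bound strictly less than one (so KLMN applies and does not enlarge the form domain), and that $D(H) = (1-G)^{-1}D(L)$ is a form core. Both follow from the estimates already assembled in the proof of Theorem~\ref{thm:main}: $TG$ is bounded on $\hilb$ (range of $G$ matches $D(T)$), and $T(1-G)$ is $H_0$-form-bounded with small relative bound by Young's inequality exactly as in \cite[Sect.~2.3]{nelsontype}. Once this is in hand the corollary is immediate, and the remaining arithmetic ($\eta_0 < \frac12 - \frac{\D}{2\gamma}$, matching exponents) is routine. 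I would also remark that this bound is sharp: the complementary statement $D(\abs{H}^{1/2}) \cap D(L^\eta) = \{0\}$ for $\eta \ge \frac12 - \frac{\D}{2\gamma}$ is exactly Proposition~\ref{prop:reg}, whose proof uses the stronger Condition~\ref{cond3}, so Corollary~\ref{cor:reg} is the "easy half" and needs only Conditions~\ref{Hdefcond}.
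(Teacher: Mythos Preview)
Your overall strategy matches the paper's: decompose $\psi = (1-G)\psi + G\psi$, handle $(1-G)\psi \in D(L^{1/2}) \subset D(L^\eta)$ directly, and then argue $G\psi \in D(L^\eta)$. The gap is in this last step for the general case $\beta \leq \gamma$ covered by the corollary. You write that Proposition~\ref{prop:generalboundong} with $s=0$ gives ``$G$ bounded from $\hilb = D(N^{1/2})$ into $D(L^\eta)$'', but $D(N^{1/2})$ is a strict subspace of $\hilb$. With $s=0$ the proposition only yields $G \in \mathcal{L}(D(N^{1/2}), D(L^\eta))$, so one must first know $\psi \in D(N^{1/2})$; merely having $\psi \in \hilb$ is not enough. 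Falling back on Corollary~\ref{lem:gonfockspace} does not repair this when $\beta<\gamma$: there the corollary uses $s=1$ and produces only $\eta < \tfrac{\beta-\D}{2\gamma}$, strictly smaller than the target $\tfrac12 - \tfrac{\D}{2\gamma}$.

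The paper supplies precisely this missing step via Lemma~\ref{lem:Dposnisleftinv}: since $L \geq N$ one has $(1-G)\psi \in D(L^{1/2}) \subset D(N^{1/2})$, and because $(1-G)^{-1}$ is bounded both on $\hilb$ and on $D(N)$ (Lemma~\ref{lem:Dposnisleftinv}), interpolation shows that $(1-G)^{-1}$ preserves $D(N^{1/2})$, hence $\psi \in D(N^{1/2})$. Then Proposition~\ref{prop:generalboundong} with $s=0$ applies and delivers $G\psi \in D(L^\eta)$ for every $\eta < \tfrac{1+u(0)}{2} = \tfrac12 - \tfrac{\D}{2\gamma}$. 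So the genuinely subtle point is not the form-domain identification you flag, but obtaining the $N^{1/2}$-regularity of $\psi$ needed to feed into Proposition~\ref{prop:generalboundong}.
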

\begin{proof} 
Let $\psi \in D(\abs{H}^{1/2})$. Since $\eta \leq 1/2$ and therefore $(1-G)\psi \in D(L^{1/2}) \subset D(L^\eta)$, we have to show that $G\psi\in D(L^\eta)$. We may apply Proposition~\ref{prop:generalboundong} with $s=0$, since $\eta<\frac{1}{2} - \frac{\D}{2\gamma} = \frac{u(0)+1}{2}$. This yields
 \begin{equation*}
 D(\abs{H}^{1/2}) = (1-G)^{-1}D(L^{1/2}) \subset (1-G)^{-1} D(N) \subset D(N^{1/2}) \xrightarrow{\ G \ } D(L^\eta)
\,.
 \end{equation*}
Here we have used Lemma~\ref{lem:Dposnisleftinv} in the third step.
\end{proof}
In order to prove the next proposition, we have to add the Condition~\ref{cond3} to be able to control $G \psi$ from below.
\begin{prop}\label{prop:non-reg}
Assume Conditions~\ref{Hdefcond} and \ref{cond3}. Then for any $\frac{1}{2} - \frac{\D}{2 \gamma} \leq \eta \leq 1$ we have
 \begin{equation*}
D(\abs{H}^{1/2})\cap D(L^\eta)=\{0\}.
 \end{equation*}
\end{prop}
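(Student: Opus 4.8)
The plan is to suppose for contradiction that $\psi \in D(\abs{H}^{1/2}) \cap D(L^\eta)$ with $\eta \geq \frac{1}{2} - \frac{\D}{2\gamma}$ is nonzero and derive a contradiction from the interior-boundary structure. Writing $\psi = (1-G)\psi + G\psi$, the abstract boundary condition gives $(1-G)\psi \in D(L^{1/2})$, hence $(1-G)\psi \in D(L^\eta)$ as well since $\eta \leq 1$ would normally force $\eta \le 1/2$; in fact here $\D < \gamma$ ensures $\frac{1}{2} - \frac{\D}{2\gamma} > 0$, but the stated range allows $\eta$ up to $1$, so I must be a little careful: if $\eta > 1/2$ then $(1-G)\psi \in D(L^{1/2})$ need not lie in $D(L^\eta)$, but the point is that $\psi \in D(L^\eta) \subset D(L^{1/2})$ already gives $(1-G)\psi = \psi - G\psi$, so it suffices to show $G\psi \notin D(L^{1/2})$ unless $\psi = 0$. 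So the real task is: \emph{if $\psi \in D(L^{1/2})$ and $G\psi \in D(L^\eta)$ with $\eta \geq \frac12 - \frac{\D}{2\gamma}$, then $\psi = 0$.} Equivalently, $G$ does not map any nonzero element of $D(L^{1/2})$ into $D(L^\eta)$ for such $\eta$.

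The key step is a lower bound on $\norm{L^\eta G\psi}$ that complements the upper bound from Proposition~\ref{prop:generalboundong}. Concretely, on the $(n{+}1)$-sector, $G\psi\uppar n$ is a sum of terms $\frac{(1-\chi_\Lambda)v^i_{p_i}(k_j)\,\psi\uppar n(P+e_ik_j,\hat K_j)}{L(P,K)}$ (with no cutoff, $\Lambda=0$). I would fix one summand, say $i=1$, $j=n+1$, and observe that after a change of variables the $L^\eta$-norm squared of this piece involves
\begin{align*}
\int \frac{\abs{v^1_{p_1}(k_{n+1})}^2\,L(P,K)^{2\eta}}{L(P,K)^2}\,\abs{\psi\uppar n(P+e_1 k_{n+1},\hat K_{n+1})}^2\,\ud P\,\ud K\,.
\end{align*}
Using Condition~\ref{cond3}, namely $v_p(k)\geq C'(\abs k^\alpha+1)^{-1}$ and the upper bounds $\Theta(q)\leq C(\abs q^\gamma+1)$, $\omega(k)\leq \tilde C(\abs k^\gamma+1)$, one bounds $L(P,K)$ from above by a constant times $\abs{k_{n+1}}^\gamma + (\text{quantities depending on }\psi\uppar n)$, and $\abs{v^1_{p_1}(k_{n+1})}^2\geq C'^2(\abs{k_{n+1}}^{2\alpha}+1)^{-1}$ from below. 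The integral over $k_{n+1}$ of $\abs{k_{n+1}}^{-2\alpha}(\abs{k_{n+1}}^\gamma+R)^{2\eta-2}$ over $\R^d$ diverges precisely when $2\eta - 2 \geq \frac{2\alpha - d}{\gamma}$, i.e. when $\eta \geq \frac12 + \frac{\alpha}{\gamma} - \frac{d}{2\gamma} = \frac12 - \frac{d - 2\alpha}{2\gamma}$. This is weaker than $\frac12 - \frac{\D}{2\gamma}$ (since $\D = d - 2\alpha - \gamma < d - 2\alpha$), so at the threshold $\eta = \frac12 - \frac{\D}{2\gamma}$ the relevant integral already diverges, and for larger $\eta$ it diverges even more. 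Thus $\norm{L^\eta G\psi\uppar n}$ is infinite unless $\psi\uppar n = 0$ almost everywhere. Summing over sectors gives $\psi = 0$.

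The main obstacle I expect is making the lower bound genuinely rigorous: one must avoid the $L^\eta$-norm being artificially finite due to cancellation between the several summands $(i,j)$ in $G\psi\uppar n$. The standard device is to restrict attention to a region of $(P,K)$-space where one summand dominates — for instance, a region where $k_{n+1}$ is large while all the other $k_j$ are bounded, so that the $j=n+1$, $i=1$ term is the leading contribution and the others are lower order — and to argue that on that region $\norm{L^\eta G\psi\uppar n}$ is bounded below by a positive multiple of the divergent integral times $\norm{\psi\uppar n}^2$ restricted to a set of positive measure. A clean way to organize this is to test against a localizing sequence: pick $\psi$ supported (in the $\hat K_n$ and $P$ variables) in a fixed compact set where $\psi\uppar n\neq 0$ on a positive-measure set, and let the "extra" variable range to infinity; the positivity of $v^1_{p_1}$ from Condition~\ref{cond3} guarantees no sign cancellation in the dominant term. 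One should also confirm the endpoint case $\eta=1$ is covered, which it is since $\D<\gamma$ forces $\frac12-\frac{\D}{2\gamma}<1$, so the whole interval $[\frac12-\frac{\D}{2\gamma},1]$ lies in the divergence regime. Finally, since $\abs{k}^\D\leq \Theta(k)^{\D/\gamma}$ and the analogous elementary comparisons, the powers of $\abs{p}$, $\abs{k}$ translate into powers of $L$ without loss.
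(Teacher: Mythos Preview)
Your approach is the paper's: show that $G$ maps no nonzero $\psi\in\hilb$ into $D(L^{\eta_0})$ with $\eta_0=\tfrac12-\tfrac{\D}{2\gamma}$, by isolating one summand of $G\psi^{(n)}$ on a region where the others are subdominant and then invoking Condition~\ref{cond3} to bound $v$ from below and $L$ from above. Two execution errors need correcting.

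The reduction is tangled: the inclusion $D(L^\eta)\subset D(L^{1/2})$ you invoke fails for $\eta<\tfrac12$, and your stated ``real task'' (which assumes $\psi\in D(L^{1/2})$) is not what is available for $\eta\in[\eta_0,\tfrac12)$. The paper's clean reduction works entirely at the single exponent $\eta_0\le\tfrac12$: for any $\eta\ge\eta_0$ one has $D(L^\eta)\subset D(L^{\eta_0})$ and $(1-G)\psi\in D(L^{1/2})\subset D(L^{\eta_0})$, hence $G\psi\in D(L^{\eta_0})$; it then suffices to show this is impossible for $\psi\neq 0$, with no regularity hypothesis on $\psi$ at all.

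Your threshold arithmetic also slips: $2\eta-2\ge\tfrac{2\alpha-d}{\gamma}$ gives $\eta\ge 1-\tfrac{d-2\alpha}{2\gamma}$, and since $d-2\alpha=\D+\gamma$ this equals exactly $\tfrac12-\tfrac{\D}{2\gamma}=\eta_0$, not something strictly weaker as you claim. The endpoint is therefore borderline (the paper writes the divergence condition as $(2-2\eta)\gamma+2\alpha\le d$, which includes equality). For the cross terms, the paper carries out precisely the localisation you sketch: it integrates over $U=\R^d\times B_R(0)^{M-1}\times\R^d\times B_R(0)^{n}$, uses $(a+b)^2\ge\tfrac12 a^2-b^2$ to separate the $(i,j)=(1,1)$ term, shows each other $(i,j)$-term has finite integral over $U$, and after the change of variables $p_1\to p_1+k_1$ in the dominant term restricts further to $\abs{p_1}<R$ so that the lower bound on $v_{p_1}$ from Condition~\ref{cond3} becomes applicable.
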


\begin{proof}
We will show that $G$ maps no $0 \neq \psi \in  \hilb$ into $D(L^{\eta_0})$ where $\eta_0 = \frac{1}{2} - \frac{\D}{2 \gamma}$. This will show that $D(\abs{H}^{1/2}) \cap D(L^{\eta_0}) =\{0 \}$ because $\eta_0 \leq 1/2$ and therefore $(1 - G) \psi \in D(L^{\eta_0})$ for $\psi \in D(\abs{H}^{1/2})$. The claim will then follow immediately due to the fact that for any $\eta \leq 1$ larger than $\eta_0$ it holds that $D(L^\eta) \subset
D(L^{\eta_0})$.

Let $n\in \N$ be such that $\psi\uppar{n}\neq 0$, and let $R>0$. Define $U\subset \R^{dM}\times \R^{d(n+1)}$ to be the set 
 \begin{equation*}
  U=\{(P,K)\vert R>|p_j| \text{ and } R > \abs{k_j} \text{ for all } j>1 \}= \R^d\times B_R(0)^{M-1} \times \R^d \times B_R(0)^{n}\, .
 \end{equation*}
We first use that $(a+b)^2\geq \tfrac12 a^2- b^2$ and obtain the following lower bound:
\begin{align}
 \label{eq:L^sG decomp 1}
 \abs{L^\eta G\psi\uppar{n}(P,K)}^2
 \geq& \frac{1}{2(n+1)} \frac{\abs{ v^1_{p_1}(k_1)}^2 \abs{\psi\uppar{n}(P+e_1 k_1, \hat K_1)}^2}{L(P,K)^{2-2\eta}}\\
 &-  M \sum_{(i,j)\neq (1,1)} \frac{\abs{ v^i_{p_i}(k_j)}^2\abs{ \psi\uppar{n}(P+e_i k_j, \hat K_j)}^2}{L(P,K)^{2-2\eta}}.
 \label{eq:L^sG decomp 2}
\end{align}
We will see that the terms~\eqref{eq:L^sG decomp 2} have a finite integral over $U$, while the integral of~\eqref{eq:L^sG decomp 1} diverges if $R>0$ is chosen large enough.
In the sum over the tupels $(i,j)\neq (1,1)$ in~\eqref{eq:L^sG decomp 2}, have a look at the terms with $i=1, j> 1$. First of all, we may completely drop $L$ in the denominator, because it is clearly bounded from below by one. Using a change of variables $p_1 \rightarrow p_1+k_j$ we obtain the upper bound
\begin{align*}
 \int\limits_U & \frac{\abs{v^1_{p_1}(k_j)}^2\abs{ \psi\uppar{n}(P+e_1 k_j, \hat K_j)}^2}{L(P,K)^{2-2\eta}} \, \ud P \ud K 
 \\
 &
\leq  \int\limits_U \abs{v^1_{p_1-k_j}(k_j)}^2\abs{ \psi\uppar{n}(P, \hat K_j)}^2 \, \ud P \ud K 
\\
&
\leq \int \abs{ \psi\uppar{n}(P, \hat K_j)}^2 \int_{B_R}  \abs{v^1_{p_1}(-k_j)}^2 \ud k_{j} \, \ud P \ud \hat{ K}_j \, .
\end{align*}
This is finite since $v^1_p(-k)$ is bounded uniformly in $p$ by a function in $\Lz_{\mathrm{loc}}$. The terms with $i,j>1$ can be bounded by enlarging the domain of integration in the variable $p_i$ to $\R^d$. Then we can go on as for $i=1$. The terms where $j=1$ but $i>1$ are estimated in the same way, but the change of variables is performed in $k_1$ and the remaining integral is then over $p_i$. This results in
\begin{align*}
 \int \abs{\psi\uppar{n}(\hat P_i, K)}^2 \int_{B_R} \abs{v^i_{p_i}(k_1-p_i)}^2 \ud p_i \, \ud \hat P_i \ud K 
 \, .
\end{align*}
If we employ the fact that $v^i_{p_i}(k_1-p_i) = v^i_{k_1}(p_i-k_1)$, we can conclude as above.

To bound the integral over the term~\eqref{eq:L^sG decomp 1} from below, we first perform the usual change of variables $p_1 \rightarrow p_1+k_1$. Then we restrict the domain of integration to $\{\abs{p_1}<R\} \cap U$ to bound it by
\begin{align}
\label{eq:stillwithL}
  \int\limits_{B_R(0)^{M+n}} \abs{ \psi\uppar{n}(P, \hat K_1)}^2 \int\limits_{\R^d} \frac{\abs{v^1_{p_1}(-k_1)}^2 }{ L(P-e_1 k_1, K) ^{2-2\eta}} \, \ud k_1 \ud P \ud \hat K_1\, .
\end{align}

Since we have restricted to $(P,\hat{K}_1) \in B_R(0)^{M+n}$,
 it holds that $\sum_{i=2}^M \Theta(p_i)+\Omega(\hat{K}_1) \leq C$ for some $C > 0$ that depends on $R$. Because in particular $\abs{p_1} < R$,
we can then estimate by using Condition~\ref{cond3}
\begin{align*}
 L(p_1-k_1, \hat{P}_1,K) 
 &
 \leq \Theta(k_1-p_1) + \omega(k_1)  + C  
  \leq C' ( \abs{k_1}^{\gamma} + 1) \, ,
\end{align*}
for some $C' > 0$ that depends on $R$. Condition~\ref{cond3} also allows us to bound $\abs{ v^1_{p_1}(-k_1)}^2$ from below by some constant times $(\abs{k_1}^{\alpha}+1)^{-2}$. Hence the integral~\eqref{eq:stillwithL} is bounded from below by some constant times
\begin{align*}
\int\limits_{B_R(0)^{M+n}} \abs{ \psi\uppar{n}(P, \hat K_1)}^2 \, \ud P \ud \hat K_1 \int_{\R^d} \frac{1}{(\abs{k_1}^{\gamma} + 1)^{2-2\eta} (\abs{k_1}^{\alpha}+1)^2} \, \ud k_1  \, .
\end{align*}
Because $\psi \uppar n \neq 0$, we can choose an $R>0$ large enough such that 
\begin{equation*}
 \int\limits_{B_R(0)^{M+n}} \abs{ \psi\uppar{n}(P, \hat K_1)}^2 \, \ud P \ud K > 0\,.
\end{equation*}
But since $(2-2\eta)\gamma+2\alpha \leq d$ by hypothesis, the integral in $k_1$ is infinite, and we have proved the claim. 
\end{proof}

\section{Proof of Corollary~\ref{corol:nelsontypemodels}}
\label{sect:examples}
In this section we are going to apply the results obtained in the previous section to the two models we have been discussing in the introduction. That is, we have to check, that the Conditions~\ref{cond01} -- \ref{cond3} are fulfilled. In this way we will prove the Corollary~\ref{corol:nelsontypemodels}.

Clearly in both models we have $\gamma=\beta=1$ and the form factor does not depend on the specific particle, so we will write $v^i_p = v_p$ throughout this section. In Gross' model $v_p = \omega^{-1/2}$ is independent of $p$, so we may choose $\alpha=1/2$ for the upper bound. 
In Eckmann's model, this is less obvious since $v_p(k) = \Theta(p)^{-1/2}   \Theta(p+k)^{-1/2}   \omega(k)^{-1/2}$. However, for any $0 \leq \delta < 1$ it holds that 
\begin{align}
\label{eq:tunedbound}
 \Theta(p)^{-1/2}   \Theta(p+k)^{-1/2} \leq c(\mu) \abs{k}^{-(1-\delta)/2}
\end{align} 
pointwise on $\R^d \times \R^d$. Here $c(\mu)< \infty$ as long as $\mu>0$ such that Condition~\ref{cond01} is still fulfilled if the nucleon mass $\mu$ is positive. To see why \eqref{eq:tunedbound} is true, note that
\begin{align*}
 0 &\leq (\abs{k} \abs p - p^2+\mu^2)^2 = k^2 p^2  -2  \abs{ k} \abs{  p } ( p^2+\mu^2) + (p^2+\mu^2)^2
 \\
 & \leq k^2 p^2  +2  ( k \cdot   p ) ( p^2+\mu^2) + (p^2+\mu^2)^2 \, .
\end{align*}
Adding $\mu^2 k^2 + \mu^2$ on both sides, we obtain 
\begin{align*}
\mu^2 (k^2+1) &\leq (p^2+\mu^2) [(k+p)^2+ \mu^2] + \mu^2
\\
&\leq (p^2+\mu^2) [(k+p)^2+ \mu^2 +1] \leq (p^2+\mu^2) (1+\mu^{-2}) [(k+p)^2+ \mu^2 ] \, .
\end{align*}
As a consequence for any $0\leq \delta <1$ we have
\begin{align*}
 \Theta(p)^{-1/2} \Theta(p+k)^{-1/2} 
&
\leq  (\mu^{-2}+\mu^{-4})^{1/4} (k^2+1)^{-1/4} 
\\
&
\leq  (\mu^{-2}+\mu^{-4})^{1/4} \abs{k}^{-(1-\delta)/2}  \, .
\end{align*}
and hence the claimed inequality. That means, since $\omega(k)^{-1/2} \leq \abs{k}^{1/2}$, that the upper bound on $v_p$ is valid with $\alpha=1-\delta/2$ and hence $\D=\delta$ in Eckmann's model. In the remainder of this section we will of course choose $\delta=0$.

The following lemma is inspired by a bound given by Wünsch, see \cite[4.2]{DissAndi}. In our case it implies that Condition~\ref{cond1} and hence also Condition~\ref{cond02} is fulfilled in both models.
\begin{lem}
\label{lem:wuensch}
Let $\mu, \Lambda \geq 0$, $\omega(k) = \sqrt{k^2+1} $ and $\Theta(p) = \sqrt{p^2+\mu^2} $. For any $\eps>0$ small enough, there exists a constant $C>0$ such that pointwise on $\R^{d}$ we have
\begin{align}
\label{eq:Tdiaggeneral}
 \int_{\R^d} \frac{(1-\chi_\Lambda(q)\abs{q}^{-2\alpha} \abs{\Theta(q)-\Theta(p-q)}}{(\Theta(p-q)+\omega(q))(\Theta(q)+\omega(q))} \, \ud q \leq C \Lambda^{-\eps_\Lambda/2} \left(\abs{p}^{\D+\eps}+1\right) \, .
\end{align}
Here $\eps_\Lambda:=\eps \cdot (1-\chi_{[0,1]}(\Lambda))$.
\end{lem}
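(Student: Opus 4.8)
The plan is to extract from the numerator the cancellation that makes the integral finite, to bound the denominator below by a product of two linear factors, and then to reduce to one-dimensional radial integrals while carefully balancing the gain $\Lambda^{-\eps_\Lambda/2}$ against the growth in $\abs p$. The first pointwise bound is $\abs{\Theta(q)-\Theta(p-q)}\le\abs p$: since $\Theta(q)^2-\Theta(p-q)^2=\abs q^2-\abs{p-q}^2=(\abs q-\abs{p-q})(\abs q+\abs{p-q})$ and $\Theta(r)\ge\abs r$, the reverse triangle inequality $\bigl|\,\abs q-\abs{p-q}\,\bigr|\le\abs p$ gives
\[
\abs{\Theta(q)-\Theta(p-q)}=\frac{\bigl|\,\abs q-\abs{p-q}\,\bigr|\,(\abs q+\abs{p-q})}{\Theta(q)+\Theta(p-q)}\le\frac{\abs p\,(\abs q+\abs{p-q})}{\abs q+\abs{p-q}}=\abs p\,.
\]
This is the crucial step: without it the integrand would decay only like $\abs q^{-2\alpha-1}$, which is not integrable once $\D\ge0$. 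The second bound, obtained by discarding $\Theta(p-q)\Theta(q)$ from the expanded product and using $\omega(q)\ge\tfrac1{\sqrt2}(1+\abs q)$ and $\Theta(r)\ge\abs r$, is
\[
(\Theta(p-q)+\omega(q))(\Theta(q)+\omega(q))\ge\omega(q)\bigl(\Theta(p-q)+\Theta(q)+\omega(q)\bigr)\ge\tfrac12\,(1+\abs q)\,(1+\abs q+\abs{p-q})\,.
\]

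Combining the two bounds, the left-hand side of \eqref{eq:Tdiaggeneral} is at most $2\abs p\int_{\abs q>\Lambda}\frac{\abs q^{-2\alpha}}{(1+\abs q)(1+\abs q+\abs{p-q})}\,\ud q$, and I would split the domain at $\abs q=\abs p/2$. On $\{\abs q\le\abs p/2\}$ one has $\abs{p-q}\ge\abs p/2$, so the factor $1+\abs q+\abs{p-q}\ge\abs p/2$ cancels the leading $\abs p$ and leaves $2\int_{\Lambda<\abs q\le\abs p/2}\frac{\abs q^{-2\alpha}}{1+\abs q}\,\ud q$; in polar coordinates this is, up to a bounded term, a multiple of $\int_{\max(\Lambda,1)}^{\abs p/2}r^{\D-1}\,\ud r$, and the elementary estimate $r^{\D-1}\le\Lambda^{-\eps/2}r^{\D-1+\eps/2}$ for $r\ge\Lambda\ge1$, together with $\D-1+\eps/2>-1$, yields the required bound $C\Lambda^{-\eps_\Lambda/2}(\abs p^{\D+\eps}+1)$. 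On $\{\abs q>\abs p/2\}$ I would use only $1+\abs q+\abs{p-q}\ge1+\abs q$, so the contribution is at most $\abs p\int_{\abs q>\max(\abs p/2,\Lambda)}\frac{\abs q^{-2\alpha}}{(1+\abs q)^2}\,\ud q$, and in polar coordinates the radial integral is $\omega_{d-1}\int_{\max(\abs p/2,\Lambda)}^\infty\frac{r^{\D}}{(1+r)^2}\,\ud r\le C\bigl(1+\max(\abs p/2,\Lambda)\bigr)^{\D-1}$, the convergence being precisely the condition $\D<1$, i.e.\ the integrability hypothesis $d<\nu+r\gamma$ of Lemma~\ref{lem:scaling} (and $\D<1$ holds by Condition~\ref{cond2}).

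It then remains to do the bookkeeping on $\abs p\bigl(1+\max(\abs p/2,\Lambda)\bigr)^{\D-1}$. If $\Lambda\ge\abs p/2$, so $\abs p\le2\Lambda$, write $\Lambda^{\D-1}=\Lambda^{-\eps/2}\,\Lambda^{\D-1+\eps/2}$ and apply $\Lambda\ge\abs p/2$ to the second factor only — legitimate since $\D-1+\eps/2<0$ for $\eps$ small — so that $\Lambda^{\D-1}\le\Lambda^{-\eps/2}(\abs p/2)^{\D-1+\eps/2}$ and the surviving $\abs p$ recombines into $\abs p^{\D+\eps/2}\le\abs p^{\D+\eps}+1$. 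If $\Lambda<\abs p/2$, then the quantity is $\le C\abs p^{\D}$, and since $\Lambda<\abs p$ one has $\Lambda^{-\eps/2}>\abs p^{-\eps/2}$, hence $\abs p^{\D}\le\abs p^{\D+\eps/2}\le\Lambda^{-\eps/2}\abs p^{\D+\eps}$ once $\abs p\ge1$; the remaining case $\abs p\le1$ forces $\Lambda<1$, where $\eps_\Lambda=0$ and every quantity is bounded by a constant. Assembling the two regions proves the lemma. The one genuinely delicate point is this last balancing: neither the $\abs p$-gain from the restriction $\abs q>\abs p/2$ nor the $\Lambda$-gain from the radial integral may be used in full, and splitting the negative exponent $\D-1$ into $-\eps/2$ plus a still-negative remainder — invoking the relation between $\Lambda$ and $\abs p$ on that remainder alone — is exactly what makes the exponents come out right.
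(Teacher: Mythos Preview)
Your proof is correct. Both you and the paper start from the same key cancellation $\abs{\Theta(q)-\Theta(p-q)}\le\Theta(p)$ (you sharpen it to $\le\abs p$), but from there the routes diverge. The paper bounds $\omega(q)\ge c(\abs q^{1/2}+1)$, extracts a factor $\abs q^{-(1-\eps/2)}$ from the second denominator, and then appeals twice to the rearrangement Lemma~\ref{lem:scaling} (once for $\abs p<R$ and once for $\abs p\ge R$, where in the large-$\abs p$ regime it additionally uses $\Theta(p-q)+\omega(q)\ge\abs p$). You instead bound the full denominator by $\tfrac12(1+\abs q)(1+\abs q+\abs{p-q})$, split the $q$-integral at $\abs q=\abs p/2$, and reduce everything to explicit one-dimensional radial integrals. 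Your approach is more elementary---it avoids the Hardy--Littlewood rearrangement machinery behind Lemma~\ref{lem:scaling} entirely---at the price of the somewhat delicate exponent bookkeeping in the final paragraph. The paper's approach is more modular and makes the $\Lambda$-decay come for free from the general lemma.

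One small case is glossed over: in the second region when $\Lambda\ge\abs p/2$ but $\Lambda\le1$, you pass directly from $(1+\Lambda)^{\D-1}$ to $\Lambda^{\D-1}$ and split the exponent, which is not legitimate for small $\Lambda$. However, in this sub-case $\eps_\Lambda=0$ and $\abs p\le2\Lambda\le2$, so the bound $\abs p(1+\Lambda)^{\D-1}\le2$ is immediate; this is the same kind of triviality you already invoke at the end for $\Lambda<\abs p/2$ with $\abs p\le1$, and it would be cleaner to say so explicitly.
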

\begin{proof}
Choose any $R>0$. We use the reverse triangle inequality to estimate $\abs{\Theta(q)-\Theta(p-q)} \leq \Theta(p)$.  Because $\sqrt{q^2+1} \geq c(\abs{q}^{1/2} +1)$ for some constant $c>0$, we obtain for any $\eps \in (0,1)$ the upper bound 
\begin{align*}
C \Theta(p) \int_{\R^d} \frac{\xi_\Lambda(q) \abs{q}^{-2\alpha-1+\eps/2}}{\abs{p-q}+\abs{q}^{1/2}+1} \, \ud q \, ,
\end{align*}
where we have defined $\xi_\Lambda(q) = 1- \chi_\Lambda(q)$.
By applying Lemma~\ref{lem:scaling} with $\delta = \eps$, this is clearly bounded by a constant times $\Lambda^{-\eps_\Lambda/2}$ as long as $\abs{p}<R$ and $\eps$ is small enough. For larger $\abs{p} \geq R$, we use again the triangle inequality, estimate $\Theta(p-q)+\omega(q) \geq \abs{p-q}+\abs q  \geq \abs p$ and obtain the upper bound
\begin{align*}
C \frac{\Theta(p)}{{\abs p}^{1-\D-\eps}} \int_{\R^d} \frac{\xi_\Lambda(q) \abs{q}^{-2\alpha-1+\eps/2}}{(\abs{p-q}+\abs{q}^{1/2}+1)^{\D+\eps}} \, \ud q \leq C' {\abs p}^{\D+\eps} \Lambda^{-\eps_\Lambda/2} \, .
\end{align*}
Since $R$ was arbitrary, this proves the claim.
\end{proof}
\begin{rem}
Note that Condition~\ref{cond02} can be proved to hold with an improved exponent ${\max(\D,\eps)}$. To do so, decompose the integral for larger $\abs{p}\geq R$ into
\begin{align*}
 \int_{B_{p}}& \frac{\abs{q}^{-2\alpha} \Theta(p)}{(\Theta(p-q)+\omega(q))(\Theta(q)+\omega(q))} \, \ud q 
 \\
 &+   \int_{B_{p}^C} \frac{\abs{q}^{-2\alpha} \Theta(p)}{(\Theta(p-q)+\omega(q))(\Theta(q)+\omega(q))} \, \ud q \, ,
\end{align*}
where $B_p \subset \R^d$ is the ball of of radius $\abs p$ centered at the origin. For the first term, we obtain for any $\eps \geq0$ an upper bound of the form
\begin{align*}
\frac{\Theta(p)}{\abs{p}} \int_{B_{p}} \frac{\abs{q}^{-2\alpha} }{\abs{q}^{1-\eps}} \, \ud q \leq 
C(R) \abs{p}^{\D+\eps} \int_{B_{1}} \frac{1}{\abs{x}^{2\alpha+1-\eps}} \, \ud x \, .
\end{align*}
For $\D>0$ the integral converges even for $\eps=0$, such that the term is bounded by a constant times $\abs{p}^{\max(\D,\eps)}$. The integral over the complement, we simply estimate by $\Theta(p) \int_{B^C_{p}} \abs{q}^{-2-2\alpha} \ud q$. Then by a change of variables $q \rightarrow q/\abs{p}$ this is seen to be bounded by a constant times $\abs{p}^{\D}$.
\end{rem}
Now have a look at Condition~\ref{cond3}. It is clear that, for $\abs{p}<R$, the inequality $(p-q)^2 \leq c ( q^2 + 1)$ holds for some $R$-dependent constant, w.l.o.g. $c>1$. Because the square root is increasing, we have $\Theta(p-q)  \leq \sqrt{c (q^2+ 1) +\mu^2}$. The triangle inequality then yields 
\begin{align*}
\sqrt{c (q^2+ 1) +\mu^2} \leq \sqrt{c} \left( \sqrt{ q^2+\mu^2}+ 1 \right) = C(\Theta(q)+1) \, .
\end{align*}
This already shows Condition~\ref{cond3} for Gross' model. In Eckmann's model, this very bound allows us to estimate $v_p(k)$ for $\abs p < R$ from below by some constant times $\Theta(p)^{-1/2} (\abs{k}^\gamma +1)^{-1}$, which shows that, since $\gamma=\alpha=1$, also in this case the Condition~\ref{cond3} is fulfilled. 

In Gross' model in two dimensions the parameter $\D$ is equal to $d-2\alpha-\gamma = 0$ and thus smaller than $\gamma/3 = 1/3$. For Eckmann's model in $d=3$ the same is true. As a consequence, we have checked Codition~\ref{cond2} and thus proved Corollary~\ref{corol:nelsontypemodels}.

\section{Massless Bosons}
\label{sect:outlook}
We would like to conclude by briefly discussing a variant of Eckmann's model where the nucleons are massive but the bosons are massless. That is, we would still have $\Theta(p) = \sqrt{p^2+\mu^2}$ and $\mu>0$ but $\omega(k) = \abs{k}$, such that $v_p(k) = \abs{k}^{-1/2}  \Theta(p)^{-1/2}   \Theta(p+k)^{-1/2}$. While to our knowledge this very model has not yet been considered, the corresponding nonrelativistic massless Nelson model is well known in the literature, see, e.g. \cite{froehlich1974,BDP12,GrWu17}. In the following we will sketch the construction of a Hamiltonian for the massless variant. Although $L$ is still invertible in this case, in turns out to be convenient to introduce a positive parameter $\lambda$ and to define $G_\lambda := (a(V)(L+\lambda)^{-1})^*$. Making use of the resolvent identity, it is easy to show that the domain can be equivalently expressed by $D(H)=(1-G_\lambda)^{-1} D(L)$ and that also
\begin{align*}
H = (1-G_\lambda)^*(L+\lambda)(1-G_\lambda) + T_\lambda - \lambda  \, .
\end{align*}
Here $T_\lambda$ is the regularised version of $a(V) G_\lambda$. Note that the inequality $N\leq L$, which was used frequently in the massive case, does not hold anymore. That makes it absolutely necessary to obtain $n$-independent bounds on $G_\lambda$ and $T_\lambda$. To achieve this, in Lemma~\ref{prop:generalboundong} we have to choose $s=1$, which is not possible for $\D=0$. In Gross' model, the form factor is just $v(k)=\omega(k)^{-1/2} = \abs{k}^{-1/2}$ and as a consequence it is impossible to choose a different $\alpha$. In Eckmann's model however, if we are ready to pay the price of a faster diverging renormalisation counter term, the bound \eqref{eq:tunedbound} allows us to choose $\D=\delta$ for any $\delta \in [0,1)$. Then Lemma~\ref{prop:generalboundong} in particular yields that $G_\lambda$ maps $D(L^\eta)$ into itself for any $\eta < (1-\delta)/2$. It is easy to see that the norm of $G_\lambda$ as an operator on $D(L^\eta)$ goes to zero for $\lambda \rightarrow \infty$. Therefore $1-G_\lambda$ is invertible on this domain if $\lambda$ is chosen large enough. Because we may again set $s=1$ in Lemma \ref{lem:T2mainlem}, the latter together with Lemmas \ref{lem:Tdiag} and \ref{lem:DposT1} yield that $T$ is bounded and symmetric on $D(L^{\eps+\delta})$ for any $\eps >0$. For $\delta+\eps$ small enough and $\lambda$ large enough we can use the invertebility of $(1-G_\lambda)$ on $D(L^{\eps+\delta})$ to obtain the bound
\begin{align*}
\norm{T_\lambda G_\lambda \psi} \leq C \norm{L^{\eps+\delta} \psi} &= C \norm{L^{\eps+\delta} (1-G_\lambda)^{-1} (1-G_\lambda) \psi} 
\\
&
\leq C' \left(\norm{L^{\eps+\delta}  (1-G_\lambda) \psi} + \norm{ (1-G_\lambda) \psi}\right) \, .
\end{align*}
With Young's inequality we conclude that $T_\lambda G_\lambda$ is infinitesimally bounded with respect to $(1-G_\lambda)^*(L+\lambda)(1-G_\lambda)$. The same is true of $T_\lambda (1-G_\lambda)$. Hence we can, in the very same way as in the massive case, prove the self-adjointness of the operator $H$. In the upcoming work \cite{masslessnelson}, this method will be extended to treat the massless nonrelativistic Nelson model, where the analysis is slightly more involved.

\paragraph{Acknowledgments.} 
I would like to thank Jonas Lampart, Stefan Teufel and Roderich Tumulka for helpful discussions and careful reading of the manuscript. The author was supported by the German Research Foundation (DFG) within the Research Training Group 1838 \textit{Spectral Theory and Dynamics of Quantum Systems}.

\clearpage

\begin{appendix}

\section{Appendix}
We will assume the global Condition~\ref{cond01} also throughout the Appendix.
\label{sect:appendix}

\begin{proof}[\textbf{Proof of Lemma~\ref{lem:scaling}}]{~}

Set $f_\Lambda :=  (1- \chi_\Lambda(k)) \abs{k}^{-\nu-\delta\beta }$ and $g_p:=(\abs{p-k}^\gamma + \Omega)^{-r+\delta} \abs{p-k}^{-\sigma}$. By estimating for $\delta< r$ the denominator
\begin{align*}
(\abs{p-k}^\gamma + \abs{k}^\beta + \Omega )^{-r} & = (\abs{p-k}^\gamma + \abs{k}^\beta + \Omega )^{-r+\delta} (\abs{p-k}^\gamma + \abs{k}^\beta + \Omega )^{-\delta}
\\
&
\leq
(\abs{p-k}^\gamma  + \Omega )^{-r+\delta} \abs{k}^{-\beta \delta} \, ,
\end{align*} 
we observe that the integral under consideration is bounded from above by $\int_{\R^d} f_\Lambda g_p \ud k $. The function $g_p$ is vanishing at infinity and is the translated version of a function which is symmetric and decreasing, so clearly its symmetric decreasing rearrangement is just $g_p^* = g_0$. Let us compute the symmetric decreasing rearrangement of $f_\Lambda$ if $m:=\nu+\beta \delta>0$. It holds that the superlevel sets are
\begin{align*}
\{ f_\Lambda(k) > t \} = \{k \in \R^d \vert \Lambda <  \abs{k}  < t^{-1/m} \} = B_{t^{-1/m}} \setminus B_{\Lambda} \, .
\end{align*}
Their volume is $\mathrm{vol}(\{ f_\Lambda(k) > t \}) = \mathrm{vol}(B_1 \subset \R^d) (t^{-d/m} - \Lambda^d)$ and therefore their symmetric decreasing rearrangement is equal to
$\{ f_\Lambda(k) > t \}^* = B_{(t^{-d/m} - \Lambda^d)^{1/d}}$. Recall that
\begin{align*}
f_\Lambda^*(k) = \int_0^\infty \mathbb{I}_{\{ f_\Lambda(k) > t \}^*} \ud t \, ,
\end{align*}
which means that $f_\Lambda^*(k)$ is the solution of $\abs{k}^d = f_\Lambda^*(k)^{-d/m} - \Lambda^d$ which reads $f_\Lambda^*(k) = (\abs{k}^d+\Lambda^d)^{-m/ d}$.

If $\Lambda \in [0,1]$  we choose $\delta=0$. If in addition $\nu=0$, we can estimate $f_\Lambda \leq f_0 = 1$. Then we apply the Hardy-Littlewood inequality to $\int_{\R^d} \sqrt{g_p} \sqrt{g_p} \ud k$ and obtain the upper bound
\begin{align*}
\int_{\R^d}  \frac{\abs{k}^{-\sigma} }{(\abs{k}^\gamma+\Omega)^r} \, \ud k \, .
\end{align*} 
If $\Lambda \in [0,1]$ and $0<\nu$ then $f_\Lambda$ is vanishing at infinity because $\nu+\beta \delta>0$. We still choose $\delta=0$ and recall that $f_\Lambda \leq f_0=f_0^*$. Then apply the inequality to $\int_{\R^d} f_0 g_p \ud k$, which yields
\begin{align*}
\int_{\R^d} f_0 g_p \ud k \leq \int_{\R^d} f_0 g_0 \ud k  =  \int_{\R^d}  \frac{\abs{k}^{-\sigma-\nu}}{(\abs{k}^\gamma+\Omega)^{r}} \, \ud k  \, .
\end{align*} 

If $\Lambda>1$, let $\delta>0$. As a consequence $\nu+\beta \delta>0$. The Hardy-Littlewood inequality yields
\begin{align*}
\int_{\R^d} f_\Lambda g_p \ud k &\leq \int_{\R^d} f_\Lambda^* g_0 \ud k  =  \int_{\R^d} \abs{k}^{-\sigma} \frac{(\abs{k}^d+\Lambda^d)^{-(\nu+\beta \delta)/d}}{(\abs{k}^\gamma+\Omega)^{r-\delta}} \, \ud k \\
&
\leq  \int_{\R^d} \abs{k}^{-\sigma-\nu} \frac{\Lambda^{-\beta \delta}}{(\abs{k}^\gamma+\Omega)^{r-\delta}} \, \ud k
 \, .
\end{align*} 
Putting these bounds together we obtain
\begin{align*}
\int_{\R^d}& \abs{k}^{-\sigma-\nu} \frac{\Lambda^{-\beta \delta_\Lambda}}{(\abs{k}^\gamma+\Omega)^{r-\delta_\Lambda}} \, \ud k 
\\
&
\leq
\Omega^{-r+\delta+(d-\nu-\sigma)/\gamma} \Lambda^{-\beta \delta_\Lambda}  \int_{\R^d}  \frac{\abs{q}^{-\sigma-\nu}}{(\abs{q}^\gamma+1)^{r-\delta}}  \, \ud q \, .
\end{align*}
Here we have performed a change of variables $k \rightarrow k/\Omega^{1/\gamma}=:q$. The remaining integral is finite, and independent of $\Lambda$ and $\Omega$, as long as $\nu+\sigma < d$ and $\gamma(r-\delta)+\nu+\sigma>d$. Because $\gamma r+\nu+\sigma>d$, there certainly exists a $\delta_0 \in (0, r)$ such that this holds for all $0 \leq \delta < \delta_0$.
\end{proof}

\begin{prop}
\label{prop:Tmainprop2}
Assume Conditions~\ref{Hdefcond}. Set $u(s) = \frac{\beta }{\gamma} s - \frac{\D}{\gamma}$. Then for any $\epsilon>0$ small enough, any $\nu \in \{1,2\}$ and all $s > 0$ such that the following two conditions are satisfied
\begin{align*}
u(s) &< 1
\\
0 &< u(u(s)) \, ,
\end{align*}
the operators $T^\nu$ are symmetric on $D_{s,\epsilon}(T)=D((N+1)^{\max(0,1-s)} L^{s-u(s)+\epsilon/2})$.
\end{prop}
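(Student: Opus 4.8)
The plan is to write $T^\nu=T^\nu_\ud+T_\uod$ with $T_\uod\phi\uppar n=-\sum_{i\neq\ell}\theta_{i\ell}\phi\uppar n-\sum_{i,\ell}\tau_{i\ell}\phi\uppar n$ and to assemble the three Lemmas~\ref{lem:Tdiag}, \ref{lem:DposT1} and~\ref{lem:T2mainlem}, which already contain all of the analysis. Fix $\epsilon>0$ small enough that each of these lemmas applies with its own small parameter taken equal to $\epsilon/2$. Then Lemma~\ref{lem:Tdiag} (applicable since Conditions~\ref{Hdefcond} include Condition~\ref{cond02}, and Condition~\ref{cond2} forces $\D<\gamma$) gives that $T^\nu_\ud$ is symmetric on $D(L^{\D/\gamma+\epsilon/2})$; Lemma~\ref{lem:DposT1} gives that each $\theta_{i\ell}+\theta_{\ell i}$, hence $\sum_{i\neq\ell}\theta_{i\ell}=\tfrac12\sum_{i\neq\ell}(\theta_{i\ell}+\theta_{\ell i})$, is symmetric on the same domain; and Lemma~\ref{lem:T2mainlem}, whose hypotheses are precisely the assumed $u(s)<1$ and $0<u(u(s))$, gives that each $\tau_{i\ell}+\tau_{\ell i}$ (and thus $\sum_{i,\ell}\tau_{i\ell}$) is symmetric on $D(N^{\max(0,1-s)}\Omega^{s-u(s)+\epsilon/2})$.

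It then remains only to check that $D_{s,\epsilon}(T)=D((N+1)^{\max(0,1-s)}L^{s-u(s)+\epsilon/2})$ is contained in all three domains above, after which symmetry of $T^\nu$ on $D_{s,\epsilon}(T)$ follows by restriction. The operators $N,N+1,\Omega$ and $L=\Theta(P)+\Omega$ are mutually commuting nonnegative Fourier multipliers with $N\le N+1$ and $\Omega\le L$, so $N^{\max(0,1-s)}\Omega^{s-u(s)+\epsilon/2}\le(N+1)^{\max(0,1-s)}L^{s-u(s)+\epsilon/2}$, which gives the inclusion into the $\tau$-domain. Since moreover $(N+1)^{\max(0,1-s)}\ge1$ we also have $L^{s-u(s)+\epsilon/2}\le(N+1)^{\max(0,1-s)}L^{s-u(s)+\epsilon/2}$, hence $D_{s,\epsilon}(T)\subset D(L^{s-u(s)+\epsilon/2})$; and because $s-u(s)=s(1-\beta/\gamma)+\D/\gamma\ge\D/\gamma$ (using $s>0$ and $\beta\le\gamma$), the elementary bound $\lambda^{2a'}\le1+\lambda^{2a}$ for $0\le a'\le a$ and $\lambda\ge0$ yields $D(L^{s-u(s)+\epsilon/2})\subset D(L^{\D/\gamma+\epsilon/2})$. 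This covers the diagonal and $\theta$-parts as well.

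I do not expect a serious obstacle here: the content is essentially an assembly of Lemmas~\ref{lem:Tdiag}--\ref{lem:T2mainlem}, and the only care needed is the consistency of the various $\epsilon$-choices and the operator comparisons identifying $D_{s,\epsilon}(T)$ as a common sub-domain. One may note that $\Omega\ge N$ (since $\omega\ge1$) even renders the factor $N+1$ superfluous for the present symmetry statement, but it is kept because the $N$-grading of $D_{s,\epsilon}(T)$ is exactly what is needed when this proposition is combined with the $N$-dependent bounds on $G$ from Proposition~\ref{prop:generalboundong} in the proof of Theorem~\ref{thm:main} in the general case $\beta<\gamma$.
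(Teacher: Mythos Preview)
Your proof is correct and follows essentially the same approach as the paper: decompose $T^\nu$ into its diagonal, $\theta$-, and $\tau$-parts, invoke Lemmas~\ref{lem:Tdiag}, \ref{lem:DposT1}, \ref{lem:T2mainlem}, and verify via the operator inequalities $N\le N+1$, $\Omega\le L$, $(N+1)^{\max(0,1-s)}\ge1$ and $s-u(s)\ge\D/\gamma$ that $D_{s,\epsilon}(T)$ sits inside all three domains. The only cosmetic difference is that the paper splits off the case $\beta=\gamma$ (where $s-u(s)=\D/\gamma$ exactly) and refers back to Proposition~\ref{prop:Tmainprop1}, whereas you handle both cases uniformly by feeding the parameter $\epsilon/2$ rather than $\epsilon$ into Lemma~\ref{lem:Tdiag}; this is a harmless simplification.
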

\begin{proof}
For $\beta=\gamma$ and $\D< \gamma/2$, the proof has already been given in Proposition \ref{prop:Tmainprop1}. So let $\beta < \gamma$. We know that (Lemma~\ref{lem:Tdiag}) $T^\nu_\ud$ defines a symmetric operator on the domain $D(L^{\D/\gamma+\epsilon})$ for any $\epsilon>0$. We also have
\begin{align*}
s-u(s) &= \frac{1}{\gamma}(\gamma-\beta) s + \frac{\D}{\gamma}  > \frac{\D}{\gamma}  \, ,
\end{align*}
which means that $s-u(s) + \epsilon/2 > \D/\gamma +\epsilon$ and thus $  D((N+1)^{\max(0,1-s)} L^{s-u(s)+\epsilon/2}) \subset D(L^{\D/\gamma+\epsilon}) $ if we choose an  $\epsilon > 0$ small enough. Here we have estimated $\Omega \leq L$. Therefore Lemmas~\ref{lem:Tdiag}, \ref{lem:DposT1} and~\ref{lem:T2mainlem} together prove the claim.
\end{proof}

\begin{prop}
Assume the Conditions~\ref{Hdefcond}. Let $H_0 := (1-G)^* L (1-G)$. Then the operators $T^\nu$ are symmetric on $D(H)$ and relatively $H_0$-bounded with relative bound smaller than one.   
\label{prop:proofofmainthm}
\end{prop}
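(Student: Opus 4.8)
The plan is to reduce the general case $0 < \beta \le \gamma$ to the estimates already assembled in Lemmas~\ref{lem:Tdiag}, \ref{lem:DposT1} and~\ref{lem:T2mainlem}, exactly as was done for $\gamma=\beta$ in the proof of Theorem~\ref{thm:main}, but now taking care that the powers of $N$ appearing in the $\tau$-estimates are controlled by $H_0$ rather than by $L$ alone. First I would decompose
\begin{align}
\label{eq:DposrewriteH general}
H^\nu = (1-G)^* L (1-G) + T^\nu = H_0 + T^\nu (1-G) + T^\nu G \, ,
\end{align}
so that it suffices to show $T^\nu G$ is bounded on $\hilb$ and $T^\nu(1-G)$ is infinitesimally $H_0$-bounded. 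For the first term I would pick a parameter $s$ admissible in Proposition~\ref{prop:Tmainprop2} (possible since Condition~\ref{cond2} guarantees $\D < \frac{\gamma\beta^2}{\beta^2+2\gamma^2}$, which is precisely what makes the system $u(s)<1$, $u(u(s))>0$, $s-u(s)+\eps/2 < 1/2 - \D/(2\gamma)$ solvable); then $T^\nu$ is bounded from $D((N+1)^{\max(0,1-s)/2}L^{s-u(s)+\eps/2})$ to $\hilb$, while by Proposition~\ref{prop:generalboundong} $G$ maps $D(N^{\max(0,1-s)/2})$ into $D(L^\eta)$ with $\eta \ge s-u(s)+\eps/2$ for the same choice of parameters, and $G$ raises the boson number by one so it maps $D(N^{a})$ into $D((N+1)^{a})$ continuously. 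Composing, $T^\nu G$ is bounded on $\hilb$.

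For the term $T^\nu(1-G)$ I would use that on $D(H)$ one has $(1-G)\psi \in D(L)$, and by Lemma~\ref{lem:Dposnisleftinv} the estimate $\norm{N\psi} \le C(\norm{N(1-G)\psi} + \norm{\psi})$ holds; more generally $\norm{(N+1)^a(1-G)^{-1}\phi} \le C_a(\norm{(N+1)^a\phi} + \norm{\phi})$ for $a \in [0,1]$ by interpolation. Hence for $\phi = (1-G)\psi \in D(L)$,
\begin{align*}
\norm{T^\nu(1-G)\psi} \le C \norm{(N+1)^{\max(0,1-s)/2} L^{s-u(s)+\eps/2}\phi} \le C' \big(\norm{N^{\max(0,1-s)/2}L^{s-u(s)+\eps/2}\phi} + \norm{L^{s-u(s)+\eps/2}\phi}\big)\,.
\end{align*}
Since $N$ and $L$ commute (both act as Fourier multipliers after the second-quantisation) and $N \le L$ in the massive case, $N^{\max(0,1-s)/2}L^{s-u(s)+\eps/2} \le L^{\max(0,1-s)/2 + s-u(s)+\eps/2}$, and the exponent $\max(0,1-s)/2 + s - u(s) + \eps/2$ is strictly less than one precisely because of the bound in Condition~\ref{cond2} — this is the arithmetic that has to be checked carefully. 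With that exponent, say $\theta < 1$, Young's inequality gives $\norm{L^\theta \phi} \le \delta\norm{L\phi} + C_\delta\norm{\phi}$ for arbitrarily small $\delta$, and since $\norm{L(1-G)\psi} = \norm{L\phi}$ is controlled by $\norm{H_0\psi} + \norm{\psi}$ (because $H_0 = (1-G)^*L(1-G)$ and $1-G$ is bounded with bounded inverse on the appropriate scale, cf.~\cite[Prop.~2.7, Sect.~2.3]{nelsontype}), we conclude $T^\nu(1-G)$ is infinitesimally $H_0$-bounded.

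Combining the two, $T^\nu = T^\nu(1-G) + T^\nu G$ is symmetric on $D(H)$ (symmetry of each piece being recorded in Lemmas~\ref{lem:Tdiag}, \ref{lem:DposT1}, \ref{lem:T2mainlem} together with Proposition~\ref{prop:Tmainprop2}) and relatively $H_0$-bounded with relative bound less than one, since the $T^\nu G$ part contributes zero to the relative bound and the $T^\nu(1-G)$ part can be made to contribute less than any prescribed $\delta$. The main obstacle I anticipate is the bookkeeping in the second paragraph: one must verify that the admissible window for $s$ — forced open by Condition~\ref{cond2} — is wide enough that simultaneously $s-u(s)+\eps/2$ stays below the regularity $\eta$ that $G$ supplies \emph{and} $\max(0,1-s)/2 + s-u(s)+\eps/2 < 1$, and that these constraints are compatible; tracking the dependence on $\eps$ and on the gap $\gamma - \beta$ through the affine map $u$ is where the real work lies, everything else being an assembly of already-proven bounds via Kato--Rellich.
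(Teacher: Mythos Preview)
Your overall architecture --- split $T^\nu = T^\nu(1-G) + T^\nu G$, make the first piece infinitesimally $H_0$-bounded via Young, and control the second piece through the mapping properties of $G$ --- is exactly the paper's. But two of the intermediate claims are wrong, and they are not just bookkeeping slips.

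First, the $N$-exponent in the domain of $T^\nu$ is $\max(0,1-s)$, not $\max(0,1-s)/2$; see Proposition~\ref{prop:Tmainprop2} and the proof of Lemma~\ref{lem:T2mainlem}, where the off-diagonal piece of $\tau$ produces a factor $n^{2\max(0,1-s)}$ on the \emph{squared} norm. The factor $1/2$ you wrote down is the one appearing in Proposition~\ref{prop:generalboundong} for $G$, not for $T$.

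Second, and more importantly, once the correct exponent is restored the argument that ``composing, $T^\nu G$ is bounded on $\hilb$'' breaks down for $\beta<\gamma$. What one actually obtains (this is the content of Lemma~\ref{lem:TnachG}) is that $G$ maps $D(N^{\delta_2})$ into $D_{s,\epsilon}(T)$, where $\delta_2=\max(0,1-s)+\tfrac12\max(0,1-\sigma)$ and in general $\delta_2>0$: one must multiply the sector-wise bound from Proposition~\ref{prop:generalboundong} by the extra $n^{\max(0,1-s)}$ that $T$ requires, and this extra factor cannot be absorbed unless $s\geq 1$, which is not always compatible with $u(s)<1$ and $u(u(s))>0$. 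So $T^\nu G$ is only bounded from $D(N^{\delta_2})$ to $\hilb$ with $\delta_2<1$, and one then needs Lemma~\ref{lem:Dposnisleftinv} to convert this into an $H_0$-bound --- it contributes to the relative bound, not zero as you claim. Relatedly, the paper uses \emph{two} independent parameters $(s,\sigma)$ and a three-case definition of them (the family $(s_\eps,\sigma_\eps)$ in Lemma~\ref{lem:TnachG}) to squeeze both $\delta_1<1$ and $\delta_2<1$ out of Condition~\ref{cond2}; your single-parameter ansatz does not obviously cover the full range $0<\beta<\gamma$, and the remark that this is ``where the real work lies'' is correct but does not substitute for doing it.
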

\begin{proof}
Any $\psi \in D(H)$ can be decomposed into $\psi = \psi(1-G) + G \psi$ where the first term belongs to $D(L)$. In Lemma~\ref{lem:TnachG} we will show that, for $(\beta, \D)$ satisfying Condition~\ref{cond2}, it is possible to choose an $s>0$ and a small $\epsilon >0$ in Proposition~\ref{prop:Tmainprop2} such that $D(L^{\delta_1}) \subset D_{s,\epsilon}(T)=D((N+1)^{\max(0,1-s)} L^{s-u(s)+\epsilon/2})$ for some $\delta_1 <1$ and additionally that $G$ maps $D(N^{\delta_2})$ into $D_{s,\epsilon}(T)$ for some $\delta_2 < 1$. Because $D(H) \subset D(N)$, this clearly implies that $D(H) \subset D_{s,\epsilon}(T)$ and thus both $T^\nu$ are symmetric on $D(H)$.

Because the range of $G$ and the domain of each $T^\nu $ match together we conclude that $T^\nu G$ is an operator from $D(N^{\delta_2 })$ into $\hilb$. Making use of Lemma~\ref{lem:Dposnisleftinv} we can prove that $T^\nu G$ is relatively $H_0$-bounded. To prove that $T^\nu (1-G)$ is relatively bounded by $H_0$ we simply use Young's inequality (see \cite{nelsontype}). 
\end{proof}

\begin{prop}
\label{prop:generalconv}
Assume Conditions~\ref{Hdefcond} and~\ref{cond1}. Then there exists $s>0$ and $\epsilon>0$ admissible in Lemma~\ref{lem:T2mainlem} such that $T_\Lambda +E_\Lambda^\nu \rightarrow T^\nu$ in norm as operators in $\mathcal{L}(D_{s,\epsilon}(T),\hilb)$.
\end{prop}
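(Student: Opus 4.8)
The plan is to reduce the statement to the three norm-convergence results already established --- Lemmas~\ref{lem:Tdiag}, \ref{lem:DposT1} and~\ref{lem:T2mainlem} --- just as in the proof of Proposition~\ref{prop:normconv1}, the only new point being that for $\beta<\gamma$ the test space $D_{s,\epsilon}(T)=D((N+1)^{\max(0,1-s)}L^{s-u(s)+\epsilon/2})$ is larger than the domain $D(L^{\D/\gamma+\epsilon})$ occurring in the first two of those lemmas, so the relevant inclusions have to be checked before the terms are summed. (For $\beta=\gamma$ the assertion is exactly Proposition~\ref{prop:normconv1}, so I would assume $\beta<\gamma$.)

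First I would fix the parameters. By Condition~\ref{cond2} --- as used in the proof of Proposition~\ref{prop:proofofmainthm} via Lemma~\ref{lem:TnachG} --- there is an $s>0$ with $u(s)<1$ and $u(u(s))>0$; for any such $s$ one has $s-u(s)=\tfrac{\gamma-\beta}{\gamma}s+\tfrac{\D}{\gamma}>\tfrac{\D}{\gamma}$, so shrinking $\epsilon>0$ forces $s-u(s)+\epsilon/2>\D/\gamma+\epsilon$. Together with $\Omega\leq L$ (valid because $\omega\geq1$) this yields the two inclusions $D_{s,\epsilon}(T)\subset D(L^{\D/\gamma+\epsilon})$ and $D_{s,\epsilon}(T)\subset D(N^{\max(0,1-s)}\Omega^{s-u(s)+\epsilon/2})$, which are exactly the domains on which the three lemmas provide convergence.

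Next I would use the decomposition of the cutoff operator recorded in Section~\ref{sect:construct}, namely $T_\Lambda+E_\Lambda^\nu=T^\nu_{\ud,\Lambda}+T_{\uod,\Lambda}$ with $T_{\uod,\Lambda}=-\sum_{i\neq\ell}\theta_{i\ell,\Lambda}-\sum_{i,\ell}\tau_{i\ell,\Lambda}$, each summand being the operator obtained from the corresponding piece of $T^\nu$ by replacing $v^i_p$ with $\chi_\Lambda v^i_p$. By the first inclusion, Lemma~\ref{lem:Tdiag} (which uses Condition~\ref{cond1}) gives $T^\nu_{\ud,\Lambda}\to T^\nu_\ud$ and Lemma~\ref{lem:DposT1} gives $\theta_{i\ell,\Lambda}\to\theta_{i\ell}$, both in $\mathcal{L}(D_{s,\epsilon}(T),\hilb)$; by the second inclusion, Lemma~\ref{lem:T2mainlem}, whose hypotheses on $s$ are precisely the ones arranged above, gives $\tau_{i\ell,\Lambda}\to\tau_{i\ell}$ in the same norm. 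Summing the finitely many terms then gives $T_\Lambda+E_\Lambda^\nu\to T^\nu$ in $\mathcal{L}(D_{s,\epsilon}(T),\hilb)$.

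The one step needing real care is the parameter bookkeeping of the second paragraph: one must exhibit a single pair $(s,\epsilon)$ that is admissible in Lemma~\ref{lem:T2mainlem} \emph{and} small enough to force $D_{s,\epsilon}(T)\subset D(L^{\D/\gamma+\epsilon})$. That the admissible window for $s$ is nonempty (in particular that $\D<\beta$) is precisely what Condition~\ref{cond2} provides; once $s$ lies in that window the margin $\tfrac{\gamma-\beta}{\gamma}s>0$ makes the choice of a small $\epsilon$ free. Everything else is a verbatim appeal to the convergence clauses of Lemmas~\ref{lem:Tdiag}, \ref{lem:DposT1} and~\ref{lem:T2mainlem}, together with the elementary inequality $\Omega\leq L$.
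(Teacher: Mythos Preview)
Your proposal is correct and follows essentially the same route as the paper's own proof, which simply says to decompose $T_{\uod,\Lambda}$ into its $\tau$- and $\theta$-parts and to collect the convergence statements of Lemmas~\ref{lem:Tdiag}, \ref{lem:DposT1} and~\ref{lem:T2mainlem} ``in the same way as in Proposition~\ref{prop:Tmainprop2}''. Your second paragraph is precisely that argument from Proposition~\ref{prop:Tmainprop2} made explicit: the inequality $s-u(s)>\D/\gamma$ for $\beta<\gamma$ together with $\Omega\leq L$ yields the inclusion $D_{s,\epsilon}(T)\subset D(L^{\D/\gamma+\epsilon})$ needed for the first two lemmas, while the existence of an admissible $s$ is guaranteed by Condition~\ref{cond2} via Lemma~\ref{lem:TnachG}.
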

\begin{proof}
This follows by decomposing $T_{\uod, \Lambda}$ into $\tau$- and $\theta$-terms and collecting the results of Lemmas~\ref{lem:Tdiag}, \ref{lem:DposT1} and~\ref{lem:T2mainlem} in the same way as in Proposition~\ref{prop:Tmainprop2}.
\end{proof}

\begin{lem}
\label{lem:TnachG}
Assume Conditions~\ref{Hdefcond}. Let $u(s) = \frac{\beta }{\gamma} s - \frac{\D}{\gamma}$ and $D_{s,\epsilon}(T)=D((N+1)^{\max(0,1-s)} L^{s-u(s)+\epsilon/2})$. Then for any $(\beta,\D)$ with $0 \leq \D < \frac{\gamma \beta^2}{\beta^2+ 2 \gamma^2}$, there exists an $s>0$ with $u(s)<1$ and $u(u(s))>0$ and $\delta_1 , \delta_2 \in [0,1)$ such that for any $\epsilon>0$ small enough
\begin{itemize}
\item $D(L^{\delta_1}) \subset D_{s,\epsilon}(T)$.
\item $G$ is continuous from $D(N^{\delta_2})$ to $D_{s,\epsilon}(T)$.
\end{itemize} 
\end{lem}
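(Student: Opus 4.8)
The plan is to produce explicit windows for the two parameters $s$ and $t$. Since $\D<\frac{\gamma\beta^2}{\beta^2+2\gamma^2}$, one first checks that there is an $s$ with
\[
\max\big(\tfrac12,\ \tfrac{\D(\beta+\gamma)}{\beta^2}\big)\ <\ s\ <\ \min\big(1,\ \tfrac{\gamma-3\D}{2(\gamma-\beta)}\big)
\]
(reading $\tfrac{\gamma-3\D}{2(\gamma-\beta)}$ as $+\infty$ if $\beta=\gamma$). Indeed the relevant strict inequalities are $\tfrac{\D(\beta+\gamma)}{\beta^2}<1\Leftrightarrow\D<\tfrac{\beta^2}{\beta+\gamma}$, $\tfrac12<\tfrac{\gamma-3\D}{2(\gamma-\beta)}\Leftrightarrow 3\D<\beta$, and $\tfrac{\D(\beta+\gamma)}{\beta^2}<\tfrac{\gamma-3\D}{2(\gamma-\beta)}\Leftrightarrow\D<\tfrac{\gamma\beta^2}{\beta^2+2\gamma^2}$, and a short computation shows the bound in the last one is the smallest of the three, so all of them follow from Condition~\ref{cond2}. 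Fixing such an $s$, note $s<1$ (so $\max(0,1-s)=1-s$) and $s>\tfrac{\D(\beta+\gamma)}{\beta^2}$, which gives $0<u(s)<1$ and $u(u(s))>0$; hence $D_{s,\epsilon}(T)=D((N+1)^{1-s}L^{s-u(s)+\epsilon/2})$ is an admissible $T$-domain by Proposition~\ref{prop:Tmainprop2}. Then I would choose $t\in\big(0,\ \min(1,\tfrac{\gamma-3\D}{\gamma-\beta}-2s)\big)$, a non-empty interval because $s<\tfrac{\gamma-3\D}{2(\gamma-\beta)}$; this forces $u(t)<1$, and, unwinding the definition of $u$, the inequality $2s+t<\tfrac{\gamma-3\D}{\gamma-\beta}$ is exactly $1+u(t)-t>2\big(s-u(s)\big)$ with a strict gap, whence $2\big(s-u(s)\big)+\epsilon<1+u(t)-t$ for all small $\epsilon$.

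For the first bullet I would take $\delta_1:=1-u(s)+\epsilon/2$, which lies in $[0,1)$ for $\epsilon$ small because $0<u(s)<1$. Since $\omega\ge 1$, one has $N=\ud\Gamma(1)\le\ud\Gamma(\omega)\le L$, so $(N+1)^{1-s}\le(L+1)^{1-s}\le C(L^{1-s}+1)$ and therefore $(N+1)^{1-s}L^{s-u(s)+\epsilon/2}\le C\big(L^{1-u(s)+\epsilon/2}+L^{s-u(s)+\epsilon/2}\big)$, both exponents being at most $\delta_1$. This gives $D(L^{\delta_1})\subset D_{s,\epsilon}(T)$.

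For the second bullet I would apply Proposition~\ref{prop:generalboundong} with its parameter equal to the $t$ fixed above and with exponent $\eta:=s-u(s)+\epsilon/2$, which is admissible because $2\eta<1+u(t)-t$; its proof moreover yields the sector-wise estimate $\norm{L^{\eta}(G\psi)\uppar{n+1}}\le C(n+1)^{(1-t)/2}\norm{\psi\uppar n}$. Using that $N+1$ acts as $n+2$ on $\hilb\uppar{n+1}$, we get
\begin{align*}
\norm{(N+1)^{1-s}L^{\eta}G\psi}^2
&\le C^2\sum_{n\ge 0}(n+2)^{2(1-s)}(n+1)^{1-t}\norm{\psi\uppar n}^2\\
&\le C'\big(\norm{N^{\delta_2}\psi}^2+\norm{\psi}^2\big)
\end{align*}
with $\delta_2:=(1-s)+\tfrac{1-t}{2}$; since $s>\tfrac12$ and $t>0$ we have $0<\delta_2<1$. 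Together with $\norm{G\psi}\le C\norm{\psi}$ (Corollary~\ref{lem:gonfockspace}) this shows $G$ is continuous from $D(N^{\delta_2})$ into $D_{s,\epsilon}(T)$, which finishes the proof.

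The only genuine difficulty is the parameter bookkeeping in the first paragraph, where $s$ must simultaneously turn $D_{s,\epsilon}(T)$ into a legitimate $T$-domain, keep the number-operator exponent $\delta_2$ below one, and leave (through the choice of $t$) enough room for $G$ to land in the required $L$-regularity class. The trick that makes everything fit is to force $s>\tfrac12$, which makes $\delta_2<1$ automatic; it then remains to verify --- by the short computation above --- that the residual constraint collapses exactly to $\D<\tfrac{\gamma\beta^2}{\beta^2+2\gamma^2}$ rather than to a strictly stronger bound.
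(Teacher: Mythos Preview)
Your argument is correct and in fact cleaner than the paper's. Both proofs proceed by feeding a second parameter (your $t$, the paper's $\sigma$) into Proposition~\ref{prop:generalboundong} with $\eta=s-u(s)+\epsilon/2$, and then set $\delta_2=\max(0,1-s)+\tfrac{\max(0,1-t)}{2}$. The difference lies in the choice of $(s,t)$: the paper allows $s$ and $\sigma$ to exceed $1$ and therefore needs a three-case piecewise definition $(s_\eps,\sigma_\eps)$ depending on where $\gamma S_2$ falls relative to $S_1$, with a separate verification of $\delta_1,\delta_2<1$ in each regime. You instead force $s\in(\tfrac12,1)$ and $t\in(0,1)$, which makes $\delta_2<1$ automatic and collapses the parameter bookkeeping to the single inequality $\D(\beta+\gamma)/\beta^2<(\gamma-3\D)/(2(\gamma-\beta))$, which you correctly identify as equivalent to Condition~\ref{cond2}. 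Your route is more elementary and handles $\beta=\gamma$ and $\beta<\gamma$ uniformly; the paper's extra generality (allowing $s\ge1$) buys nothing for the present lemma.

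One cosmetic point: as stated, the lemma quantifies $\delta_1$ before $\epsilon$, while your $\delta_1=1-u(s)+\epsilon/2$ depends on $\epsilon$. This is harmless---just fix any $\delta_1\in(1-u(s),1)$ independently of $\epsilon$ and note that for $\epsilon$ small enough the exponent $1-u(s)+\epsilon/2$ stays below $\delta_1$, so $D(L^{\delta_1})\subset D(L^{1-u(s)+\epsilon/2})\subset D_{s,\epsilon}(T)$.
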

\begin{proof}
We can again assume $ \beta<\gamma$, since the statement for $\beta = \gamma$ was already proved above. Start by looking at the second part of the statement. Proposition~\ref{prop:generalboundong} states that $G$ maps $D(N^{\max(0,1-\sigma)/2})$ into $D(L^\eta)$
for $\sigma$ and $\eta$ that satisfy some conditions. Of course we would like to choose $\eta:=s-u(s)+\epsilon/2$ for an $s>0$ admissible in Proposition~\ref{prop:Tmainprop2} and multiply by $n^{\max(0,1-s)}$ such that
\begin{align*}
D(N^{\max(0,1-\sigma)/{2}+\max(0,1-s)}) \ \xrightarrow{\ G \ } \ D(N^{\max(0,1-s)}L^{s-u(s)+\epsilon/2}) = D_{s,\epsilon}(T) \, .
\end{align*}
%
First we have to show that the choice of $s$ and $\sigma$ we want to make is indeed possible. Afterwards the second part of the statement can be proved by showing that
 \begin{align}
 \label{eq:deltadef1}
 \delta_2 :=  \max(0,1-s)+\frac{\max(0,1-\sigma)}{2} < 1 \, .
 \end{align}
The first part of the statement will follow by estimating $N \leq L$ and the fact that
  \begin{align}
 \label{eq:deltadef2}
 \delta_1 :=  \max(0,1-s)+s-u(s)< 1 \, ,
 \end{align}
 because we may choose $\epsilon $ small enough. We will define a family of pairs of parameters $(s,\sigma) \in (0, \infty) \times [0, \infty)$  that is such that all the following conditions are in fact satisfied:
 \begin{align}
 \label{eq:condlistnoreg}
&s-u(s) + \frac{\sigma-u(\sigma)-1}{2}  < 0 \\
\label{eq:condlistsigma}
&u(\sigma)<1 \\
\label{eq:condlistus1}
&u(s)<1 \\
\label{eq:condlistus2}
&u(u(s))>0
 \end{align}
We set $\eta=s-u(s)+\epsilon/2$ in Proposition~\ref{prop:generalboundong}. This leads to the Condition~\eqref{eq:condlistnoreg} because we may always choose $\epsilon$ as small as necessary. As $\sigma-u(\sigma)$ is increasing for $\beta < \gamma$ and $\D \geq0$, \eqref{eq:condlistnoreg} also implies that 
 \begin{align}
 \label{eq:etasmaller12}
 s-u(s)=\eta < \frac{1+u(0)-0}2 =  \frac12 - \frac{\D}{2 \gamma} \leq \frac{1}{2} \, .
 \end{align} 
 In Proposition~\ref{prop:generalboundong} we had to choose a parameter $\sigma\geq0$ which lead to \eqref{eq:condlistsigma}. The Conditions~\eqref{eq:condlistus1} and~\eqref{eq:condlistus2} are due to Proposition~\ref{prop:Tmainprop2}.

Now we prepare for the definition of our pair $(s,\sigma)$. To do so we set
\begin{align*}
S_1:= \frac{\gamma  + \D}{\beta}   \qquad S_2 := \frac{1-\frac{3}{\gamma} \D}{\gamma-\beta} 
\end{align*}
and note that $u(S_1)=1$ and $S_1 > 1$ because $\beta < \D+\gamma$. Furthermore, using the upper bound on $\beta$ and $\D$, we also have that
\begin{align}
\label{eq:boundonS2step1}
S_2&
=\frac{1-\frac{3}{\gamma} \D}{\gamma-\beta} > \frac{1-\frac{3 \beta^2}{\beta^2+ 2 \gamma^2}}{\gamma-\beta} = \frac{\frac{\beta^2+ 2 \gamma^2-3 \beta^2}{\beta^2+ 2 \gamma^2}}{\gamma-\beta} = 2 \frac{\frac{ \gamma^2- \beta^2}{\beta^2+ 2 \gamma^2}}{\gamma-\beta} = 2 \frac{ \gamma+ \beta}{\beta^2+ 2 \gamma^2} 
\\
\label{eq:boundonS2}
&
\geq 2 \frac{ \gamma+ \beta}{2 \gamma \beta+ 2 \gamma^2} > \frac{1}{\gamma} \, .
\end{align}
We are ready to define a family of pairs $( s_\eps ,  \sigma_\eps  )$ such that they fulfill the conditions \eqref{eq:condlistnoreg} - \eqref{eq:condlistus2} as long as $\eps$ is small enough. So for any $\eps>0$ let
\begin{align*}
( s_\eps ,  \sigma_\eps  )
&:=
  \begin{cases} \left(S_1-\eps,S_1- \eps\right) & \gamma S_2 \in [3 S_1-\eps,\infty]  \\ \left(S_1-\eps,\gamma S_2- 2 S_1\right) & \gamma S_2 \in (2 S_1,3 S_1-\eps) \\
 \left(\frac\gamma2 S_2-\eps, 0\right) &  \gamma S_2 \in (1,2 S_1]\end{cases} \, .
\end{align*}
We have used the Inequality~\eqref{eq:boundonS2}. We can see that in fact ${s}_\eps > 0$ and ${\sigma}_\eps \geq 0$ if $\eps$ is small enough. To prove that \eqref{eq:condlistnoreg} is fulfilled, we start by noting that
\begin{align*}
2  s_\eps +  \sigma_\eps =   \begin{cases}  3 S_1  - 3 \eps < \gamma S_2 - 2 \eps  & \gamma S_2 \in [3 S_1-\eps,\infty]  \\  \gamma S_2 - 2 \eps  & \gamma S_2 \in (2 S_1,3 S_1-\eps) \\
 \gamma S_2-2 \eps  & \gamma S_2 \in (1,2 S_1] \end{cases}
 \,,
\end{align*}
which clearly implies that for $\eps$ small enough $2  s_\eps +  \sigma_\eps <  \gamma S_2$. Using this we can prove that \eqref{eq:condlistnoreg} is satisfied:
\begin{align*}
s_\eps-u(s_\eps)-\frac{1}{2}\left(1-\sigma_\eps+u(\sigma_\eps) \right)
&
= 
\frac{1}{2 \gamma} \left(\gamma-{\beta}\right) (2 s_\eps +\sigma_\eps) +\frac{3 \D}{2 \gamma} - \frac{1}{2} 
\nonumber \\
&
<
\frac{1}{2\gamma } \left(\gamma -3 \D\right) + \frac{3}{2 \gamma} \D - \frac{1}{2} = 0 \, .
\end{align*} 
It is clear that we have ${\sigma}_\eps < S_1$. Since $u$ is increasing if $\beta >0$, we conclude that $u({\sigma}_\eps) < u(S_1) = 1$. That means that \eqref{eq:condlistsigma} holds. In exactly the same way we can prove that $u(s_\eps) < 1$, so \eqref{eq:condlistus1} is fulfilled. Now we check that because $\beta > \D$
\begin{align*}
u(u(S_1)) = u(1) = \frac{\beta}{\gamma}- \frac{\D}{\gamma} > 0 \, .
\end{align*}
By using the hypothesis and \eqref{eq:boundonS2step1} we also see that
\begin{align*}
u\left(u\left(\frac\gamma2 S_2\right)\right) 
&
= \frac{\beta^2}{2 \gamma^2} \gamma S_2 -  \left(\gamma+{\beta}\right) \frac{\D}{ \gamma^2} 
\\
&> \frac{\beta^2}{\gamma^2}  \gamma \frac{ \gamma+ \beta}{\beta^2+ 2 \gamma^2} - (\gamma+\beta) \frac{\gamma \beta^2}{\gamma^2(\beta^2+ 2 \gamma^2)} >  0 \, .
\end{align*}
Both estimates together prove that \eqref{eq:condlistus2} holds for any $\eps$ small enough. In order to finally compute  $\delta_2=\max(0,1-{s}_\eps) + \frac{ \max(0,1- \sigma_\eps) }{2}$, note that for $\eps$ small enough we have that still $S_1-\eps > 1$ and therefore
\begin{align*}
\delta_2
&
= \begin{cases} 0 & \gamma S_2 \in [3 S_1-\eps,\infty]  \\
\frac{ \max(0,1- \sigma_\eps) }{2} & \gamma S_2 \in ( 2 S_1,3 S_1- \eps) \\
\max(0,1-{s}_\eps) + \frac{ \max(0,1- \sigma_\eps) }{2} & \gamma S_2 \in (1,2 S_1]\end{cases}
\\
&
= \begin{cases} 0 & \gamma S_2 \in [3 S_1-\eps,\infty]  \\
\frac{ \max(0,1-\gamma S_2+2 S_1) }{2} & \gamma S_2 \in ( 2 S_1,3 S_1- \eps) \\
\max(0,1-\frac{\gamma}{2} S_2 + \eps) + \frac{1}{2} & \gamma S_2 \in (1,2 S_1] \, .\end{cases}
\end{align*}
In the second case it holds that $1-\gamma S_2 \in (1 - 3 S_1 + \eps , 1- 2 S_1)$ which implies $\max(0,1-\gamma S_2+2 S_1)/2 <  1/2$. In the third case we can choose $\eps$ so small that $(-\gamma S_2 +2  \eps)/2 < - 1/2$ and as a consequence $\max(0,1-\frac{\gamma}{2} S_2 + \eps) < 1/2$. These estimates imply that 
\begin{align*}
\delta_2 = \max(0,1-{s}_\eps) + \frac{ \max(0,1- \sigma_\eps) }{2}
< \begin{cases} 0 & \gamma S_2 \in [3 S_1-\eps,\infty]  \\
\frac{ 1}{2} & \gamma S_2 \in ( 2 S_1,3 S_1- \eps) \\
1 & \gamma S_2 \in (1,2 S_1] \, . \end{cases}
\end{align*} 
In order to prove that $\delta_1 < 1$, we have to distinguish only two cases:
\begin{align*}
\delta_1 =s_\eps-u(s_\eps) + \max(0,1-s_\eps)  = \begin{cases} s_\eps-u(s_\eps)  & s_\eps > 1 \\
1-u(s_\eps)  & 0<s_\eps \leq 1 \, . \end{cases}
\end{align*}
If $s_\eps >1$, using Estimate~\eqref{eq:etasmaller12}, we conclude that $\delta_1 < 1/2$. If $0<s_\eps \leq 1 $, note that $u(u(s_\eps))>0$ implies $u(s_\eps) >0$ (see also \eqref{eq:cond3s}) and therefore we have by \eqref{eq:condlistus2} that $\delta_1<1$ in this case.
\end{proof}
\end{appendix}


\newcommand{\etalchar}[1]{$^{#1}$}
\providecommand{\bysame}{\leavevmode\hbox to3em{\hrulefill}\thinspace}
\providecommand{\MR}{\relax\ifhmode\unskip\space\fi MR }
\providecommand{\MRhref}[2]{%
  \href{http://www.ams.org/mathscinet-getitem?mr=#1}{#2}
}
\providecommand{\href}[2]{#2}

\end{document}